\documentclass[11pt]{article}

\usepackage{setspace}
\usepackage[english]{babel} 
\usepackage[utf8]{inputenc} 
\usepackage[a4paper,margin=1in]{geometry} 
\usepackage{amsmath} 
\usepackage{bm} 
\usepackage{graphicx} 
\usepackage[export]{adjustbox} 
\usepackage{pdflscape} 
\usepackage{fancyhdr} 
\usepackage[colorlinks=true,citecolor=blue,urlcolor=blue,linkcolor=black]{hyperref} 
\usepackage[numbers,sort&compress]{natbib}
\usepackage{flafter} 
\usepackage[framemethod=tikz]{mdframed} 
\usepackage{color} 
\usepackage{wrapfig} 
\usepackage{lipsum} 
\usepackage[capitalize]{cleveref}
\crefname{assumption}{Assumption}{Assumptions}
\usepackage{amsfonts}
\usepackage{braket}
\usepackage{quantikz}
\usepackage{booktabs}
\usepackage{multirow}
\usepackage{graphicx}
\usepackage{amssymb}
\usepackage{caption}
\usepackage{amsthm}
\usepackage{xcolor}
\usepackage{enumitem}

\newcommand{\be}{\begin{equation}}
\newcommand{\ee}{\end{equation}}
\newcommand{\bes}{\begin{equation*}}
\newcommand{\ees}{\end{equation*}}
\newcommand{\bea}{\begin{eqnarray}}
\newcommand{\eea}{\end{eqnarray}}
\newcommand{\beas}{\begin{eqnarray*}}
\newcommand{\eeas}{\end{eqnarray*}}
\newcommand{\bal}{\begin{aligned}}
\newcommand{\eal}{\end{aligned}}

\newcommand{\mb}{\mathbb}

\newcommand{\mc}{\mathcal}

\graphicspath{ {Images/} } 

\usepackage{multicol} 
\usepackage{float} 

\usepackage{aliascnt}

\newtheorem{theorem}{Theorem}[section]

\newaliascnt{lemma}{theorem}
\newtheorem{lemma}[lemma]{Lemma}
\aliascntresetthe{lemma}

\newaliascnt{corollary}{theorem}
\newtheorem{corollary}[corollary]{Corollary}
\aliascntresetthe{corollary}

\newaliascnt{proposition}{theorem}

\aliascntresetthe{proposition}

\newaliascnt{remark}{theorem}
\newtheorem{remark}[remark]{Remark}
\aliascntresetthe{remark}

\newaliascnt{definition}{theorem}
\newtheorem{definition}[definition]{Definition}
\aliascntresetthe{definition}

\newaliascnt{problem}{theorem}

\aliascntresetthe{problem}

\newaliascnt{assumption}{theorem}
\newtheorem{assumption}[assumption]{Assumption}
\aliascntresetthe{assumption}

\newaliascnt{result}{theorem}
\newtheorem{result}[result]{Result}
\aliascntresetthe{result}

\newcommand{\norm}[1]{\left\lVert #1\right\rVert}

\newcommand{\diag}{\operatorname{diag}}

\usepackage{xcolor}
\definecolor{ddkcolor}{HTML}{D926AC}

\title{Transducer-based linear combination of unitaries: \\theory and applications}
\author{%
    Dong An\thanks{Beijing International Center for Mathematical Research, Peking University. 
    \href{mailto:dongan@pku.edu.cn}{\texttt{dongan@pku.edu.cn}}.}
    \quad
    Dekuan Dong\thanks{Beijing International Center for Mathematical Research, Peking University. 
    \href{mailto:ddk@pku.edu.cn}{\texttt{ddk@pku.edu.cn}}.}
    \quad
    Changpeng Shao\thanks{Academy of Mathematics and Systems Science, Chinese Academy of Sciences. 
    \href{mailto:changpeng.shao@amss.ac.cn}{\texttt{changpeng.shao@amss.ac.cn}}.}
    \quad 
    Yuxin Zhang\thanks{Alfr\'ed R\'enyi Institute of Mathematics. 
    \href{mailto:zhang.yuxin@renyi.hu}{\texttt{zhang.yuxin@renyi.hu}}.}
    \quad
    Chenhao Zhao\thanks{School of Mathematical Sciences, Peking University. 
    \href{mailto:2601110059@stu.pku.edu.cn}{\texttt{2601110059@stu.pku.edu.cn}}.}%
}
\date{\today}

\begin{document}

\pagenumbering{arabic}
\maketitle

\begin{abstract}
Linear combination of unitaries (LCU) is a fundamental primitive in quantum algorithms, whose cost is typically governed by the most expensive unitary appearing in the combination. We develop a transducer-based LCU framework that reduces this worst-case dependence to a weighted average query complexity, when the constituent unitaries share access to a common set of primitive oracles.

Consider $A=\sum_j c_j U_j$, where $c_j>0$ and each unitary $U_j$ can be implemented using $C_j$ primitive queries. Given an upper bound $a\geq \|A\|$, our algorithm implements a block-encoding of $A/\alpha$ with rescaling factor $\alpha=\mathcal{O}(a)$, using $\widetilde{\mathcal{O}} (C_{\max}+\overline{C} {\lambda}/{a} )$ primitive queries, where $\lambda=\sum_j c_j$, $C_{\max}=\max_j C_j$, and $\overline{C}= {\sum_j c_j C_j}/{\lambda}$. By comparison, the standard LCU construction requires $\widetilde{\mathcal{O}} (C_{\max} {\lambda}/{a} )$ primitive queries. The improvement can therefore be substantial when costly unitaries have small weights and $\lambda/a$ is large. As applications, we obtain improved block-encodings of sparse matrices, leading to quantum algorithms for sparse Hamiltonian simulation and quantum linear systems with near-optimal query complexity up to polylogarithmic factors in all relevant parameters.

Our main technique is the transducer framework developed by Belovs, Jeffery, and Yolcu [Quantum, 8:1444 (2024)]. Here we develop a complementary operator-level theory tailored to block-encodings. We identify the resolvent norm $K$ as a key complexity measure for transducer implementation besides the existing catalyst complexity. We show that the transducer action can be converted into an $\epsilon$-approximate block-encoding using only $\mathcal{O}\left(K \log(1/\epsilon)\right)$ queries to the transducer, improving the precision dependence from polynomial to logarithmic.

\end{abstract}

\tableofcontents

\section{Introduction}

A fundamental challenge in quantum algorithm design is to harness unitary quantum dynamics to manipulate the broad class of non-unitary operators that arise across a wide range of scientific applications. Linear combination of unitaries (LCU) provides a simple and remarkably versatile solution to this problem by embedding such operators into larger unitary transformations, and has consequently become a basic primitive underlying a wide range of quantum algorithms, including quantum linear system solvers \cite{childs2017quantum}, Hamiltonian simulation \cite{berry2015simulating, berry2014exponential}, differential equations \cite{an2023linear,AnChildsLin2023,LowSomma2025}, quantum walks \cite{ApersChakrabortyNovoRoland2022, AmbainisGilyenJefferyKokainis2020}, 
ground state preparation \cite{KeenDumitrescuWang2021, GeTuraCirac2019, ChakrabortyHazraLiShaoWangZhang2025}, quantum eigenvalue transformations \cite{LowSu2024,AnChildsLinYing2024}, etc. 
It also serves as one of the basic building tools for constructing block-encodings used in quantum singular value transformation (QSVT) \cite{GilyenSuLowEtAl2019}, a general framework for polynomial transformations of matrices.

The basic LCU construction~\cite{ChildsWiebe2012} is conceptually simple. 
Suppose that $A=\sum_{j} c_j U_j$, where $c_j>0$ and each $U_j$ is an efficiently implementable unitary, and let
$\lambda=\sum_j c_j$. 
One first prepares an ancilla state whose amplitudes encode the square root of the coefficients, $\sqrt{c_j}$, and then applies a controlled selection operation (commonly called the select oracle) $\sum_j \ket{j} \bra{j} \otimes U_j$. 
Uncomputing the ancilla yields a larger unitary whose appropriate block is exactly $A/\lambda$. 
In this way, a generally non-unitary operator $A$ is embedded into a unitary quantum circuit and can subsequently be manipulated using standard quantum algorithmic primitives.

The efficiency of this construction, however, is determined not only by the normalization factor $\lambda$, but also by the cost of implementing the controlled selection operation. 
To be more exact, suppose each $U_j$ costs $C_j$ and under reasonable model assumptions, $\sum_j \ket{j} \bra{j} \otimes U_j$ can be implemented in cost $\mathcal{O}(C_{\max})$ with $C_{\max}=\max_j C_j$.\footnote{But generally, the cost is $\sum_j C_j$, here we assume it is dominated by the worst unitary for simplicity. This is also the common scenario in practice.}
Given a known upper bound $a\geq \|A\|$ with $0<a\leq\lambda$,
standard LCU followed by singular value amplification~\cite{GilyenSuLowEtAl2019} constructs an $\mathcal{O}(a)$-normalized block-encoding at a query cost of $\widetilde{\mathcal{O}}(C_{\max}\lambda/a)$.

One may realize that the coefficients of an LCU naturally define a probability distribution over its constituent operations. 
Indeed, writing $p_j=c_j/\lambda$, a classical randomized procedure that selects the $j$-th operator with probability $p_j$ would lead to the expectation $A/\lambda$, incurring the expected query cost $\overline{C}=\sum_j p_jC_j$. 
Thus LCU carries an intrinsic weighted-average notion of complexity. 
This average-cost structure is, however, largely lost in the standard coherent implementation of LCU. 
The select operation must implement all constituent operations coherently and is therefore typically governed by the worst-case cost $C_{\max}$. 
A natural question is whether one can retain quantum coherence while recovering the weighted-average cost inherent in the LCU decomposition.

In this work, we identify a general weighted-cost principle for LCU by developing a transducer-based LCU framework. 
Specifically, we reduce the standard LCU cost from $\widetilde{\mathcal{O}}(C_{\max} \lambda/a)$ to $\widetilde{\mathcal{O}}(C_{\max}+\overline{C}\lambda/a)$.
Therefore our approach can be advantageous when the high-cost unitaries have small coefficient weights and the coefficient $1$-norm is much larger than the spectral norm of the matrix $A$. 
As applications, we obtain an improved block-encoding of sparse matrices, which in turn yields quantum algorithms for sparse Hamiltonian simulation and sparse linear systems with query complexities matching the lower bounds of \cite{Low2019,mori2026sparsity,patel2026nearoptimaljointlowerbound} up to polylogarithmic factors. 

\subsection{Results}

\subsubsection{Main result}

Our main result gives a general LCU construction whose normalization can be reduced at a cost determined by the weighted average of the constituent query complexities. 

\begin{result}[Transducer-based LCU, informal version of~\cref{thm:complexity_main}]\label{res:LCU}
    Let $U_j$ be unitaries that can be implemented by $C_j$ queries to a fixed set of primitive oracles. 
    Let $A = \sum_{j=0}^{J-1} c_j U_j$ with coefficients $c_j > 0$, and suppose that an upper bound $a \geq \|A\|$ is known. 
    Then, there exists an algorithm that implements $A/\alpha$ with $\alpha = \mathcal{O}(a)$ up to error $\mathcal{O}(\epsilon)$, using 
    \begin{equation}
        \mathcal{O}\left( C_{\max} \log\left( \frac{1}{\epsilon} \right) + \frac{\lambda}{a} \overline{C} \left(\log\left(\frac{1}{\epsilon}\right)\right)^2 \right)
    \end{equation}
    queries to the primitive oracles, where $\lambda = \sum_{j=0}^{J-1} c_j ,\, C_{\max} = \max_j C_j, \,\overline{C} = (\sum_{j=0}^{J-1} c_j C_j)/\lambda$. 
\end{result}

\Cref{res:LCU} requires two conditions. 
First, the circuits for $U_j$'s are related in the sense that their oracle calls belong to a common family that can be selected coherently (called shared primitive access assumption, as specified in~\cref{assump:U}). 
While excluding the most general case when all the unitaries are unrelated, this condition can be satisfied in many practical scenarios, such as the quantum phase estimation circuit, and the Hamiltonian simulation and linear system algorithms that we will discuss in~\cref{sec:intro_applications}. 
Second, an upper bound of $\|A\|$ should be known a priori. 
Although we may simply take $a = \lambda$, such a choice cannot lead to a genuinely improved query complexity as being discussed in~\cref{sec:intro_comparison}. 

In fact, the main improvement concerns the cost of reducing the normalization from $\lambda$ to $\mathcal{O}(a)$. 
In the standard construction of LCU, it requires repeated uses of the entire select oracle, each costing the most expensive complexity $C_{\max}$. 
In our algorithm, the repeated cost is the averaged cost $\overline{C}$, while $C_{\max}$ appears only as an additive contribution. 
The saving can therefore be substantial when expensive unitaries carry small coefficient weights and $\lambda/a$ is large. 
The additive term also makes clear that our result does not imply that an arbitrary constituent circuit can be implemented at its average cost.

Our~\cref{res:LCU} applies more generally to $A = \sum_{j=0}^{J-1} c_j A_j$, where $A_j$ are general matrices supplied through their block-encodings. 
This general formulation allows our algorithm to be used as a subroutine in other algorithms based on block-encodings, such as the sparse matrix calculations considered in this paper.

\subsubsection{Applications}\label{sec:intro_applications}

We apply our transducer-based LCU to construct a block-encoding of a sparse matrix, which then serves as the common input model for sparse matrix functions such as Hamiltonian simulation and quantum linear system solving. 
We use the standard sparse-access model, with oracles that return the locations and values of nonzero entries. 
For the applications below, we consider Hermitian matrices with unit spectral norm. 

\begin{result}[Block-encoding of a sparse matrix, informal version of~\cref{cor:sparse_block_encoding_normalized}]\label{res:sparse-block-encoding}
    A block-encoding of a $d$-sparse normalized matrix can be constructed with $\mathcal{O}(1)$ rescaling factor and $\mathcal{O}(\sqrt{d}\, \operatorname{polylog}(d/\epsilon) )$ queries to its standard sparse access oracles. 
\end{result}

\Cref{res:sparse-block-encoding} gives a block-encoding whose rescaling factor remains constant while the number of sparse oracle queries grows only as $\sqrt{d}$, up to poly-logarithmic factors. 
A constant rescaling factor is useful because it avoids introducing an additional sparsity dependence into subsequent matrix transformations. 
The construction therefore provides a reusable representation of the matrix, with the query savings available to any algorithm that uses this block-encoding.

Our first application of this representation is Hamiltonian simulation. 
For a normalized Hamiltonian $A$, the goal is to implement $e^{-iAT}$ for a time $T$. 
Combining the sparse block-encoding with quantum singular value transformation (QSVT)~\cite{GilyenSuLowEtAl2019} yields nearly linear dependence on the evolution time $T$, with each use of the block-encoding costing $\widetilde{\mathcal{O}}(\sqrt{d})$ sparse oracle queries.

\begin{result}[Near-optimal sparse Hamiltonian simulation algorithm, informal version of~\cref{thm:sparse_Ham_sim}]\label{res:sparse_Ham_sim}
    There exists a quantum algorithm which simulates a $d$-sparse normalized Hamiltonian to time $T$ with error at most $\mathcal{O}(\epsilon)$, using $\mathcal{O}( \sqrt{d}\, T \operatorname{polylog}(dT/\epsilon))$ queries to the Hamiltonian sparse oracle. 
\end{result}

For a general Hamiltonian, a known spectral norm bound can be absorbed into the evolution time by rescaling. 
The leading $\sqrt{d}\,T$ dependence matches the worst-case lower bound of~\cite{Low2019} up to polylogarithmic factors. 
Thus \cref{res:sparse_Ham_sim} is near-optimal in its joint dependence on sparsity and time, and the accuracy dependence achieves a polylogarithmic scaling.  

The same block-encoding can also be combined with quantum linear system solvers. 
Given an invertible $d$-sparse Hermitian matrix $A$ with $\|A\|=1$, condition number $\kappa=\|A\|\|A^{-1}\|$, and a normalized right-hand-side state $\ket{b}$,
the goal is to prepare the normalized solution state $A^{-1}\ket{b}/\|A^{-1}\ket{b}\|$. 

\begin{result}[Near-optimal sparse linear system algorithm, informal version of~\cref{thm:sparse_linear_system}]\label{res:sparse_linear_system}
    There exists a quantum algorithm which solves the quantum sparse linear system problem $Ax = \ket{b}$ with error at most $\mathcal{O}(\epsilon)$, using $\mathcal{O}( \sqrt{d}\, \kappa \operatorname{polylog}(d\kappa/\epsilon))$ queries to the matrix sparse oracle and $\mathcal{O}(\kappa/\|A^{-1}\ket{b}\|)$ queries to the right-hand-side state. 
\end{result}

In~\cref{res:sparse_linear_system}, the matrix-query bound has leading dependence $\sqrt{d}\,\kappa \, \operatorname{polylog}(1/\epsilon)$, matching the joint lower bound of~\cite{patel2026nearoptimaljointlowerbound} in all parameters up to polylogarithmic factors. 
This dependence can follow from using our block-encoding in any existing quantum linear system solver whose matrix-query cost is nearly linear in $\kappa$ and poly-logarithmic in $1/\epsilon$. 
Here we use the tunable variable-time amplitude amplification (tunable VTAA) algorithm of~\cite{LowSu2024quantumlinearalgorithmoptimal}, which not only yields the desired matrix query complexity but also gives the $\mathcal{O}(\kappa/\|A^{-1}\ket{b}\|)$ scaling for the number of preparing $\ket{b}$, matching the lower bound for the state preparation cost also established in~\cite{LowSu2024quantumlinearalgorithmoptimal}. 
The ratio $\kappa/\|A^{-1}\ket{b}\|$ lies between $1$ and $\kappa$ and can be much smaller than $\kappa$, making this distinction useful when preparing $\ket{b}$ is expensive. 
In summary, our sparse block-encoding improves the matrix query complexity while preserving this favorable dependence on right-hand-side preparation.

In this work, we study only two representative applications of \Cref{res:sparse-block-encoding} in detail. More generally, when combined with QSVT, \Cref{res:sparse-block-encoding} leads to improved quantum algorithms for general functions of sparse matrices, yielding a near-quadratic improvement in the dependence on sparsity compared with the algorithm of \cite{MontanaroShao2024}. 
Moreover, by combining the low bound techniques of \cite{Low2019,MontanaroShao2024}, one can show that the square-root dependence on the sparsity is indeed optimal.

\subsection{Overview of the algorithm}

The guiding idea is to retain the intermediate stages of the constituent circuits throughout the LCU construction. 
To this end, a convenient way is through the transducer, which was proposed in~\cite{BelovsJefferyYolcu2024} and recently received rapidly increasing attention in quantum algorithm design~\cite{apers2026elfs,chen2026optimal,chen2026gateefficientimplementationqueryoptimaltimedependent,wang2026queryoptimalquantumsimulationlindblad,chen2026queryoptimalgateefficientlindbladiansimulation}. 
In a transducer, its state space has a public part, which carries the desired input and output, and a private part containing an auxiliary vector called the catalyst unchanged in the algorithm. 
It is an idealization of the usual bounded-error quantum algorithm that can exhibit useful properties such as efficient composition. 
Specifically, one can turn a circuit consisting of a sequential composition of oracles into a large unitary transducer with only a constant number of controlled oracle calls, similar to the select oracle in the standard LCU implementation. 
Nevertheless, the overall complexity, instead of being reduced to a constant level, is transferred to the norm of the catalyst, which may exhibit the average cost among the unitaries. 

Our LCU algorithm uses this representation in three steps. 
For clarity, we describe the Hermitian unitary case. 
The usual Hermitian dilation extends the construction to general matrices, and a complete technical overview of our algorithm is given in~\cref{sec:algorithm_description}. 

\begin{enumerate}[label=\textbf{Step \arabic*:}, leftmargin=*]
    \item \textbf{Combine the constituent unitaries as transducers.} 
    We first represent each unitary $U_j$, consisting of sequential oracle calls, by a transducer $S_j$ whose catalyst contains its intermediate states. 
    A single application of $S_j$ coherently calls the oracles in parallel (like the select oracle in standard LCU).  
    We then combine them coherently using coefficient amplitudes proportional to $\sqrt{c_j}$, so the squared size of the combined circuit histories is governed by $\overline{C}$. 
    A simple phase operation turns the construction into a transducer whose public action is the Cayley transformation of $A/(4\lambda)$, defined as $\operatorname{Cay}(A/(4\lambda))=(I-iA/(4\lambda))(I+iA/(4\lambda))^{-1}$. 
    Notice that here the rescaling factor of $A$ is still on the order of the 1-norm of the coefficients in LCU. 
    \item \textbf{Recover the target linear combination and improve its normalization.} 
    We next design a QSVT circuit that approximately inverts the Cayley transformation and amplifies the resulting block from rescaling factor $\mathcal{O}(\lambda)$ to $\mathcal{O}(a)$. 
    The length of such a QSVT circuit is $L = \mathcal{O}((\lambda/a)\log(1/\epsilon))$, the degree of the polynomial for approximate inverse Cayley transformation and amplification. 
    Crucially, we do not actually implement this QSVT circuit, but assemble a transducer for this entire circuit in a similar way as we perform in Step 1. 
    The public action of the resulting transducer is a block-encoding of $A/(16a)$, and the amplification ratio $\lambda/a$ enters the complexity through the catalyst norm, which is multiplied only on the average cost. 
    \item \textbf{Convert the transducer to the standard block-encoding.} 
    The catalyst is a mathematical description of the computation and need not be prepared. 
    In the standard procedure~\cite{BelovsJefferyYolcu2024}, one may instead start with zero amplitude in the private part, distribute the input coherently among labelled public sectors, and repeatedly apply the transducer while reusing the same private space. 
    Recombining the public sectors gives an approximate implementation of the desired action. 
    However, the basic construction converges only algebraically with the number of repetitions, leading to polynomial dependence on $1/\epsilon$.  
    To improve the accuracy dependence, we leverage the idea of extrapolation, which is a widely used methodology for constructing high-order numerical methods from lower-order ones in both classical numerical analysis and quantum algorithm design. 
    Specifically, we linearly combine multiple circuits from our algorithm with different transducer repeats at the final stage to cancel low-order error terms, leading to an approximation of the block-encoding of $A/(16a)$ with exponentially reduced error. 
\end{enumerate}

The complexity then follows from keeping track of two different contributions. 
The longest constituent circuit sets an initial scale $\mathcal{O}(C_{\max})$, while composing $L$ uses of the Cayley transformation in Step 2 adds $\mathcal{O}(L\overline{C})$. 
High-order extrapolation in Step 3 incurs one further factor of $\log(1/\epsilon)$. 
The resulting primitive query cost is therefore $\mathcal{O}\bigl((C_{\max}+L\overline{C})\log(1/\epsilon)\bigr)$, which gives the claimed scaling in~\cref{res:LCU}. 
At the technical level, the quantity controlling the final conversion is the resolvent norm of the private part of the transducer (see~\cref{thm:finite-reuse-high-order}). 
Its composition bound (see~\cref{cor:composition_general}) is what allows the longest circuit to contribute additively while the amplification cost accumulates with the weighted average.

\paragraph{Sparse matrix applications.}

For a sparse matrix $A$ with $\|A\| \leq 1$, motivated by~\cite{Low2019}, we start with a useful decomposition $A = \sum_{j=0}^m A_j$ that groups matrix entries according to their magnitudes, with adjacent thresholds differing by a factor of two. 
Then we block-encode each component $A_j$ by the standard sparse matrix block-encoding with amplitude amplification~\cite{Low2019}, and finally linearly combine them by our transducer-based LCU. 
This gives a block-encoding of the original sparse matrix $A$ with $\mathcal{O}(1)$ rescaling factor. 
See~\cref{sec:applications_sparse_block_encoding} for a detailed description of this approach. 

The overall sparsity dependence is only square root, up to polylogarithmic factors, for the following reason. 
In a sparse matrix, small entries can be numerous, whereas a bound on the Euclidean norm of each column limits how many large entries it can contain. 
This gives the tradeoff that groups containing larger entries require more queries but have smaller rescaling factors. 
These two effects compensate in the weighted sum of costs, which can be leveraged by our transducer-based LCU. 
More specifically, for a matrix of spectral norm at most one, this decomposition has $\lambda=\mathcal{O}(\sqrt{d})$, $C_{\max}=\mathcal{O}(\sqrt{d}\log d)$, and $\overline{C}=\mathcal{O}((\log d)^2)$ (See~\cref{lem:sparse_decomposition_norm}). 
Transducer-based LCU therefore gives a block-encoding with constant normalization using $\mathcal{O}(\sqrt{d}\,\operatorname{polylog}d)$ queries. 

With a block-encoding available, Hamiltonian simulation and linear system tasks can be readily solved by any existing quantum algorithms. 
We obtain our~\cref{res:sparse_Ham_sim} by combining transducer-based LCU with QSVT for Hamiltonian simulation~\cite{GilyenSuLowEtAl2019}, and~\cref{res:sparse_linear_system} by using the tunable VTAA~\cite{LowSu2024quantumlinearalgorithmoptimal}. 
In~\cref{sec:applications_sparse_linear_system}, we also discuss the complexity of combining the transducer-based LCU with the discrete adiabatic linear system solver~\cite{CostaAnSandersEtAl2022}, which is slightly better in the number of matrix queries but has worse state preparation cost.

\subsection{Related work and comparison}\label{sec:intro_comparison}

\paragraph{Standard LCU.}
The usual LCU construction prepares a coefficient state, applies the select oracle, and unprepares the coefficient state~\cite{ChildsWiebe2012,berry2015simulating}. 
In the block-encoding formulation, it yields normalization $\lambda$, which can be reduced using singular value amplification~\cite{GilyenSuLowEtAl2019}. 
Under the shared primitive access assumption, a select oracle can be implemented using $\mathcal{O}(C_{\max})$ queries. 
Repeating it for amplification gives a cost $\mathcal{O}(C_{\max}\lambda/a)$ for a constant error level. 
Our transducer-based LCU replaces this multiplicative dependence by $\mathcal{O}(C_{\max}+\overline{C}\lambda/a)$. 

Nevertheless, a more careful complexity comparison must take into account both the query complexity, which directly affects the cost per run, and the rescaling factor, which determines the success probability of necessary post-selection steps. 
In~\cref{tab:comparison_lcu}, we compare our transducer-based LCU algorithm to the standard LCU with or without uniform singular value amplification. 
In particular, we additionally present an ``overall query complexity'', i.e., the product of the query complexity and the rescaling factor, which can be a representative complexity estimate when applying LCU on an input state and extracting the output state with amplitude amplification. 
The comparison in~\cref{tab:comparison_lcu} hereby shows that our transducer-based LCU can be advantageous when expensive unitaries are less weighted and the norm of the combined matrix is much less than the coefficient $1$-norm.

\begin{table}[h]
    \renewcommand{\arraystretch}{2}
    \centering
    \scalebox{0.75}{
    \begin{tabular}{c|c|c|c}\hline\hline
        \textbf{Method} 
        & \textbf{Query complexity per run} 
        & \textbf{Rescaling}
        & \textbf{Overall query complexity} 
        \\\hline 

        Standard LCU~\cite{ChildsWiebe2012}
        & $\mathcal{O}(C_{\max})$
        & $\mathcal{O}(\lambda)$
        & $\mathcal{O}(\lambda C_{\max})$
        \\\hline
        Standard LCU with amplification~\cite{GilyenSuLowEtAl2019}
        & $\mathcal{O}\left( C_{\max} \frac{\lambda}{a} \log\left( \frac{1}{\epsilon} \right)  \right)$
        & $\mathcal{O}(a)$
        & $\mathcal{O}\left(\lambda C_{\max} \log\left(\frac{1}{\epsilon}\right)\right)$
        \\\hline
        Our work (\cref{thm:complexity_main})
        & $\mathcal{O}\left( C_{\max} \log\left( \frac{1}{\epsilon} \right) + \frac{\lambda}{a} \overline{C} \left(\log\left(\frac{1}{\epsilon}\right)\right)^2 \right)$
        & $\mathcal{O}(a)$
        & $\mathcal{O}\left( a C_{\max} \log\left( \frac{1}{\epsilon} \right) + \lambda \overline{C} \left(\log\left(\frac{1}{\epsilon}\right)\right)^2 \right)$
        \\\hline\hline
    \end{tabular}
    }
    \caption{Comparison with representative quantum algorithms for implementing $A = \sum_{j=0}^{J-1} c_j U_j$.  We assume $c_j > 0$, $U_j$ are constructed from a shared set of primitive oracles and oracle-independent gates, and each $U_j$ has query complexity $C_j$. The parameter $a \geq \|A\|$ is a known upper bound of $\|A\|$, $\lambda = \sum c_j$ is the coefficient $1$-norm (we assume $a \leq \lambda$), $C_{\max} = \max_j C_j$ denotes the maximum complexity, and $\overline{C} = \sum c_j C_j / \lambda$ denotes the weighted average cost. The ``query complexity per run'' is the number of oracle calls to construct a block-encoding of $A$, ``rescaling'' is the rescaling factor of this block-encoding, and ``overall query complexity'' refers to the product of the former two columns, which is a representative complexity estimate when applying LCU on an initial state with standard amplitude amplification. }
    \label{tab:comparison_lcu}
\end{table}

\paragraph{Hamiltonian simulation.}

Among existing Hamiltonian simulation algorithms, quantum signal processing, qubitization and QSVT achieve optimal query dependence $\mathcal{O}(\alpha T+\log(1/\epsilon))$ for a given block-encoding with rescaling factor $\alpha$~\cite{LowChuang2017,LowChuang2019,GilyenSuLowEtAl2019}. 
With the standard sparse block-encoding~\cite{GilyenSuLowEtAl2019}, the rescaling factor is $\mathcal{O}(d)$, and thus the query complexity to the sparse oracles is $\mathcal{O}(dT + \log(1/\epsilon))$, which scales linearly in the sparsity $d$. 
Compared to these approaches, our algorithm almost quadratically improves the sparsity dependence, at the expense of extra poly-logarithmic factors in $T$ and $1/\epsilon$.

For a sparse Hamiltonian simulation, ref.~\cite{Low2019} proposes an algorithm through a decomposition by entry magnitude and recursive interaction-picture simulation, obtaining $\mathcal{O}(\sqrt{d} \, T (\sqrt{d}\, T/\epsilon)^{o(1)})$ queries. 
Here the $o(1)$ power represents extra sub-polynomial but super-logarithmic factors. 
Our sparse Hamiltonian simulation builds on this decomposition, and further achieves a better complexity by reducing the extra $(\sqrt{d}\, T)^{o(1)}$ to $\operatorname{polylog}(dT)$ and improving the accuracy dependence from super-logarithmic to poly-logarithmic. 
We would like to remark that this comparison concerns the case when the matrix is normalized according to its spectral norm. 
In a more general case without such a normalization,~\cite{Low2019} appears to have a better norm dependence (i.e., the matrix $1$-to-$2$ norm) than ours (suggested by~\cref{thm:sparse_block_encoding_transducer}). 

We summarize this complexity comparison in~\cref{tab:comparison_Ham_sim}.

\begin{table}[h]
    \renewcommand{\arraystretch}{2}
    \centering
    \scalebox{0.75}{
    \begin{tabular}{c|c}\hline\hline
        \textbf{Method} 
        & \textbf{Query complexity} 
        \\\hline 
        QSP/Qubitization/QSVT~\cite{LowChuang2017,LowChuang2019,GilyenSuLowEtAl2019} & $\mathcal{O}(d \, T + \log(1/\epsilon)) $
        \\\hline
        Recursive interaction-picture simulation~\cite{Low2019} & $\mathcal{O}( (\sqrt{d} \, T)^{1+o(1)} / \epsilon^{o(1)} )$
        \\\hline
        Our work (\cref{thm:sparse_Ham_sim})& $\mathcal{O}(\sqrt{d}\, T  \operatorname{polylog}(dT/\epsilon) ) $
        \\\hline\hline
    \end{tabular}
    }
    \caption{Comparison with representative quantum algorithms for sparse Hamiltonian simulation $e^{-i A T}$. Here $A$ is a Hamiltonian with $\|A\| = 1$ and sparsity $d$, and we assume access to its standard sparse oracle. The parameter $\epsilon$ denotes the target operator norm error. }
    \label{tab:comparison_Ham_sim}
\end{table}

\paragraph{Quantum linear systems.}
Starting with the HHL algorithm~\cite{HarrowHassidimLloyd2009}, there have been substantial works on quantum linear system algorithms with improved condition number and precision dependence (see, e.g., \cite{Ambainis2012,childs2017quantum,chakraborty2019power,SubasiSommaOrsucci2019,CostaAnSandersEtAl2022,Dalzell2024}).
Among them, the discrete adiabatic algorithm~\cite{CostaAnSandersEtAl2022} and the kernel reflection approach~\cite{Dalzell2024} achieve $\mathcal{O}(\alpha\kappa\log(1/\epsilon))$ block-encoding queries for $\|A\|=1$ and normalization $\alpha$. 
Combining these methods with a standard sparse block-encoding gives a worst-case dependence linear in $d$. 
Our algorithm keeps the main track of the algorithm unchanged, but only alters the construction of the block-encoding input model. 
This again can almost quadratically improve the sparsity dependence at the expense of extra poly-logarithmic factors in $\kappa$ and $1/\epsilon$. 

Alternatively, the work~\cite{LowSu2024quantumlinearalgorithmoptimal} separately improves query complexity to the right-hand-side state using tunable VTAA. 
Combining their solver with our block-encoding preserves this benefit while reducing the number of queries to the sparse matrix access. 
Recent algorithms also refine the dependence on the spectral properties of a particular instance beyond its worst-case condition number~\cite{Li2025,DalzellLiSu2026}, and our results address the complementary dependence on sparsity. 

Tailored for sparse linear system solvers, the work~\cite{Low2019} also already obtained square-root sparsity dependence with subpolynomial overhead, namely $\mathcal{O}(\sqrt{d}\, \kappa(\sqrt{d}\, \kappa/\epsilon)^{o(1)})$. 
Our algorithm achieves $\mathcal{O}(\sqrt{d}\, \kappa \, \operatorname{polylog}(\kappa d/\epsilon))$, thus reducing the $(\sqrt{d}\, \kappa/\epsilon)^{o(1)}$ overhead to 
$\operatorname{polylog}(d\kappa/\epsilon)$. 

We summarize these comparisons in~\cref{tab:comparison_linear_system}. 

\begin{table}[h]
    \renewcommand{\arraystretch}{2}
    \centering
    \scalebox{0.75}{
    \begin{tabular}{c|c|c}\hline\hline
        \textbf{Method} 
        & \textbf{Queries to the matrix oracles} 
        & \textbf{Queries to the state preparation}
        \\\hline 
        Discrete adiabatic~\cite{CostaAnSandersEtAl2022}/ Kernel reflection~\cite{Dalzell2024}
        & $\mathcal{O}(d\,\kappa \log(1/\epsilon))$
        & $\mathcal{O}(d\,\kappa \log(1/\epsilon))$
        \\\hline
        Tunable VTAA~\cite{LowSu2024quantumlinearalgorithmoptimal}
        & $\mathcal{O}(d\,\kappa \log \kappa \log(\log \kappa/\epsilon))$
        & $\mathcal{O}(\kappa/\|A^{-1}\ket{b}\|)$
        \\\hline
        Recursive interaction-picture simulation~\cite{Low2019}
        & $\mathcal{O}(  (\sqrt{d} \, \kappa)^{1+o(1)}/\epsilon^{o(1)} )$
        & $\mathcal{O}(\kappa/\|A^{-1}\ket{b}\|)$
        \\\hline
        Our work (\cref{thm:sparse_linear_system}) 
        & $\mathcal{O}( \sqrt{d}\, \kappa \operatorname{polylog}(d\kappa/\epsilon) )$
        & $\mathcal{O}(\kappa/\|A^{-1}\ket{b}\|)$
        \\\hline\hline
    \end{tabular}
    }
    \caption{Comparison with representative quantum algorithms for solving linear system $Ax = b$. Here $A$ is a Hermitian matrix with $\|A\| = 1$, sparsity $d$, and condition number $\kappa$. The right hand side $b$ is a normalized vector. 
    We assume access to the standard sparse oracle of $A$, and the state preparation oracle of $\ket{b}$. 
    The numbers of queries to these different oracles are given separately in the latter two columns. 
    The parameter $\epsilon$ denotes the target error in the normalized state measured by the vector $2$-norm. }
    \label{tab:comparison_linear_system}
\end{table}

\paragraph{Transducers.}
The main technical tool of our algorithm is the transducer framework introduced in~\cite{BelovsJefferyYolcu2024}, as a framework for quantum state conversion and composition, with costs that reflect how subroutines are used. 
Very recently, several transducer-based quantum algorithms have been proposed, including time-dependent Hamiltonian simulation~\cite{chen2026optimal,chen2026gateefficientimplementationqueryoptimaltimedependent}, Lindblad simulation~\cite{wang2026queryoptimalquantumsimulationlindblad}, quantum walks~\cite{apers2026elfs} and
ground state energy estimation~\cite{jeffery2026optimal}. 
Our work further extends the applicable regime of the transducer framework to the fundamental LCU task. 
We also establish two technical tools in the transducer framework with improved performance, which might be of independent interest. 

First, we develop a general high-order method that converts a transducer back into a standard block-encoding of its public action.
The general implementation in~\cite{BelovsJefferyYolcu2024} uses $\mathcal{O}(W/\epsilon^2)$ calls to the transducer to achieve state error $\epsilon$, where $W$ bounds the squared catalyst norm (which can be roughly understood as the weighted cost). 
In contrast, if the private block $T$ satisfies that
$I-T$ is invertible, and a bound
$K\geq\|(I-T)^{-1}\|$ on the resolvent is known, our construction establishes an $\mathcal{O}(K\log(1/\epsilon))$ complexity.
This achieves an exponential improvement in precision, but the norm dependence is changed from $W$ to $K$, the norm of the resolvent and might be worse than the catalyst norm in the worst case. 
For our LCU construction, however, the resolvent bound retains the weighted cost and enables logarithmic precision dependence.
Notice that such an exponential improvement has been previously achieved in~\cite{chen2026optimal}, while relying more delicately on the specific structure of the matrices. 

To bound the resolvent norm $K$ throughout the algorithm, we develop a more detailed analysis on the transducer of unitary and/or transducer composition in~\cref{sec:transducer_toolkit_composition}. 
This extends the result in~\cite{BelovsJefferyYolcu2024} with a complementary operator-level theory tailored to block-encodings and a rigorous norm bound on the resolvent. 

Besides technical tools, we also describe a simple Schur-complement interpretation of transducers. 
Roughly speaking, a transducer embeds the desired transformation into a larger unitary system, from which the effective operation is recovered by eliminating auxiliary variables. 
This viewpoint connects the transducer framework to standard Gaussian block elimination in linear algebra and provides useful intuition for both its composition rules and our complexity analysis.

\paragraph{Concurrent work.}

Bravo-Prieto, Harrow, and Kothari~\cite{bravoprieto2026optimalquantumlinearsystems} obtain the optimal sparse-matrix query complexity
$\Theta(\kappa\sqrt d\log(1/\epsilon))$ for quantum linear systems problem.
They also construct a block-encoding of
any $d$-sparse Hermitian matrix with spectral norm at most one, using
$\mathcal{O}(\sqrt d\log(1/\epsilon))$ queries.
Their method directly block-encodes an enlarged linear system whose inverse contains the desired operator or solution, further removing extra poly-logarithmic factors compared to our~\cref{res:sparse-block-encoding} and~\cref{res:sparse_linear_system}. 
Our main result addresses a different task of linear combinations of matrices, which applies beyond sparse-matrix access and provides a weighted-cost LCU primitive, together with resolvent bounds and high-precision transducer conversion.

\subsection{Organization}
The remainder of this paper is organized as follows. \Cref{sec:preliminaries} introduces the problem setting and reviews the basic tools of quantum linear algebra and transducers, together with a Schur-complement interpretation of the transducer framework. \Cref{sec:transducer_toolkit} develops the transducer toolkit used throughout the paper, including composition rules with resolvent-norm bounds and a high-order finite-reuse procedure for converting a transducer into a block-encoding of its action. \Cref{sec:transducer_LCU} presents the transducer-based LCU algorithm and its complexity estimate. \Cref{sec:applications} applies the framework to sparse matrices, deriving an improved sparse block-encoding and its applications to sparse Hamiltonian simulation and quantum linear systems. \Cref{sec:conclusion} concludes the work with potential future directions.

\section{Preliminaries}\label{sec:preliminaries}

In this section, we define the goal and conditions of the linear combination of matrices task considered in this work. 
We also review some basic concepts and results in quantum linear algebra and transducers, which will be the building blocks of our algorithm.

\subsection{Problem setting}

Consider a linear combination of matrices in the form of 
\begin{equation}\label{eqn:def_lcu}
    A = \sum_{j=0}^{J-1} c_j A_j. 
\end{equation}
Here $A_j$'s are general matrices, and $c_j$'s are positive real coefficients\footnote{The positive real assumption does not lose generality, as the phase factor of any potential complex coefficients can be absorbed in the matrix $A_j$. }. 
The main task of this work is to construct a block-encoding of $A$. 

We assume access to the block-encoding $U_j$ of $A_j/\alpha_j$ with normalization factor $\alpha_j \geq \|A_j\|$. 
Additionally, we make an important assumption that $U_j$ are related in the following sense. 

\begin{assumption}[Shared primitive access]\label{assump:U}
    The circuits of $U_j$ have known descriptions with $C_j$ calls to a fixed finite family of primitive oracles. 
    Specifically, each $U_j$ has decomposition $U_j =   G_j^{[C_j]} Q_j^{[C_j]} \cdots G_j^{[1]} Q_j^{[1]} G_j^{[0]}$, where $G_j^{[k]}$'s are known oracle-free unitaries, $Q_j^{[k]}$ belong to a fixed finite primitive oracle family and its adjoints, and assume that the coherent operator $\sum_{j,k} \ket{j,k}\bra{j,k}\otimes Q_j^{[k]}$ can be implemented with $\mathcal{O}(1)$ primitive queries. 
\end{assumption}

Throughout this work, the query complexity is counted as the number of queries to the primitive oracles in~\cref{assump:U}. 
This assumption allows the circuits $U_j$ to have different lengths and oracle-free gates while sharing a fixed family of primitive oracles.
It requires their next primitive call to be selected coherently from the labels using $\mathcal{O}(1)$ queries.

Assumption~\ref{assump:U} is natural in many structured settings. 
For example, the quantum phase estimation circuit consists of the controlled $U^j$ applications for the unitary $U$, and the circuits $U_j = U^j$ satisfy our shared primitive oracle assumption. 
Another example is multiple QSVT circuits based on a common block-encoding $U_H$, which may have different polynomial degrees and phase sequences, but their oracle calls use only $U_H$ and $U_H^\dagger$~\cite{GilyenSuLowEtAl2019}.
With controlled access to these primitives, they satisfy the assumption. 

This setting also includes the main applications considered in this work. In both sparse Hamiltonian simulation and sparse linear-system solving, the relevant block-encodings are ultimately constructed from the same sparse-access oracles of the input matrix. Thus the family of circuits appearing in these applications naturally shares a common primitive-query interface and satisfies Assumption~\ref{assump:U}. The assumption therefore does not require the \(U_j\)'s themselves to have the same circuit structure; it only requires their oracle-dependent components to arise from a common coherently accessible set of primitive oracles.

For the coefficients, we assume that they are known in the sense that preparing a quantum state encoding the information of $\sqrt{c_j}$ is efficient. For arbitrary unstructured coefficients, such a state can be prepared using $\mathcal O(J)$ elementary gates by standard generic state-preparation procedures \cite{mottonen2004transformationquantumstatesusing}, while substantially more efficient implementations are possible when the coefficients possess additional structure \cite{GroverRudolph2002}.
Nevertheless, they are independent of the primitive oracles for $U_j$, so the usage of the coefficients will not be counted in the query complexity.

In the remaining part of this work, we will only consider the special case where both $A_j$ and $U_j$ are Hermitian. 
This additional assumption does not sacrifice the generality due to the Hermitianization technique~\cite{chakraborty2019power}. 
Specifically, append one dilation qubit and let 
\begin{equation}
    \widetilde{A}_j = \ket{0}\bra{1}\otimes A_j + \ket{1}\bra{0}\otimes A_j^{\dagger} = \left( \begin{array}{cc}
        0 & A_j \\
        A_j^{\dagger} & 0
    \end{array} \right)
\end{equation}
and 
\begin{equation}
    \widetilde{U}_j = \ket{0}\bra{1}\otimes U_j + \ket{1}\bra{0}\otimes U_j^{\dagger}. 
\end{equation}
Then $\widetilde{A}_j$ is Hermitian, and $\widetilde{U}_j$ is a Hermitian unitary.
Furthermore, we have $(I\otimes \bra{0} \otimes I) \widetilde{U}_j (I\otimes \ket{0} \otimes I) =  \widetilde{A}_j/\alpha_j$, i.e., $\widetilde{U}_j$ is a Hermitian block-encoding of $\widetilde{A}_j/\alpha_j$ after reordering the sub-blocks. 
Therefore, once we know how to implement the linear combination of Hermitian matrices via their Hermitian block-encodings, we can directly obtain a block-encoding $\widetilde{U}$ of $\widetilde{A}/\widetilde{\alpha}$ with 
\begin{equation}
    \widetilde{A} = \sum_{j=0}^{J-1} c_j \widetilde{A}_j = \left( \begin{array}{cc}
        0 & A \\
        A^{\dagger} & 0
    \end{array} \right). 
\end{equation}
Then $\widetilde{U} (X\otimes I)$ (and regarding the dilation qubit as the ancilla qubit as well) gives the desired block-encoding of $A$.

\subsection{Quantum linear algebra primitives}

Block-encodings provide a standard way to represent a possibly non-unitary operator as a block of a larger unitary, enabling one to apply quantum algorithms, such as QSP and QSVT, to general linear operators. The notion was implicit in earlier Hamiltonian simulation algorithms~\cite{LowChuang2019} and was later formalized in~\cite{chakraborty2019power,GilyenSuLowEtAl2019}. We use the following standard definition.
\begin{definition}[Block-encoding]\label{def:block-encoding}
   Let $A$ be an $s$-qubit operator, $\alpha,\epsilon\in\mb R_{\geq 0}$ and $a\in \mb N$. An $(s+a)$-qubit unitary $U_A$ is called an $(\alpha, a, \epsilon)$-block-encoding of $A$, if 
   \be
   \left\|A-\alpha\cdot(\bra{0^a} \otimes I) U_A (\ket{0^a} \otimes I) \right\| \leq \epsilon.
   \ee
   Intuitively,
    \be
    U_A = \begin{pmatrix}
        A/\alpha & * \\ * & *
    \end{pmatrix}.
    \ee
\end{definition}

For a unitary decomposition $A=\sum_{j=0}^{J-1} c_j U_j$, the standard LCU~\cite{ChildsWiebe2012,berry2015simulating} framework yields a block-encoding of $A$ as follows.
\begin{lemma}[Standard LCU]\label{lem:standard_lcu}
    Let $A=\sum_{j=0}^{J-1} c_j U_j$, where each $U_j$ is unitary, $c_j>0$ and $c = \sum_j c_j$. Let $V$ be a unitary such that
    \be
    V \ket{0} = \frac{1}{\sqrt{c}} \sum_{j=0}^{J-1} \sqrt{c_j} \, \ket{j}.
    \ee
    Then the quantum circuit 
    \be
    (V^\dag \otimes I)\Big(\sum_{j=0}^{J-1} \ket{j}\bra{j} \otimes U_j \Big) (V\otimes I)
    \ee
    is an $(c, \lceil \log J \rceil, 0)$-block-encoding of $A$. It uses one application of $V, V^\dagger$, and one controlled application of each $U_j$.
\end{lemma}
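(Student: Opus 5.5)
The plan is a direct computation of the top-left block of the circuit, with the ancilla register of $\lceil \log J\rceil$ qubits initialized in $\ket{0}$. For the block-encoding claim with error $0$, it suffices to show
\begin{equation*}
    (\bra{0}\otimes I)\,(V^\dagger\otimes I)\Big(\sum_{j=0}^{J-1}\ket{j}\bra{j}\otimes U_j\Big)(V\otimes I)\,(\ket{0}\otimes I) = \frac{A}{c}.
\end{equation*}
This identity should hold exactly as an operator on the system register. Multiplying by $c$ then gives zero error in \cref{def:block-encoding}.

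First I would apply the hypothesis on $V$. This gives $(V\otimes I)(\ket{0}\otimes I) = \frac{1}{\sqrt c}\sum_j \sqrt{c_j}\,\ket{j}\otimes I$. Taking the adjoint gives $(\bra{0}\otimes I)(V^\dagger\otimes I) = \frac{1}{\sqrt c}\sum_{k}\sqrt{c_k}\,\bra{k}\otimes I$, where we use that $c_k>0$ is real so the square roots are real.

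Next I would sandwich the select operator between these two expressions. Orthonormality $\braket{k|j}=\delta_{kj}$ collapses the resulting double sum to
\begin{equation*}
    \frac{1}{c}\sum_{k,j}\sqrt{c_k c_j}\,\delta_{kj}\delta_{kj}\,U_j = \frac{1}{c}\sum_j c_j U_j = \frac{A}{c}.
\end{equation*}

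Finally, the overall circuit is a product of unitaries and hence unitary. Its ancilla count is the $\lceil\log J\rceil$ qubits needed to index the labels $j$; if $J$ is not a power of two, the unused basis states still carry amplitude zero under $V\ket{0}$, so they do not contribute. The resource count is read directly off the circuit: one use each of $V$ and $V^\dagger$, and a single select operator, which is the same as one controlled application of each $U_j$.

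I do not expect any real obstacle here; the only step needing care is the adjoint bookkeeping for $V^\dagger$, together with the use of real $\sqrt{c_j}$.
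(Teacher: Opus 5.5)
Your proposal is correct and is the standard direct computation of the top-left block. The paper gives no proof of this lemma; it only cites it from the LCU literature, where this same calculation is carried out. Two cosmetic points: the sandwich actually produces the triple sum $\frac{1}{c}\sum_{k,j,l}\sqrt{c_k c_l}\,\delta_{kj}\delta_{jl}\,U_j$ (you renamed the right-hand index to $j$ too early), and when $J$ is not a power of two the select operator must be padded, e.g.\ by $\sum_{j\geq J}\ket{j}\bra{j}\otimes I$, for the circuit to be genuinely unitary. The padding does not affect the block, since $V\ket{0}$ has no support on those states.
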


Given a block-encoding of a Hermitian matrix, QSVT allows one to implement polynomial transformations of it.
\begin{lemma}[QSVT, Theorem 56 of \cite{GilyenSuLowEtAl2019}]\label{lem:QSVT}
    Suppose that $U$ is an $(\alpha, a, \epsilon)$-block-encoding of a Hermitian matrix $A$. Let $P \in \mathbb{R}[x]$ be a degree-$d$ polynomial satisfying $\left|P(x)\right| \leq {1}/{2}$ for all $x \in[-1,1]$.
    Then there is a quantum circuit $\widetilde{U}$, which is a $(1, a+2,4 d \sqrt{\epsilon / \alpha})$-block-encoding of $P(A / \alpha)$, and consists of $d$ applications of $U$ and $U^{\dagger}$, a single application of controlled-$U$ and $\mathcal{O}((a+1) d)$ other one- and two-qubit gates.
\end{lemma}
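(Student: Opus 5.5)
The statement to prove is the QSVT lemma (Theorem 56 of \cite{GilyenSuLowEtAl2019}). My plan is to follow the standard QSVT pipeline in three steps.

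\textbf{Step 1: realize a definite-parity polynomial.} I first reduce to the exact case $\epsilon=0$, where $U$ block-encodes $A/\alpha$ exactly. Write $P=P_{\mathrm{even}}+P_{\mathrm{odd}}$. Each part is bounded by $1/2$ on $[-1,1]$, since $P_{\mathrm{even}}(x)=(P(x)+P(-x))/2$ and similarly for the odd part.

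For a real polynomial $Q$ of definite parity with $|Q|\le 1$ on $[-1,1]$, QSP gives phases $\Phi\in\mathbb{R}^{d}$. The alternating sequence of $U$, $U^\dagger$, interleaved with projector-controlled phase rotations $e^{i\phi_k(2\Pi-I)}$, then block-encodes $Q^{(\mathbb{C})}(A/\alpha)$. This is a complex polynomial whose real part is $Q$. The steps are:
\begin{itemize}
\item apply the qubitization/Jordan-lemma decomposition of $U$ into invariant two-dimensional subspaces;
\item in each subspace, reduce to single-qubit QSP;
\item obtain the phases from the existence of a complementary polynomial, by the Fejér--Riesz-type completion of $1-Q^2$.
\end{itemize}

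\textbf{Step 2: combine into $P$.} Taking the real part costs one extra ancilla qubit: use a Hadamard-controlled LCU of the sequences with $\Phi$ and $-\Phi$, which yields $(Q^{(\mathbb{C})}+\overline{Q^{(\mathbb{C})}})/2$. To combine the even and odd parts into $P$ with a single sequence, control the choice of parity on another qubit. The odd sequence uses one more $U$ than the even one, so the parity switch is implemented by a single controlled-$U$. Averaging the two branches by LCU gives $(P_{\mathrm{even}}\cdot 2+P_{\mathrm{odd}}\cdot 2)/2$ after rescaling the parts, where the bound $1/2$ absorbs the factor $2$. This accounts for the $a+2$ ancillas, the $d$ uses of $U,U^\dagger$, the one controlled-$U$, and the $\mathcal{O}((a+1)d)$ gates for the controlled reflections about $\ket{0^a}$.

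\textbf{Step 3: robustness to $\epsilon>0$.} Let $\widetilde{A}$ be the operator exactly block-encoded by $U$, so $\|A/\alpha-\widetilde{A}/\alpha\|\le\epsilon/\alpha$. The circuit then exactly encodes $P(\widetilde{A}/\alpha)$, and it remains to bound $\|P(A/\alpha)-P(\widetilde{A}/\alpha)\|$. In general $\widetilde{A}$ is not Hermitian, so the singular-value version of the statement is needed.

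\textbf{Main obstacle.} This robustness bound is the hard step. A plain Lipschitz/Markov estimate $\|P'\|\le d^2$ would give $d^2\epsilon/\alpha$, not the claimed $4d\sqrt{\epsilon/\alpha}$. I would follow the robustness lemma of \cite{GilyenSuLowEtAl2019}, which bounds the change of a degree-$d$ bounded polynomial of singular values by $4d\sqrt{\|\cdot\|}$. Its proof writes the QSP sequence as a product of $d$ unitaries, each perturbed by $\mathcal{O}(\sqrt{\epsilon/\alpha})$: the complementary blocks $\sqrt{1-x^2}$ are only $\sqrt{\cdot}$-Hölder continuous. A telescoping argument over the $d$ factors then gives $4d\sqrt{\epsilon/\alpha}$.
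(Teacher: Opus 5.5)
Your proposal reproduces the proof of Theorem~56 in \cite{GilyenSuLowEtAl2019}, which is all the paper relies on here, since it quotes the lemma without proof. The ingredients match: real definite-parity QSVT by averaging the $\Phi$ and $-\Phi$ sequences with only the phase rotations controlled, an LCU of $2P_{\mathrm{even}}$ and $2P_{\mathrm{odd}}$ on a second qubit where one controlled use of $U$ (or $U^\dagger$) fixes the length mismatch, and the $4d\sqrt{\cdot}$ robustness lemma. Two small corrections: the odd branch is the longer one only when $d$ is odd, and in Step~3 the robustness lemma needs $\|A/\alpha\|\le 1$, which an $(\alpha,a,\epsilon)$-block-encoding does not force. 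For example, with $U=I$, $A=(\alpha+\epsilon)I$ and $P=T_d/2$ ($T_d$ the Chebyshev polynomial), $\|P(A/\alpha)\|\approx\frac12\cosh(d\sqrt{2\epsilon/\alpha})$ exceeds $1+4d\sqrt{\epsilon/\alpha}$ once $d\sqrt{\epsilon/\alpha}\gtrsim 3$, so you should state the assumption $\|A\|\le\alpha$ that the lemma, as in \cite{GilyenSuLowEtAl2019}, leaves implicit.
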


\subsection{Transducer}

A transducer represents a unitary transformation through a unitary $S$ acting on a direct sum of public and private spaces.
It transforms the public input as desired while returning an input-dependent private vector, called the catalyst, unchanged.
Here we give a formal definition of the transducer framework
of~\cite{BelovsJefferyYolcu2024}. 

\begin{definition}[Transducer]\label{def:transducer}
    Let $\mathcal{H}_{\rm{pub}}$ and $\mathcal{H}_{\rm{priv}}$ be two finite-dimensional Hilbert spaces. 
    A unitary $S: \mathcal{H}_{\rm{pub}}\oplus \mathcal{H}_{\rm{priv}} \rightarrow \mathcal{H}_{\rm{pub}}\oplus \mathcal{H}_{\rm{priv}}$ is called a \emph{transducer} of $V: \mathcal{H}_{\rm{pub}} \rightarrow \mathcal{H}_{\rm{pub}}$, if there exists a linear mapping $\Gamma: \mathcal{H}_{\rm{pub}} \rightarrow \mathcal{H}_{\rm{priv}}$ such that 
    \begin{equation}
        S(\psi \oplus \Gamma \psi ) = V \psi \oplus \Gamma \psi, \quad \forall \psi \in  \mathcal{H}_{\rm{pub}}. 
    \end{equation}
\end{definition}

Here $\mathcal H_{\mathrm{pub}}$ and $\mathcal H_{\mathrm{priv}}$
are called the public and private spaces, respectively. 
We call $V$ the \emph{public action}, $\Gamma\psi$ the
\emph{catalyst}, and $\Gamma$ the \emph{catalyst map}.
The catalyst vector is returned unchanged by $S$. For a fixed choice of catalyst map $\Gamma$, we define the input-dependent transduction complexity and its worst-case value by
\be
    W(S,\psi):=\|\Gamma\psi\|^2,
    \qquad
    w(S):=\sup_{\|\psi\|=1}W(S,\psi)=\|\Gamma\|^2.
\ee
When the catalyst map is not unique, these quantities refer to
the specified choice of $\Gamma$.
The catalyst serves as a mathematical witness and need not be prepared. The following result approximately implements the public
action using repeated applications of the transducer $S$, without access to $\Gamma$.

\begin{lemma}[Transducer implementation,
Theorem~5.5 of~\cite{BelovsJefferyYolcu2024}]
\label{lem:trans-to-alg}
    Let $S$ be a transducer with public action $V$ and catalyst
    map $\Gamma$. For every integer $N\geq1$ and initial state $\ket{\psi} \in \mc H_{\rm{pub}}$, there is a quantum
    circuit that outputs $\ket{\tau}$ such that
    \be
        \left\|
        \ket{\tau}-V\ket{\psi}
        \right\|
        \leq
        2\sqrt{\frac{W(S,\psi)}{N}}.
    \ee
    The circuit uses $N$ controlled applications of $S$
    and $\mathcal O(N)$ other elementary operations.
\end{lemma}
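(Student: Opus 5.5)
We prove \cref{lem:trans-to-alg} by the standard ``sector-splitting'' construction of~\cite{BelovsJefferyYolcu2024}. We never prepare the catalyst. Instead we spread the input uniformly over $N$ labelled copies of the public space, all sharing one private register initialised to zero, and apply $S$ sector by sector.

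\textbf{Construction.} Adjoin a counter register $\mathbb C^N$ with basis $\ket{1},\dots,\ket{N}$. Prepare
\[
\ket{\Psi_0}=\frac{1}{\sqrt N}\sum_{t=1}^N \ket{t}\otimes(\psi\oplus 0).
\]
For $t=1,\dots,N$, apply $S$ controlled on the counter being $t$, acting on the public part of sector $t$ together with the shared private register. This uses $N$ controlled applications of $S$, which is why we need $N$ separate sectors rather than $N$ uses of a single one. Finally, discard or uncompute the counter by applying the inverse of the uniform preparation, and output the public register as $\ket{\tau}$. The remaining overhead consists of $\mathcal O(N)$ elementary gates for the counter.

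\textbf{Comparison with an ideal run.} Consider the hypothetical state in which the private register starts as $\Gamma\psi/\sqrt N$ rather than $0$. More precisely, write the ideal vector as $\frac1{\sqrt N}\sum_t\ket t\otimes\psi$ in the public sectors, plus a single private component $\Gamma\psi/\sqrt N$. To make the ideal evolution exact, we use the following viewpoint, which can be checked directly from the definition of a transducer. The sectors are treated as a telescoping chain: the action of $S$ on sector $t$ with private vector $\Gamma\psi$ returns $V\psi$ in that sector and leaves $\Gamma\psi$ untouched. It follows that in the ideal run the private register holds $\Gamma\psi/\sqrt N$ throughout, and every sector ends in $V\psi$.

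The real and ideal initial states differ only in the private register, by a vector of norm $\|\Gamma\psi\|/\sqrt N=\sqrt{W(S,\psi)/N}$. Since the whole procedure is unitary, the final states differ by the same norm. The ideal final state is $\frac1{\sqrt N}\sum_t\ket t\otimes V\psi$ together with the private component $\Gamma\psi/\sqrt N$. After we uncompute the counter and project onto the public part, it yields $V\psi$ up to a further error of $\sqrt{W/N}$ from dropping the private component. The triangle inequality then gives the stated bound $2\sqrt{W(S,\psi)/N}$.

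\textbf{Main obstacle.} The delicate point is linearity across sectors. The catalyst $\Gamma\psi$ corresponds to the full input $\psi$, whereas each sector only carries amplitude $\psi/\sqrt N$, so the scaling must be checked carefully. We resolve this by exploiting linearity of $S$: sector $t$ holds $\psi/\sqrt N$ and the private register holds $\Gamma(\psi/\sqrt N)$, which is exactly the catalyst for the input $\psi/\sqrt N$. Hence $S$ maps it to $V\psi/\sqrt N\oplus\Gamma\psi/\sqrt N$. Because the private part is returned unchanged, the same private vector serves as the correct catalyst for the next sector, and induction over $t$ completes the argument.
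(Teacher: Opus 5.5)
Your argument is correct and is essentially the standard Belovs--Jeffery--Yolcu proof that the paper cites; the paper later reuses the same circuit for finite reuse in \cref{sec:transducer_toolkit_finite_reuse}. You compare the run started from $\bigl(\tfrac{1}{\sqrt N}\sum_t\ket{t}\otimes\psi\bigr)\oplus 0$ with the exact run started from $\bigl(\tfrac{1}{\sqrt N}\sum_t\ket{t}\otimes\psi\bigr)\oplus\Gamma\psi/\sqrt N$, paying $\sqrt{W(S,\psi)/N}$ once for the initial mismatch and once for the leftover catalyst.

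Just tidy two things. First, write the space as $(\mathbb{C}^N\otimes\mathcal{H}_{\rm pub})\oplus\mathcal{H}_{\rm priv}$ rather than $\mathbb{C}^N\otimes(\mathcal{H}_{\rm pub}\oplus\mathcal{H}_{\rm priv})$, since the single shared private register is what yields the $1/\sqrt N$. Second, take $\ket{\tau}$ to be the full output state. The second $\sqrt{W/N}$ is the distance from the ideal final state to $V\psi\oplus 0$ in the whole space, not an error incurred by projecting; projecting would cost nothing extra but would not leave a normalized output.
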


From the above result, implementing $V$ to additive error $\epsilon$ requires $\mathcal{O}(W/\epsilon^2)$ calls to $S$.
Later, in \cref{thm:finite-reuse-high-order}, we give a different implementation whose precision dependence is only logarithmic when the private-block resolvent is suitably bounded.

Some basic properties of the transducer directly follow from its definition. 
\begin{lemma}[Basic properties of a transducer]\label{lem:basic_prop_transducer}
    Let $S$ be a transducer of $V$ with catalyst map $\Gamma$, and its matrix representation under the standard basis is 
    \begin{equation}
        S = \left( \begin{array}{cc}
            D & E \\
            B & T
        \end{array} \right). 
    \end{equation}
    Suppose that $I-T$ is invertible, and define 
    \begin{equation}
        K(S) = \|(I-T)^{-1}\|. 
    \end{equation}
    Then: 
    \begin{enumerate}
        \item The action $V$ is unitary, and has the representation $V = D + E (I-T)^{-1} B$. 
        \item The catalyst has the representation $\Gamma = (I-T)^{-1}B$
        \item We have the following identities: $E(I-T)^{-1} = V \Gamma^{\dagger}$, and  $\Gamma\Gamma^{\dagger} = (I-T)^{-1} + (I-T^{\dagger})^{-1} - I$.  
        \item $w(S) \leq 2 K(S) - 1$. 
    \end{enumerate}
\end{lemma}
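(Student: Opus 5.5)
The plan is to write the transducer identity in block form and solve for $\Gamma$ and $V$ by linear algebra. The defining relation $S(\psi\oplus\Gamma\psi)=V\psi\oplus\Gamma\psi$ for all $\psi$ is the pair of operator identities
\begin{equation*}
D+E\Gamma=V,\qquad B+T\Gamma=\Gamma .
\end{equation*}
The second identity gives $(I-T)\Gamma=B$. Since $I-T$ is assumed invertible, $\Gamma=(I-T)^{-1}B$, which is item~2; in particular $\Gamma$ is uniquely determined. Substituting this into the first identity gives $V=D+E(I-T)^{-1}B$.

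For unitarity of $V$, I use that $S$ is unitary, so it preserves norms. Hence
\begin{equation*}
\|V\psi\|^2+\|\Gamma\psi\|^2=\|\psi\|^2+\|\Gamma\psi\|^2,
\end{equation*}
so $V$ is an isometry on the finite-dimensional space $\mathcal H_{\rm pub}$, and therefore unitary. This completes item~1.

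For item~3, I use the unitarity relations $S^\dagger S=I$ and $SS^\dagger=I$ blockwise:
\begin{equation*}
D^\dagger D+B^\dagger B=I,\quad D^\dagger E+B^\dagger T=0,\quad E^\dagger E+T^\dagger T=I,
\end{equation*}
\begin{equation*}
DD^\dagger+EE^\dagger=I,\quad DB^\dagger+ET^\dagger=0,\quad BB^\dagger+TT^\dagger=I.
\end{equation*}
Let $R=(I-T)^{-1}$. For the first identity, I compute $V\Gamma^\dagger=(D+ERB)B^\dagger R^\dagger$. Using $DB^\dagger=-ET^\dagger$ and $BB^\dagger=I-TT^\dagger$, this becomes
\begin{equation*}
E\bigl(-T^\dagger+R(I-TT^\dagger)\bigr)R^\dagger .
\end{equation*}
Writing $I-TT^\dagger=(I-T)+T(I-T^\dagger)$, the middle factor is $-T^\dagger+I+RT(I-T^\dagger)$. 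Since $R$ and $T$ commute and $I+RT=R$, this simplifies to $-T^\dagger+R(I-T^\dagger)+\dots$; I will redo this carefully, and it should reduce to $(I-T^\dagger)$, so the whole expression collapses to $ER$.

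For the second identity, $\Gamma\Gamma^\dagger=RBB^\dagger R^\dagger=R(I-TT^\dagger)R^\dagger$. Again write $I-TT^\dagger=(I-T)+(I-T^\dagger)-(I-T)(I-T^\dagger)$. Conjugating by $R$ gives $R^\dagger+R-I$, which is the claimed formula.

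This algebraic bookkeeping in item~3, in particular the $V\Gamma^\dagger$ identity, is the only delicate step. As a check, the same identity for $E(I-T)^{-1}$ can also be obtained by taking adjoints of a relation derived from $S^\dagger S=I$.

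Item~4 follows from item~3 by taking norms:
\begin{equation*}
w(S)=\|\Gamma\Gamma^\dagger\|=\|R+R^\dagger-I\|.
\end{equation*}
The operator $R+R^\dagger-I$ is positive semidefinite, since it equals $\Gamma\Gamma^\dagger$. For a positive semidefinite Hermitian operator $H$, $\|H\|=\sup_{\|x\|=1}\langle x,Hx\rangle$. Here
\begin{equation*}
\langle x,(R+R^\dagger-I)x\rangle=2\,\mathrm{Re}\langle x,Rx\rangle-1\le 2K(S)-1 .
\end{equation*}
Therefore $w(S)\le 2K(S)-1$.
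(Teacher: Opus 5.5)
Items 1, 2, the formula $\Gamma\Gamma^\dagger=R+R^\dagger-I$, and item 4 follow the paper's proof essentially line for line. The step you have not closed is $E(I-T)^{-1}=V\Gamma^\dagger$, and the reduction you predict for it is wrong. The middle factor $-T^\dagger+R(I-TT^\dagger)$ does not reduce to $I-T^\dagger$: that would give $E(I-T^\dagger)R^\dagger=E$, not $ER$. Use $I-TT^\dagger=(I-T)+T(I-T^\dagger)$ and $R=I+RT$ (which is just $R(I-T)=I$). Then
\[
-T^\dagger+R(I-TT^\dagger)=(I-T^\dagger)+RT(I-T^\dagger)=(I+RT)(I-T^\dagger)=R(I-T^\dagger),
\]
so $V\Gamma^\dagger=ER(I-T^\dagger)R^\dagger=ER$, because $R^\dagger=(I-T^\dagger)^{-1}$. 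With this line inserted your direct computation is valid. It uses only the blocks of $SS^\dagger=I$, and not the unitarity of $V$.

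The paper takes the route you mention only as a check. Apply $S^\dagger$ to $S(\psi\oplus\Gamma\psi)=V\psi\oplus\Gamma\psi$ and read off the private block: $E^\dagger V+T^\dagger\Gamma=\Gamma$. Taking adjoints gives $V^\dagger E=\Gamma^\dagger(I-T)$, hence $V^\dagger ER=\Gamma^\dagger$. Multiplying by $V$, which is unitary by item 1, gives $ER=V\Gamma^\dagger$. This is shorter and avoids the bookkeeping where your sketch went astray.
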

\begin{proof}The proof follows from elementary block-matrix manipulations together with the unitarity of $S$.
By definition, for every $\psi\in\mathcal H_{\rm pub}$ we have
\be
S
\begin{pmatrix}
\psi\\
\Gamma\psi
\end{pmatrix}
=
\begin{pmatrix}
V\psi\\
\Gamma\psi
\end{pmatrix}.
\ee
Since $S$ is unitary,
$
\|\psi\|^2+\|\Gamma\psi\|^2
=
\|V\psi\|^2+\|\Gamma\psi\|^2,
$
and hence
$
\|V\psi\|=\|\psi\|
$. Thus $V$ is unitary.
Writing the defining relation blockwise gives
\be
D+E\Gamma=V,
\qquad
B+T\Gamma=\Gamma.
\ee
The second identity gives
$
\Gamma=(I-T)^{-1}B.
$
Substituting this into the first identity yields
$
V
=
D+E(I-T)^{-1}B.
$
Since $S$ is unitary,
\be
S^\dagger
\begin{pmatrix}
V\psi\\
\Gamma\psi
\end{pmatrix}
=
\begin{pmatrix}
\psi\\
\Gamma\psi
\end{pmatrix}.
\ee
The second block equation gives
$
E^\dagger V+T^\dagger\Gamma=\Gamma,
$
and so
$
E(I-T)^{-1}=V\Gamma^\dagger.
$

Let
$
R:=(I-T)^{-1}.
$
Since $\Gamma=RB$ and $S$ is unitary, the lower-right block of
$SS^\dagger=I$ gives
$
BB^\dagger+TT^\dagger=I.
$
Hence
$
\Gamma\Gamma^\dagger
=
RBB^\dagger R^\dagger
=
R(I-TT^\dagger)R^\dagger.
$
Using
$
I-TT^\dagger
=
(I-T)+(I-T^\dagger)
-(I-T)(I-T^\dagger),
$
we obtain
$
\Gamma\Gamma^\dagger
=
(I-T)^{-1}
+
(I-T^\dagger)^{-1}
-I.
$
Finally, since $\Gamma\Gamma^\dagger$ is positive semidefinite,
\be
w(S)
=
\max_{\|x\|=1}
\langle x,\Gamma\Gamma^\dagger x\rangle
=
\max_{\|x\|=1}
\left(
2\operatorname{Re}\langle x,Rx\rangle-1
\right)
\leq
2\|R\|-1
=
2K(S)-1.
\ee
This proves the claimed bound.
\end{proof}

\subsection{Schur-complement interpretation of transducers}
\label{subsec:schur-transducer}

The algebraic structure underlying a transducer is essentially a Schur-complement reduction.  Schur complements form a classical and extensively studied subject in
matrix analysis, with applications ranging from block Gaussian elimination and
matrix inversion to numerical linear algebra, optimization, probability, control
theory, and effective operator constructions \cite{zhang2005schur}.  From this viewpoint, the transducer action can be understood through the effective operator obtained by eliminating the private variables.  This perspective provides a unified interpretation of many of the
basic identities and composition rules used below.

Let
\be
S=
\begin{pmatrix}
D&E\\
B&T
\end{pmatrix}
\ee
be a transducer and assume that $I-T$ is invertible.
The defining relation
\be
S
\begin{pmatrix}
x\\
\Gamma x
\end{pmatrix}
=
\begin{pmatrix}
Vx\\
\Gamma x
\end{pmatrix}
\ee
admits a natural interpretation in terms of block elimination.
Indeed,
\begin{equation}
\label{a key formula of Schur}
(I-S) \begin{pmatrix}
    x \\
    \Gamma x
\end{pmatrix}
= \begin{pmatrix}
    (I-V) x \\
    0
\end{pmatrix}.
\end{equation}
This means that eliminating the private component amounts to taking the Schur complement of the lower-right block $I-T$ in $I-S$. Explicitly,
\be
\operatorname{Schur}_{I-T}(I-S)
=
(I-D)-(-E)(I-T)^{-1}(-B)
=
I-V.
\ee
In other words, $I-V$ is precisely the reduced operator obtained by eliminating the private variables from $I-S$, while $V$ is the corresponding action of the transducer on the public space.

Concretely, for a given public vector $x$, the second block equation uniquely determines the private vector that must accompany it,
$
y=(I-T)^{-1}Bx=\Gamma x.
$
Substituting this solution into the first block equation then yields the reduced action on the public space. This is precisely the usual interpretation of the Schur complement: one first solves for the variables to be eliminated and then substitutes them back to obtain the effective equation for the variables that are retained.

The preceding relation can equivalently be written as the block elimination identity
\be
\begin{pmatrix}
I&E(I-T)^{-1}\\
0&I
\end{pmatrix}
(I-S)
=
\begin{pmatrix}
I-V&0\\
-B&I-T
\end{pmatrix}.
\ee
Thus, by eliminating the private variables, $I-S$ is reduced in a single block Gaussian elimination step to a block-triangular matrix whose diagonal blocks are $I-V$ and $I-T$. In this sense, the reduced operator $I-V$, and hence the action $V$ of the transducer on the public space, arises directly from block Gaussian elimination of the private variables.

This also clarifies the roles of $K(S)$ and the catalyst map $\Gamma$. The matrix $I-T$ serves as the pivot in the block elimination, with
$K(S)=\|(I-T)^{-1}\|$
measuring the worst-case amplification in solving $(I-T)y=b$. By contrast,
$\Gamma=(I-T)^{-1}B$
restricts this inverse to the particular right-hand sides $Bx$ generated by the public variables. Equivalently, $\Gamma$ is the back-substitution map that reconstructs the corresponding private component, $y=\Gamma x.$

The Schur determinant formula also gives
\be
\det(I-S)=\det(I-T)\det(I-V).
\ee
Since $I-T$ is assumed to be invertible, $\det(I-S)=0$ if and only if $\det(I-V)=0.$
Equivalently, from \eqref{a key formula of Schur} the eigenspaces of \(S\) and \(V\) corresponding to the eigenvalue \(1\) are naturally isomorphic via
\be
x \longmapsto \begin{pmatrix}x\\ \Gamma x\end{pmatrix}.
\ee
So $\dim\ker(I-S)=\dim\ker(I-V).$
This is also a standard feature of Schur-complement reduction: eliminating part of the variables retains the spectral information relevant to the remaining variables. In the present setting, since the reduction is applied to $I-S$, it preserves precisely the spectral information at eigenvalue $1$.

A particularly useful property of Schur complements is the quotient formula, which states that, whenever the relevant Schur complements are well defined, successive elimination of nested blocks agrees with eliminating those blocks at once~\cite{CrabtreeHaynsworth1969,zhang2005schur}. This gives a natural interpretation of the transducer composition in Lemma~\ref{lem:composition_transducer}.

Suppose
\be
S_j=
\begin{pmatrix}
D_j&E_j\\
B_j&T_j
\end{pmatrix}
\ee
is a transducer with action
$
V_j=D_j+E_j(I-T_j)^{-1}B_j.
$
In the composite construction, let \(q_j\) denote the state entering the public space of the \(j\)-th transducer, and let \(p_j\) denote its local private variable. They satisfy
\be
p_j=B_jq_j+T_jp_j,
\qquad
q_{j+1}=D_jq_j+E_jp_j.
\ee
Eliminating the local private variable \(p_j\) gives
\be
p_j=(I-T_j)^{-1}B_jq_j=\Gamma_jq_j,
\ee
and therefore
\be
q_{j+1}
=
\left[D_j+E_j(I-T_j)^{-1}B_j\right]q_j
=
V_jq_j.
\ee
Thus, after eliminating all local private variables, the composite transducer reduces to the clock propagation
$
q_{j+1}=V_jq_j,
$
or equivalently,
\be
q_1=V_0q_0,\qquad
q_2=V_1V_0q_0,\qquad
\ldots,\qquad
q_L=V_{L-1}\cdots V_0q_0.
\ee
One may then eliminate the intermediate clock variables \(q_1,\ldots,q_{L-1}\), which yields the overall action
$
V_{L-1}\cdots V_1V_0.
$
Equivalently, one may eliminate all local private variables and intermediate clock variables in a single Schur reduction. The quotient property of Schur complements guarantees that the two procedures give the same result.

In summary, the Schur-complement viewpoint gives the following dictionary:
$$
\begin{array}{c|c}
\text{Schur-complement language}
&
\text{Transducer language}
\\ \hline
\text{variables retained after elimination}
&
\text{public space}
\\
\text{variables eliminated}
&
\text{private space}
\\
\text{reduced operator}
&
I-V
\\
\text{back substitution}
&
\Gamma=(I-T)^{-1}B
\\
\text{inverse of the elimination pivot}
&
(I-T)^{-1}
\\
\text{worst-case amplification in elimination}
&
K(S)=\|(I-T)^{-1}\|
\\
\text{successive elimination}
&
\text{transducer composition}
\end{array}
$$

Therefore, at the algebraic level, a transducer may be viewed as a unitary realization of a Schur-complement reduction. The algorithmic content of the transducer framework lies not in the Schur-complement identity itself, but in constructing a larger unitary $S$ that is inexpensive to implement while the induced action $V$ on the public space may be substantially more expensive to realize directly. Equivalently, the Schur complement of $I-S$ with respect to the private block is $I-V$.
The clock constructions in
Lemmas~\ref{lem:composition_unitary} and
\ref{lem:composition_transducer}
illustrate this principle: they encode a sequential computation into the enlarged private space, so that eliminating the private variables recovers the desired composition on the public space.

\section{Transducer toolkit}\label{sec:transducer_toolkit}

In this section, we discuss some advanced transducer's propositions, which can be roughly categorized into two classes. 
The first part is the construction of a single transducer with action being the sequential composition of multiple unitaries, providing a methodology of transiting the query complexity brought by the sequential operations to the norms related to the private space of the transducer. 
The second part discusses how to turn a transducer to a block-encoding of its action without the need of explicit catalyst, formally called catalyst removal via finite reuse.

\subsection{Transducers for composition}\label{sec:transducer_toolkit_composition}

\begin{lemma}[Transducer for unitary composition]\label{lem:composition_unitary}
    Let $V = G_L O_{L-1} G_{L-1} \cdots G_1 O_0 G_0$ be a unitary circuit where $O_k$ represent one oracle call and $G_k$ are oracle-independent unitaries. 
    Then there exists a transducer $S$ with action $V$ such that: 
    \begin{enumerate}
        \item The transducer $S$ is in the form of 
        \begin{equation}
            S = \left( \begin{array}{cc}
                0 & E \\
                B & T
            \end{array} \right)
        \end{equation}
        with $T^L = 0$. 
        \item Each implementation of $S$ requires one call to the select oracle  $\sum_{k=0}^{L-1} \ket{k}\bra{k} \otimes O_k$. 
        \item We have the following identities and estimates: 
\be
\begin{alignedat}{2}
&V = E(I-T)^{-1}B,
\qquad &&K(S)=\|(I-T)^{-1}\|\le L,\\
&\|(I-T)^{-1}B\psi\|^2 = L\|\psi\|^2,
\qquad &&\|E(I-T)^{-1}\|=\sqrt{L}.
\end{alignedat}
\ee
    \end{enumerate}
\end{lemma}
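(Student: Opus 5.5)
The plan is to build $S$ explicitly as a ``clock'' (history-state) transducer in which the private space stores the $L$ intermediate states of the circuit. Let $\mathcal H$ be the space on which $V$ acts and set $\mathcal H_{\rm pub}=\mathcal H$ and $\mathcal H_{\rm priv}=\mathbb C^L\otimes\mathcal H$, with clock basis $\ket{k}$, $k=0,\dots,L-1$. I would define the blocks of $S$ by
\begin{equation*}
D=0,\qquad B\psi=\ket{0}\otimes G_0\psi,\qquad T(\ket{k}\otimes\varphi)=\begin{cases}\ket{k+1}\otimes G_{k+1}O_k\varphi,& k<L-1,\\ 0,& k=L-1,\end{cases}\qquad E(\ket{L-1}\otimes\varphi)=G_LO_{L-1}\varphi,
\end{equation*}
with $E$ vanishing on clock values $k<L-1$. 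Thus $S$ maps public $\to$ clock $0$, clock $k\to k+1$, and clock $L-1\to$ public, each time through a unitary. Hence $S$ is a cyclic shift over $L+1$ sectors composed with sector-wise unitaries, and is therefore unitary. The form $\bigl(\begin{smallmatrix}0&E\\B&T\end{smallmatrix}\bigr)$ holds by construction. Since $T$ strictly raises the clock value and kills the top value, $T^L=0$.

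For implementability, I would write $S=\Pi\cdot\bigl(\sum_k\ket{k}\bra{k}\otimes G_{k+1}\bigr)\cdot\mathrm{SEL}$. Here $\mathrm{SEL}=\sum_k\ket{k}\bra{k}\otimes O_k$ acts on the private sectors, the public sector is treated as an extra clock label on which the oracle-free gate $G_0$ acts and no oracle is applied, and $\Pi$ is the oracle-free cyclic relabelling of sectors. This uses exactly one call to the select oracle.

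Next I would verify the catalyst. Let $\varphi_0=G_0\psi$ and $\varphi_{k+1}=G_{k+1}O_k\varphi_k$, and set $\Gamma\psi=\sum_{k=0}^{L-1}\ket{k}\otimes\varphi_k$. Then $B\psi+T\Gamma\psi=\ket{0}\otimes\varphi_0+\sum_{k<L-1}\ket{k+1}\otimes\varphi_{k+1}=\Gamma\psi$. Also $D\psi+E\Gamma\psi=G_LO_{L-1}\varphi_{L-1}=V\psi$. So $S$ is a transducer of $V$, and by \cref{lem:basic_prop_transducer} the catalyst is $\Gamma=(I-T)^{-1}B$ and $V=E(I-T)^{-1}B$, since $D=0$. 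Each $\varphi_k$ is a unitary image of $\psi$, so $\|(I-T)^{-1}B\psi\|^2=\sum_k\|\varphi_k\|^2=L\|\psi\|^2$.

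For the resolvent, nilpotency gives the finite Neumann series $(I-T)^{-1}=\sum_{m=0}^{L-1}T^m$. Each $T^m$ is a partial isometry, because it shifts the clock by $m$ through unitaries, so $\|T^m\|\le1$ and $K(S)\le L$. For the last estimate, I would use the identity $E(I-T)^{-1}=V\Gamma^\dagger$ from \cref{lem:basic_prop_transducer}. Since $V$ is unitary and $\Gamma/\sqrt L$ is an isometry, $\|E(I-T)^{-1}\|=\|\Gamma\|=\sqrt L$.

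The only step needing care is the bookkeeping: one must make sure that the public sector and the sector reordering are handled so that $S$ really costs one select call and remains unitary. The norm estimates are then immediate.
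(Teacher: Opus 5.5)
Your proposal is correct and takes essentially the paper's approach. You use the same clock transducer (with the private clock relabelled $0,\dots,L-1$ instead of $1,\dots,L$), the same factorization into a cyclic shift, oracle-free gates and one select call, and the same finite Neumann series for $K(S)\le L$. The one minor difference is that you obtain $\|E(I-T)^{-1}\|=\sqrt L$ from the identity $E(I-T)^{-1}=V\Gamma^\dagger$ of \cref{lem:basic_prop_transducer} together with $\Gamma^\dagger\Gamma=LI$, whereas the paper writes $E(I-T)^{-1}$ explicitly as a row of $L$ unitaries; both arguments are valid.
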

\begin{proof}
    Denote $O_{-1} = I$ and $V_k = G_k O_{k-1} \cdots G_1 O_0 G_0$ (intuitively $V_k$ is a part of the circuit of $V$ truncated at $k$-th oracle calls, with $V_0 = G_0$ and $V_L = V$). 
    Let 
    \begin{align}
        S & = \sum_{k=0}^{L-1} \ket{k+1}\bra{k} \otimes G_k O_{k-1} + \ket{0}\bra{L} \otimes G_L O_{L-1} \nonumber \\
        & = \left( \begin{array}{cccccc}
            0 & & & & & G_{L} O_{L-1} \\
            G_0 & 0 & & & & \\
             & G_1 O_0 & 0 & & & \\
             & & G_2 O_1 & \ddots & & \\
             & & & \ddots & \ddots & \\
             & & &  & G_{L-1} O_{L-2} & 0 \\ 
        \end{array} \right). \label{eqn:proof_unitary_composition_def_S}
    \end{align}
    Then the unitary $S$ given in~\cref{eqn:proof_unitary_composition_def_S} is a transducer of $V$, because for any vector $\psi$, 
    \begin{align}
        & \quad S (\psi \oplus \left( V_0 \psi \oplus V_1 \psi \oplus \cdots \oplus V_{L-1} \psi \right) ) \\
        & = S \left( \ket{0}\otimes \psi + \sum_{k=1}^{L} \ket{k} \otimes V_{k-1} \psi \right) \\
        & = \ket{1}\otimes G_0 \psi + \sum_{k=1}^{L-1} \ket{k+1} \otimes G_k O_{k-1} V_{k-1} \psi + \ket{0}\otimes G_L O_{L-1} V_{L-1}\psi \\
        & = \ket{1}\otimes V_0 \psi + \sum_{k=2}^{L} \ket{k} \otimes V_{k-1} \psi + \ket{0}\otimes V_{L}\psi \\
        & = V \psi \oplus \left( V_0 \psi \oplus V_1 \psi \oplus \cdots \oplus V_{L-1} \psi \right). 
    \end{align}
    The corresponding catalyst map is 
    \begin{equation}
        \Gamma = \left(\begin{array}{c}
            V_0 \\
            V_1 \\
             \vdots \\
             V_{L-1}
        \end{array}\right), 
    \end{equation}
    and $E = [0,0,\cdots,0,G_{L}O_{L-1}]$, $B = [G_0;0;\cdots;0]$, and 
    \begin{equation}\label{eqn:proof_unitary_composition_def_T}
        T = \left( \begin{array}{ccccc}
             0 & & & & \\
             G_1 O_0 & 0 & & & \\
             & G_2 O_1 & \ddots & & \\
             & & \ddots & \ddots & \\
             & &  & G_{L-1} O_{L-2} & 0 \\ 
        \end{array} \right). 
    \end{equation}
    It is clear that $T^L = 0$ since $T$ is a lower triangular $L$-by-$L$ block matrix with all the diagonal blocks to be $0$. 
    Furthermore, notice that $S$ can be rewritten as 
    \begin{equation}
        S =  \left( \left( \sum_{k=0}^{L-1}\ket{k+1}\bra{k}+\ket{0}\bra{L} \right)\otimes I \right)\left( \sum_{k=0}^L \ket{k}\bra{k} \otimes G_k \right)\left(I \oplus \sum_{k=0}^{L-1} \ket{k}\bra{k} \otimes O_k\right), 
    \end{equation}
    so its implementation requires one query to the select oracle of $O_j$'s. 

    Now let us prove the identities and estimates claimed in the lemma. 
    The identity $V = E(I-T)^{-1} B$ directly follows from~\cref{lem:basic_prop_transducer} and noticing that the left upper block of $S$ here is $0$. 
    Since $T^L = 0$, we have
    \begin{equation}
        (I-T)^{-1} = I + T + T^2 + \cdots + T^{L-1},  
    \end{equation}
    and thus $K(S) = \|(I-T)^{-1}\| \leq L$. 
    A direct computation shows that\footnote{Here and below, \([A_1;\cdots;A_k]\) denotes the block column obtained by vertically stacking the vectors or matrices \(A_1,\ldots,A_k\).} 
    \begin{equation}
        (I-T)^{-1}B \psi = [V_0 \psi;V_1\psi;\cdots;V_{L-1}\psi], 
    \end{equation}
    so $\|(I-T)^{-1}B \psi\|^2 = \sum_{k=0}^{L-1} \|V_k \psi\|^2 = L \|\psi\|^2$. 
    For the last equation, by $(I-T)^{-1} = I+T+T^2+\cdots + T^{L-1}$, we can compute that 
\be
E(I-T)^{-1}
=
\begin{pmatrix}
R_0 & R_1 & \cdots & R_{L-2} & R_{L-1}
\end{pmatrix},
\ee
where $R_k=G_L O_{L-1}G_{L-1}O_{L-2}\cdots G_{k+1}O_k$ for $0\leq k\leq L-1$.
In particular, since each \(R_k\) is unitary,
$
E(I-T)^{-1}
\bigl(E(I-T)^{-1}\bigr)^\dagger
=
\sum_{k=0}^{L-1}R_kR_k^\dagger
=
LI,
$
and so
$
\|E(I-T)^{-1}\|=\sqrt{L}.
$
\end{proof}

In some scenarios, the intermediate unitaries in a circuit are realized through their transducers, so we also need a result for the composition of multiple transducers. 
We first discuss a simple case where all the unitaries are given by their transducers. 
Our construction is closely related to the parallel implementation of sequential transducers in~\cite[Proposition~9.10 and Theorem~10.5]{BelovsJefferyYolcu2024}. 
There, the intermediate states of a sequential composition are stored in the catalyst, which leads to the same additive bound on the catalyst norm as \cref{eqn:composition_transducer_w_bound}. 
Our result gives an explicit block-matrix realization of this construction and, in addition, derives the resolvent bound in \cref{eqn:composition_transducer_K_bound}, which will be essential for the finite-reuse implementation below. 

\begin{lemma}[Transducer for pure transducer composition]\label{lem:composition_transducer}
    For $0 \leq j \leq L-1$, let $V_j$ be unitaries of the same dimension, and $S_j=
\begin{pmatrix}
D_j&E_j\\
B_j&T_j
\end{pmatrix}$ be the transducer of $V_j$ with catalyst map $\Gamma_j$. 
    Then there exists a transducer $S$ of $V = V_{L-1}\cdots V_1V_0$ such that 
    \begin{align}
        w(S) &\leq L-1 + \sum_{j=0}^{L-1} w(S_j), \label{eqn:composition_transducer_w_bound}\\
        K(S) &\leq \max_j K(S_j) + (L-1) (1+\max_j w(S_j)), \label{eqn:composition_transducer_K_bound}
    \end{align}
    using one query to the select oracle of $S_j$. 
\end{lemma}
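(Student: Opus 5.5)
The plan is to imitate the clock construction of \cref{lem:composition_unitary}, with each oracle call replaced by one application of the corresponding transducer. Let $\mathcal H$ be the space on which the $V_j$ act and $\mathcal P_j$ the private space of $S_j$. The public space of $S$ is the clock-$0$ copy of $\mathcal H$. The private space is
\[
\mathcal H_{\rm priv}=\Bigl(\bigoplus_{k=1}^{L-1}\ket{k}\otimes\mathcal H\Bigr)\oplus\Bigl(\bigoplus_{j=0}^{L-1}\ket{j}\otimes\mathcal P_j\Bigr).
\]
We define $S=\Pi\,\mathrm{SEL}$. Here $\mathrm{SEL}=\sum_j\ket{j}\bra{j}\otimes S_j$ acts on the pair ``clock-$j$ copy of $\mathcal H$'' $\oplus$ $\mathcal P_j$, which is one query to the select oracle of the $S_j$. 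The unitary $\Pi$ is the cyclic clock shift $\ket{j}\mapsto\ket{j+1 \bmod L}$ on the $\mathcal H$-copies only, leaving the $\mathcal P_j$ untouched. Write $q_0=\psi$ and $q_k=V_{k-1}\cdots V_0\psi$, and take the catalyst to be $\Gamma\psi=(q_1,\dots,q_{L-1};\Gamma_0q_0,\dots,\Gamma_{L-1}q_{L-1})$. Then $S_j(q_j\oplus\Gamma_jq_j)=q_{j+1}\oplus\Gamma_jq_j$, and after the shift the public output is $q_L=V\psi$ with the catalyst returned. Hence $S$ is a transducer of $V$.

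The bound on $w(S)$ follows directly. Since each $V_j$ is unitary, $\|q_k\|=\|\psi\|$, and therefore $\|\Gamma\psi\|^2=(L-1)\|\psi\|^2+\sum_j\|\Gamma_jq_j\|^2\le (L-1+\sum_j w(S_j))\|\psi\|^2$, which is \eqref{eqn:composition_transducer_w_bound}.

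For \eqref{eqn:composition_transducer_K_bound}, I would solve $(I-T)y=b$ explicitly by block elimination, in the Schur-complement spirit of \cref{subsec:schur-transducer}. Split $b=(b_{q_1},\dots,b_{q_{L-1}};b_{p_0},\dots,b_{p_{L-1}})$, set $y_{q_0}:=0$, and let $y_{q_k}$ ($1\le k\le L-1$) and $y_{p_j}$ be the components of $y$. The equations read
\[
y_{p_j}=(I-T_j)^{-1}\bigl(b_{p_j}+B_jy_{q_j}\bigr),\qquad y_{q_{j+1}}=b_{q_{j+1}}+D_jy_{q_j}+E_jy_{p_j}.
\]
Substituting the first into the second gives the clock recursion $y_{q_{j+1}}=b_{q_{j+1}}+V_jy_{q_j}+E_j(I-T_j)^{-1}b_{p_j}$. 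By \cref{lem:basic_prop_transducer}, $E_j(I-T_j)^{-1}=V_j\Gamma_j^\dagger$ and $(I-T_j)^{-1}B_j=\Gamma_j$.

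This lets us write $(I-T)^{-1}=\mathcal D+\mathcal C\,\mathcal L\,\mathcal R$, where:
\begin{itemize}
\item $\mathcal D=\bigoplus_j(I-T_j)^{-1}$ acts on the $p$-blocks and has norm at most $\max_jK(S_j)$;
\item $\mathcal R$ sends $b$ to the sources $(b_{q_{j+1}}+V_j\Gamma_j^\dagger b_{p_j})_j$, with norm at most $\sqrt{1+\max_jw(S_j)}$ since $\|[\,I,\;V_j\Gamma_j^\dagger\,]\|^2\le 1+\|\Gamma_j\|^2$;
\item $\mathcal L$ is the block lower-triangular propagator whose blocks are products $V_{k-1}\cdots V_{i}$ of unitaries, with at most $L-1$ nonzero block diagonals, so $\|\mathcal L\|\le L-1$;
\item $\mathcal C$ maps the $q$'s to $(y_q;\,\Gamma_jy_{q_j})$, with norm at most $\sqrt{1+\max_jw(S_j)}$.
\end{itemize}
Multiplying these norms gives \eqref{eqn:composition_transducer_K_bound}.

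I expect the main obstacle to be this last step: organizing the solution so that the cross terms factor as $\mathcal C\mathcal L\mathcal R$ cleanly, and checking that the indexing of sources and of the final clock step gives a factor exactly $L-1$ rather than $L$. In particular, the last source $b_{p_{L-1}}$ only feeds into the public output and never into a private $q$, while $y_{q_0}=0$. A cruder triangle-inequality estimate would give an extra constant factor, so the factorization has to be done carefully.
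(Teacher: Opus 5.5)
Your proposal is correct and is essentially the paper's proof: the same construction of $S$ (one select-oracle query followed by a cyclic shift of the public outputs only), the same catalyst and $w(S)$ bound, and the same block elimination of $(I-T)y=b$ via the clock recursion with sources $b_{q_{j+1}}+V_j\Gamma_j^\dagger b_{p_j}$. The paper likewise splits $(I-T)^{-1}g$ into the local term $\bigoplus_j (I-T_j)^{-1}b_{p_j}$ and the propagation term $(h;\,0;\,\Gamma_jh_j)$ with $h=\mathcal L f$ and $\|\mathcal L\|\le L-1$, which is exactly your $\mathcal D+\mathcal C\mathcal L\mathcal R$ factorization, and it gives the stated constant with no extra loss.
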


\begin{proof}
    We construct $S$ as\footnote{See Appendix \ref{app:B} for an explanation of this construction.}
    \begin{equation}\label{eqn:proof_transducer_composition_def_S}
        S = \left( \begin{array}{cccccccccc}
            0 & 0 & \cdots & 0 & D_{L-1} & 0 & 0 & \cdots & 0 & E_{L-1} \\
            D_0  & 0  & \cdots & 0 & 0 & E_0  & 0  & \cdots & 0 & 0 \\
            0 & D_1 & \cdots & 0 & 0 &  0 & E_1 & \cdots & 0 & 0 \\
            \vdots & \vdots & \ddots & \vdots & \vdots & \vdots & \vdots & \ddots & \vdots & \vdots \\
            0 & 0& \cdots & D_{L-2} & 0 & 0 & 0& \cdots & E_{L-2} & 0 \\
            B_0 & 0 & 0 & \cdots & 0 & T_0 & 0 & 0 & \cdots & 0 \\
            0 & B_1 & 0 & \cdots & 0 & 0 & T_1 & 0 & \cdots & 0 \\
            0 & 0 & B_2 & \cdots & 0 & 0 & 0 & T_2 & \cdots & 0 \\
            \vdots & \vdots & \vdots & \ddots & \vdots & \vdots & \vdots & \vdots & \ddots & \vdots \\
            0 & 0 & 0 & \cdots & B_{L-1} & 0 & 0 & 0 & \cdots & T_{L-1}
        \end{array} \right)
    \end{equation}
    Notice that $S$ can be constructed through one call to the select oracle of $S_j$ and permutations of the block order. 
    
    Now we show that $S$ is the desired transducer with the public space being the upper left block. 
    Let $\psi_0$ be an arbitrary vector of dimension the same as the column vector of $V$, and choose 
    \begin{equation}
        \psi_j = V_{j-1}V_{j-2}\cdots V_0 \psi_0, \quad j \geq 1. 
    \end{equation}
    Denote 
    \begin{equation}
        u = [ \psi_0 ; \psi_1; \cdots;\psi_{L-1}; \Gamma_0\psi_0;\Gamma_1\psi_1;\cdots; \Gamma_{L-1}\psi_{L-1} ] = \psi_0 \oplus \Gamma \psi_0, 
    \end{equation}
    where 
    \begin{equation}\label{eqn:proof_transducer_composition_def_Gamma}
        \Gamma = [ V_0; V_1V_0; \cdots; V_{L-2}\cdots V_1V_0; \Gamma_0;\Gamma_1V_0; \Gamma_2 V_1V_0; \cdots; \Gamma_{L-1} V_{L-2}\cdots V_1V_0 ]. 
    \end{equation}
    Then $Su = [x_0;x_1;\cdots;x_{L-1};y_0;y_1;\cdots; y_{L-1}]$, where 
    \begin{equation}
        x_{0} = D_{L-1} \psi_{L-1} + E_{L-1} \Gamma_{L-1} \psi_{L-1}, \quad x_{j+1} = D_j \psi_j + E_j \Gamma_j \psi_j, \quad 0 \leq j \leq L-2, 
    \end{equation}
    \begin{equation}
        y_j = B_j \psi_j + T_j \Gamma_j \psi_j, \quad 0 \leq j \leq L-1. 
    \end{equation}
    According to~\cref{lem:basic_prop_transducer}, we have $D_j + E_j \Gamma_j = V_j$ and $B_j + T_j \Gamma_j = \Gamma_j$, so 
    \begin{equation}
        x_0 = V_{L-1} \psi_{L-1}, \quad x_{j+1} = V_j \psi_j, \quad 0 \leq j \leq L-2, 
    \end{equation}
    \begin{equation}
        y_j = \Gamma_j \psi_j, \quad 0 \leq j \leq L-1. 
    \end{equation}
    From the iterative relation of $x_j$, we have $x_{0} = V_{L-1}\cdots V_1 V_0 \psi_0 = V \psi_0$ and $x_j = V_{j-1}\cdots V_1V_0 \psi_0$. 
    So we can write 
    \begin{align}
        Su & = S(\psi_0 \oplus \Gamma \psi_0) \\
        & = [ V\psi_0 ; V_0\psi_0; V_1 V_0 \psi_0; \cdots; V_{L-2} \cdots V_1 V_0 \psi_0; \Gamma_0 \psi_0; \Gamma_1 V_0 \psi_0; \Gamma_2 V_1 V_0 \psi_0 ; \cdots ; \Gamma_{L-1} V_{L-2} \cdots V_1 V_0 \psi_0 ] \\
        & = V\psi_0 \oplus \Gamma \psi_0, 
    \end{align}
    which verifies that $S$ defined in~\cref{eqn:proof_transducer_composition_def_S} is a transducer of $V$ with catalyst $\Gamma$ defined in~\cref{eqn:proof_transducer_composition_def_Gamma}. 

    Now let us bound the relevant norms of $S$. 
    From the definition of $\Gamma$ in~\cref{eqn:proof_transducer_composition_def_Gamma}, for any vector $\psi$, we have 
    \begin{align}
        \|\Gamma \psi\|^2 &= \|V_0 \psi\|^2 + \|V_1 V_0 \psi\|^2 + \cdots + \|V_{L-2}\cdots V_1 V_0 \psi\|^2 \\
        & \quad + \|\Gamma_0 \psi\|^2 + \|\Gamma_1 V_0 \psi\|^2 + \cdots + \|\Gamma_{L-1}V_{L-2}\cdots V_1 V_0 \psi\|^2 \\
        & \leq (L-1)\|\psi\|^2 + \sum_{j=0}^{L-1}\|\Gamma_j\|^2 \|\psi\|^2, 
    \end{align}
    thus 
    \begin{equation}
        w(S) = \|\Gamma\|^2 \leq L-1 + \sum_{j=0}^{L-1} w(S_j). 
    \end{equation}

    To estimate $K(S) = \|(I-T)^{-1}\|$, let us solve the linear system 
    \begin{equation}
        (I-T) \left( \begin{array}{c}
            h_1 \\
            h_2 \\
            \vdots \\
            h_{L-1} \\
            v_0 \\
            v_1 \\
            v_2 \\
            \vdots \\
            v_{L-1}
        \end{array} \right) = \left( \begin{array}{c}
            a_1 \\
            a_2 \\
            \vdots \\
            a_{L-1} \\
            b_0 \\
            b_1 \\
            b_2 \\
            \vdots \\
            b_{L-1}
        \end{array} \right) =: g
    \end{equation}
    for arbitrary vectors $a_j$ and $b_j$. 
    Then we have the recursive relation (with supplementary definition $h_0 = 0$)
    \begin{align}
        h_j - D_{j-1} h_{j-1} - E_{j-1} v_{j-1} & = a_j \label{eqn:proof_transducer_composition_rec_eq1}\\
        -B_j h_j + (I-T_j)v_j & = b_j. \label{eqn:proof_transducer_composition_rec_eq2}
    \end{align}
    Using~\cref{lem:basic_prop_transducer}, \cref{eqn:proof_transducer_composition_rec_eq2} becomes 
    \begin{equation}\label{eqn:proof_transducer_composition_rec_eq3}
        v_j = \Gamma_j h_j + (I-T_j)^{-1} b_j. 
    \end{equation}
    Substituting this into~\cref{eqn:proof_transducer_composition_rec_eq1} gives 
    \begin{align}
        h_j &= D_{j-1} h_{j-1} + E_{j-1} (\Gamma_{j-1} h_{j-1} + (I-T_{j-1})^{-1} b_{j-1}) + a_j \\
        & = (D_{j-1} + E_{j-1} \Gamma_{j-1}) h_{j-1} + a_j + E_{j-1} (I-T_{j-1})^{-1} b_{j-1}. 
    \end{align}
    Therefore, according to~\cref{lem:basic_prop_transducer} again, 
    \begin{equation}
        h_j = V_{j-1} h_{j-1} + a_j + V_{j-1} \Gamma_{j-1}^{\dagger} b_{j-1}. 
    \end{equation}
    Solving this recursive relation gives a block matrix form $h = \mathcal{L} f$, where $h = [h_1;h_2;\cdots;h_{L-1}]$, $f = [f_1;f_2;\cdots;f_{L-1}]$ with 
    \begin{equation}
         f_j = a_j + V_{j-1} \Gamma_{j-1}^{\dagger} b_{j-1}, 
    \end{equation}
    and 
    \begin{equation}
        \mathcal{L} = \left( \begin{array}{ccccc}
            I & 0 & 0 & \cdots & 0 \\
            V_1 & I & 0 & \cdots & 0 \\
            V_2 V_1 & V_2 & I & \cdots & 0 \\
            \vdots & \vdots & \vdots & \ddots & \vdots \\
            V_{L-2}\cdots V_1 & V_{L-2}\cdots V_2 & \cdots & V_{L-2} & I 
        \end{array} \right). 
    \end{equation}
    Notice that $\|\mathcal{L}\| \leq L-1$ and $\|f_j\| \leq \|a_j\| + \|\Gamma_{j-1}\| \|b_{j-1}\|$, we can bound 
    \begin{align}
        \|h\|^2 & \leq \|\mathcal{L}\|^2 \|f\|^2 = \|\mathcal{L}\|^2 \sum_{j=1}^{L-1} \|f_j\|^2 \\
        & \leq (L-1)^2 \sum_{j=1}^{L-1} ( \|a_j\| + \|\Gamma_{j-1}\| \|b_{j-1}\| )^2 \\
        & \leq (L-1)^2 \sum_{j=1}^{L-1} (1+w(S_{j-1})) ( \|a_j\|^2 + \|b_{j-1}\|^2 ) \\
        & \leq (L-1)^2 (1+\max_j w(S_{j})) \|g\|^2. 
    \end{align}
    Now, using~\cref{eqn:proof_transducer_composition_rec_eq3}, we decompose 
    \begin{equation}\label{eqn:proof_transducer_composition_rec_eq4}
         (I-T)^{-1}g = \left( \begin{array}{c}
            h_1 \\
            \vdots \\
            h_{L-1} \\
            v_0 \\
            v_1 \\
            \vdots \\
            v_{L-1}
        \end{array} \right) = \left( \begin{array}{c}
            0 \\
            \vdots \\
            0 \\
            (I-T_0)^{-1} b_0 \\
            (I-T_1)^{-1} b_1 \\
            \vdots \\
            (I-T_{L-1})^{-1} b_{L-1}
        \end{array} \right) + \left( \begin{array}{c}
            h_1 \\
            \vdots \\
            h_{L-1} \\
            0 \\
            \Gamma_1 h_1 \\
            \vdots \\
            \Gamma_{L-1} h_{L-1}
        \end{array} \right). 
    \end{equation}
    Then the second term on the right hand side of~\cref{eqn:proof_transducer_composition_rec_eq4} has norm 
    \begin{align}
        \sqrt{ \|h\|^2 + \sum_{j=1}^{L-1} \| \Gamma_j h_j\|^2 } & \leq \sqrt{ \|h\|^2 + \max_j w(S_j) \sum_{j=1}^{L-1} \| h_j\|^2 } \\
        & = \sqrt{1+\max_j w(S_j)} \|h\| \\
        & \leq (L-1) (1+\max_j w(S_j)) \|g\|. 
    \end{align}
    For the first term on the right hand side of~\cref{eqn:proof_transducer_composition_rec_eq4}, its norm is 
    \begin{equation}
        \sqrt{\sum_{j=0}^{L-1} \|(I-T_j)^{-1} b_j \|^2} \leq \max_j \|(I-T_j)^{-1}\| \sqrt{ \sum_{j=0}^{L-1} \| b_j \|^2 } \leq \max_j K(S_j) \|g\|. 
    \end{equation}
    Combining these two estimates together, we have 
    \begin{equation}
        \|(I-T)^{-1} g\| \leq ( \max_j K(S_j) + (L-1) (1+\max_j w(S_j)) ) \|g\|, 
    \end{equation}
    and thus 
    \begin{equation}
        K(S) = \| (I-T)^{-1} \| \leq \max_j K(S_j) + (L-1) (1+\max_j w(S_j)). 
    \end{equation}
\end{proof}

\Cref{lem:composition_transducer} shows how to construct a larger transducer for the sequential composition of multiple transducers. 
\cref{lem:composition_transducer} also estimates two norms. 
Importantly,~\cref{eqn:composition_transducer_K_bound} shows that the individual resolvent contributions enter through \(\max_j K(S_j)\), rather than through their sum.
Later we will see that $K(S)$ is directly related to the algorithm complexity, so~\cref{eqn:composition_transducer_K_bound} is the key reason why transducer-based LCU can outperform the standard LCU implementation. 

Although~\cref{lem:composition_transducer} only considers the case where all the unitaries $V_j$ are realized through the transducers, the same result can be generalized to allow inverse, controlled version, untouched qubits, and transducer-independent gates. 
Specifically, if the circuit includes $V_j^{\dagger}$, then we can simply use $S_j^{\dagger}$. 
To see this, for any vector $\psi$, we define $\phi = V_j^{\dagger} \psi$ and using the definition of the transducer for $\phi$ to write 
\begin{equation}
    S_j (\phi \oplus \Gamma_j \phi) = V_j \phi \oplus \Gamma_j \phi. 
\end{equation}
Apply $S_j^{\dagger}$ on both sides, and we have 
\begin{equation}
    S_j^{\dagger} (\psi \oplus \Gamma_j V_j^{\dagger} \psi ) = V_j^{\dagger} \psi \oplus \Gamma_j V_j^{\dagger} \psi, 
\end{equation}
i.e., $S_j^{\dagger}$ is a transducer of $V_j^{\dagger}$ with the norm $w(S_j) = \|\Gamma_j V_j^{\dagger}\|^2 = \|\Gamma_j\|^2$ unchanged. 
Furthermore, $S_j^{\dagger} = [D_j^{\dagger},B_j^{\dagger};E_j^{\dagger},T_j^{\dagger}]$, so $K(S_j^{\dagger}) = \|(I-T_j^{\dagger})^{-1}\| = \|(I-T_j)^{-1}\| = K(S_j)$ also remains unchanged. 
Another scenario is that there are additional $V_j$-independent gates. 
For example, if we would like to implement $G_j V_j$, where $G_j$ represents a collection of known gates, then we can use $[G_j,0;0,I] S_j$, which is the transducer of $G_jV_j$ with catalyst and the lower right block unchanged. 
If $G_j$ comes before $V_j$, i.e., $ V_jG_j$, then we use $S_j [G_j,0;0,I] $, which transduces $V_jG_j $ with the same lower right block. 
The corresponding catalyst becomes $\Gamma_j G_j$, so its norm remains unchanged. 
For the controlled version of $V_j$, we regard the control register as
part of the public space. Writing
\[
S_j=
\begin{pmatrix}
D_j&E_j\\
B_j&T_j
\end{pmatrix},
\]
we may use
\[
S_j^{\rm ct}
=
\begin{pmatrix}
I&0&0\\
0&D_j&E_j\\
0&B_j&T_j
\end{pmatrix},
\]
which transduces the controlled-$V_j$ operation. Its private block
remains $T_j$, and hence
\[
w(S_j^{\rm ct})=w(S_j),
\qquad
K(S_j^{\rm ct})=K(S_j).
\]

Summarizing the above discussions, we have the following result. 

\begin{corollary}[Transducer for transducer and unitary composition]\label{cor:composition_general}
    For $1 \leq j \leq L$, let $V_j$ be unitaries (not necessarily of the same dimension), and $S_j$ be the transducer of $V_j$ with catalyst map $\Gamma_j$. 
    Let $\widetilde V_j$, unitaries of the same dimension, represent one of the following cases: $V_j$ itself or $V_j\otimes I$, their inverses and controlled versions.
    Let $G_j$ denote unitaries independent of $V_j$'s. 
    Then there exists a transducer $S$ of $G_L\widetilde V_LG_{L-1}\widetilde V_{L-1}\cdots G_1\widetilde V_1G_0$ such that 
    \begin{align}
        w(S) &\leq L-1 + \sum_{j=1}^{L} w(S_j), \\
        K(S) &\leq \max_j K(S_j) + (L-1) (1+\max_j w(S_j)),
    \end{align}
    using one query to the select oracle of $S_j$. 
\end{corollary}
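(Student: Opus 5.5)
The plan is to reduce the corollary to \cref{lem:composition_transducer}. For each $j$ we replace $S_j$ by a modified transducer $S'_j$ whose public action is the full-dimensional unitary $U_j := G_j\widetilde V_j$ (with $G_0$ absorbed into the first factor, i.e.\ $U_1 := G_1\widetilde V_1 G_0$). The modification must satisfy two conditions. First, $w(S'_j)\le w(S_j)$ and $K(S'_j)=K(S_j)$. Second, the select oracle of the $S'_j$ costs one query to the select oracle of the $S_j$. We then invoke \cref{lem:composition_transducer} with the $L$ transducers $S'_1,\dots,S'_L$ (reindexed $0,\dots,L-1$). Its bounds
\[
w(S)\le L-1+\sum_j w(S'_j),
\qquad
K(S)\le \max_j K(S'_j)+(L-1)\bigl(1+\max_j w(S'_j)\bigr)
\]
then give exactly the claimed bounds.

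The construction of $S'_j$ proceeds case by case, following the discussion preceding the corollary, and the cases compose.
\begin{enumerate}
\item \emph{Tensoring with idle qubits.} $S_j\otimes I$ on $(\mathcal H_{\rm pub}\otimes\mathcal H')\oplus(\mathcal H_{\rm priv}\otimes\mathcal H')$ transduces $V_j\otimes I$ with catalyst map $\Gamma_j\otimes I$. Hence $\|\Gamma_j\otimes I\|=\|\Gamma_j\|$ and $\|(I-T_j\otimes I)^{-1}\|=\|(I-T_j)^{-1}\|$.
\item \emph{Inverse.} $S_j^\dagger$ transduces $V_j^\dagger$ with catalyst map $\Gamma_jV_j^\dagger$ and private block $T_j^\dagger$. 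Both norms are unchanged, as already shown.
\item \emph{Controlled version.} $I\oplus S_j$ has private block $T_j$. Its catalyst map is $\Gamma_j$ restricted to the control-$1$ subspace (zero on control-$0$), so $w$ does not increase.
\item \emph{Oracle-free gates.} Left- or right-multiplying by $\mathrm{diag}(G,I)$ leaves the private block unchanged and replaces the catalyst map by $\Gamma$ or $\Gamma G$, which preserves $w$.
\end{enumerate}
Because every modification keeps the lower-right block equal to $T_j$, $T_j\otimes I$, or $T_j^\dagger$, each $K$ is unchanged. Each catalyst norm stays $\le\|\Gamma_j\|$, so each $w$ does not increase.

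The remaining step is query accounting. This is where some care is needed, since the $V_j$ may have different dimensions and inverses require $S_j^\dagger$. I would pad all public and private spaces to a common dimension by tensoring with identities, which is harmless by case 1. The oracle-free wrappers $G_j$, the control structure, and the permutations in \cref{eqn:proof_transducer_composition_def_S} can all be applied coherently, controlled on the index register $j$, at no query cost. For the inverses, I would assume, as the statement implicitly does, that the select oracle $\sum_j\ket{j}\bra{j}\otimes S_j$ is available together with its adjoint. Then $\sum_j\ket{j}\bra{j}\otimes S_j^{(\dagger)}$ is obtained from a single controlled call in which the flag choosing between $S_j$ and $S_j^\dagger$ is a function of $j$. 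The main subtlety I expect is exactly this point: justifying that mixing $S_j$ and $S_j^\dagger$ across different $j$ still counts as one select-oracle query. I would handle it by treating $\{S_j,S_j^\dagger\}$ as the primitive family, in the same spirit as \cref{assump:U}.
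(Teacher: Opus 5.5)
Your proposal is correct and follows the paper's own argument. The paper also modifies each $S_j$ by using $S_j^\dagger$ for inverses, $\mathrm{diag}(G,I)S_j$ or $S_j\,\mathrm{diag}(G,I)$ for oracle-free gates, and $\mathrm{diag}(I,S_j)$ for controlled versions, each of which keeps the private block equal to $T_j$ or $T_j^\dagger$ and leaves $K$ and $w$ unchanged, and then applies \cref{lem:composition_transducer}. Your explicit handling of mixing $S_j$ and $S_j^\dagger$ inside one select call addresses a point the paper leaves implicit (it costs at most a constant number of select queries). Your padding step is unnecessary for the stated bounds, because the lemma's block construction already allows private blocks of different sizes.
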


\subsection{Catalyst removal via finite reuse}\label{sec:transducer_toolkit_finite_reuse}

We now explain the finite reuse construction that will be used to turn a transducer into a unitary implementation of its action without the need of explicit catalyst. 
The idea is to introduce $N$ copies of the public space and a shared private space initialized by a zero vector input rather than the exact catalyst. 
The finite reuse then repeats the transducer for $N$ times with different public spaces and the shared same private space, reusing the private output of one as the private input of the next transducer operation. 

We discuss the finite reuse procedure more precisely. 
Let $S = [ D,E;B,T ]$ be a unitary transducer with action $V$ and catalyst $\Gamma$. 
Introduce $\mathcal H_{\rm pub}^{(0)},\ldots, \mathcal H_{\rm pub}^{(N-1)} \simeq \mathcal H_{\rm pub}$, and the finite reuse circuit acts on $\left( \bigoplus_{j=0}^{N-1} \mathcal H_{\rm pub}^{(j)} \right) \oplus \mathcal H_{\rm priv}$. 
For an arbitrary input vector $\psi$, we first append ancilla qubits to get $(\ket{0}\otimes \psi) \oplus \mathbf{0}$, where the first register consists of $\log N$ ancilla qubits, and $\mathbf{0} = [0;0;\cdots;0]$ represents the zero vector. 
We apply the Hadamard gates to prepare a uniform superposition over the first ancilla register, and the state becomes 
\begin{equation}
    \left(\frac{1}{\sqrt{N}} \sum_{j=0}^{N-1} \ket{j} \otimes \psi\right) \oplus \mathbf{0} = [\psi/\sqrt{N};\cdots;\psi/\sqrt{N};\mathbf{0}] \in \left( \bigoplus_{j=0}^{N-1} \mathcal H_{\rm pub}^{(j)} \right) \oplus \mathcal H_{\rm priv}. 
\end{equation}

Then we apply $S$ for $N$ times, sequentially on the space $\mathcal H_{\rm pub}^{(j)} \oplus \mathcal H_{\rm priv}$ for $0 \leq j \leq N-1$. 
To describe this step explicitly, let $q_j$ denote the private vector immediately before the $j$-th application of
$S$ (and thus $q_0 = \mathbf{0}$), and let $o_j$ denote the public output produced after that application. 
At each reuse, the public input is always $\psi/\sqrt N$, so we have 
\begin{equation}
S \begin{pmatrix}
	\psi/\sqrt N\\
	q_j
\end{pmatrix} =
\begin{pmatrix}
	o_j\\
	q_{j+1}
\end{pmatrix}, 
\end{equation} 
i.e., 
\begin{equation}
	\label{eq:finite-reuse-recursion}
	o_j = \frac{1}{\sqrt N}D\psi+Eq_j,
	\quad
	q_{j+1}
	=
	\frac{1}{\sqrt N}B\psi+T q_j,
	\quad
	0\leq j \leq N-1.
\end{equation}
Notice that both $o_j$ and $q_j$ can be written as a linear transformation of $\psi$. 
After all the $N$ applications of $S$, the entire output vector becomes 
\begin{equation}
    \left( \sum_{j=0}^{N-1} \ket{j} \otimes o_j \right) \oplus q_N. 
\end{equation}
Applying Hadamard gates on the first register again yields the output like 
\begin{equation}
    \left( \ket{0}\otimes P_N \psi + \sum_{j=1}^{N-1} \ket{j}\otimes * \right) \oplus *, 
\end{equation}
where
\begin{equation}
	\label{eq:def-PN}
	P_N\psi := \frac{1}{\sqrt N} \sum_{j=0}^{N-1}o_j.
\end{equation}
Post-selecting all the ancilla qubits (including those in the first register and for private space) onto $0$ gives $P_N\psi$. 
Therefore, the pre-described circuit constructs a block-encoding of $P_N$. 

Since the private space is not the ideal catalyst, $P_N$ is thereby not the ideal action $V$. 
The following lemma identifies the error incurred by starting
with zero private input. 

\begin{lemma}[Reuse error identity \cite{chen2026optimal}]\label{lem:finite-reuse}
	Let $S$ be a transducer with action $V$ and catalyst $\Gamma$. 
    Let $P_N$ be the upper left block of the length-$N$ finite reuse circuit, defined by \cref{eq:def-PN}. 
    Then
	\begin{equation}
	V - P_N = E G_N(T)\Gamma, \quad G_N(z) = \frac1N \sum_{k=0}^{N-1} z^k.
	\end{equation}
\end{lemma}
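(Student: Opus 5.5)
Since the identity is linear in $\psi$, I will compare the reuse recursion \cref{eq:finite-reuse-recursion} with the ideal recursion started at the exact catalyst. Write $q_j$ for the private vector before the $j$-th application of $S$, with $q_0=\mathbf 0$. Set $\gamma := \Gamma\psi/\sqrt N$. By \cref{def:transducer}, $S(\psi/\sqrt N \oplus \gamma) = V\psi/\sqrt N \oplus \gamma$. Blockwise this reads
\[
D\psi/\sqrt N + E\gamma = V\psi/\sqrt N, \qquad B\psi/\sqrt N + T\gamma = \gamma,
\]
which is just the identity $D+E\Gamma=V$, $B+T\Gamma=\Gamma$ from the proof of \cref{lem:basic_prop_transducer}.

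First I will define the error vector $e_j := \gamma - q_j$. Subtracting the private recursion $q_{j+1} = B\psi/\sqrt N + Tq_j$ from $\gamma = B\psi/\sqrt N + T\gamma$ gives $e_{j+1} = T e_j$. Since $e_0 = \gamma$, this yields $e_j = T^j\Gamma\psi/\sqrt N$. The point is that the zero initialization is not a problem: the deviation from the catalyst just propagates linearly through $T$.

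Next I will handle the public outputs. From $o_j = D\psi/\sqrt N + Eq_j$ and $V\psi/\sqrt N = D\psi/\sqrt N + E\gamma$, I get $V\psi/\sqrt N - o_j = E e_j = E T^j \Gamma\psi/\sqrt N$. Summing over $j=0,\dots,N-1$ and multiplying by $1/\sqrt N$ as in \cref{eq:def-PN}, I obtain
\[
V\psi - P_N\psi = \frac{1}{\sqrt N}\sum_{j=0}^{N-1}\Bigl(\frac{V\psi}{\sqrt N}-o_j\Bigr) = \frac1N\sum_{j=0}^{N-1} E T^j \Gamma\psi = E\,G_N(T)\Gamma\psi.
\]
Because $\psi$ is arbitrary, this gives the operator identity $V - P_N = E\,G_N(T)\,\Gamma$.

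There is no real obstacle. The only thing to check is the normalization bookkeeping: the $1/\sqrt N$ from the initial Hadamard layer and the $1/\sqrt N$ from the final Hadamard layer combine into the $1/N$ in $G_N$. Also, the identity requires no invertibility of $I-T$, only the existence of $\Gamma$.
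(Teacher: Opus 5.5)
Your proof is correct and is essentially the paper's argument. The paper sums the geometric series for $q_j$ and telescopes via $B=(I-T)\Gamma$ to get $q_j=(I-T^j)\Gamma\psi/\sqrt N$, which is the same as your $e_j=T^j\Gamma\psi/\sqrt N$, obtained by subtracting the fixed point $\gamma$; your remark that only $D+E\Gamma=V$ and $B+T\Gamma=\Gamma$ are needed, with no invertibility of $I-T$, is also accurate.
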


\begin{proof}
    Recall that for any input vector $\psi$, $P_N \psi = (\sum_{j=0}^{N-1} o_j) / \sqrt{N}$ where $o_j$'s are given by the recursion~\cref{eq:finite-reuse-recursion}. 
    Solving the linear recursion for $q_j$ with $q_0 = \mathbf{0}$ yields 
    \begin{equation}
        q_j = \frac{1}{\sqrt{N}}(I+T+T^2+\cdots + T^{j-1}) B \psi. 
    \end{equation}
    According to~\cref{lem:basic_prop_transducer}, we have $B = (I-T)\Gamma$, and thus 
    \begin{align}
        q_j = \frac{1}{\sqrt{N}}(I+T+T^2+\cdots + T^{j-1}) (I-T)\Gamma  \psi = \frac{1}{\sqrt{N}} (I - T^j) \Gamma \psi. 
    \end{align}
    Substituting it into the first equation of~\cref{eq:finite-reuse-recursion} gives 
	\begin{equation}
		o_j = \frac{1}{\sqrt N}D\psi + E \frac{1}{\sqrt N}(I-T^j)\Gamma\psi = \frac{1}{\sqrt N} ( D+E\Gamma-E T^j \Gamma )\psi = \frac{1}{\sqrt N} ( V-E T^j \Gamma )\psi, 
	\end{equation}
    where the last equation follows from~\cref{lem:basic_prop_transducer} again. 
    Therefore 
    \begin{equation}
        P_N\psi = \frac{1}{\sqrt{N}} \sum_{j=0}^{N-1} o_j = \frac{1}{N} \sum_{j=0}^{N-1} ( V-E T^j \Gamma )\psi = V \psi - E G_N(T) \Gamma \psi, 
    \end{equation}
    and thus 
    \begin{equation}
        V - P_N = E G_N(T) \Gamma. 
    \end{equation}
\end{proof}

\cref{lem:finite-reuse} already suggests that the finite reuse can turns a transducer to a block-encoding of $\epsilon$-approximation of $V$ for arbitrarily small $\epsilon$, but the convergence is only first-order. 
Specifically, suppose that $\|T\| < 1$, then $\|G_N(T)\| = \mathcal{O}(1/N)$, and \cref{lem:finite-reuse} implies $\|V - P_N\| = \mathcal{O}(1/N)$. 
So it suffices to choose $N = \mathcal{O}(1/\epsilon)$ for an $\epsilon$-approximation. 

To achieve high-order convergence, we can choose different values of $N$ and linearly combine the corresponding block-encodings to cancel the low-order error terms. 
This is in principle the extrapolation method, which has been widely used for improved accuracy in both classical and quantum numerical algorithms. 
Classical examples include Richardson extrapolation for finite-difference discretisation \cite{richardson1910} and extrapolation methods for ordinary differential equations such as the Bulirsch--Stoer method \cite{bulirsch1966numerical}. 
A closely related idea also appears in quantum Hamiltonian simulation through multi-product formulas, where linear combinations of product formula approximations with different step sizes are chosen to cancel low-order Trotter errors and thereby achieve higher order accuracy \cite{low2019wellconditionedmultiproducthamiltoniansimulation,aftab2026multiproducthamiltoniansimulationexplicit}, and the reduced-depth quantum algorithms for Hamiltonian simulation~\cite{WatsonWatkins2024} and linear differential equations~\cite{fang2026circuitdepthreductiononeancilla}. 
Similar idea further serves as the guiding principle for the famous zero-noise extrapolation~\cite{TemmeBravyiGambetta2017,LiBenjamin2017} in quantum error mitigation.

The idea of combining reuse circuits of different lengths was recently introduced in the context of optimal time-dependent Hamiltonian simulation \cite{chen2026optimal,chen2026gateefficientimplementationqueryoptimaltimedependent}, where a carefully chosen weighted combination exploits the time-ordered structure of the private block to achieve factorial error decay. 
The same principle has subsequently been applied to Lindblad simulation \cite{wang2026queryoptimalquantumsimulationlindblad}. 
Here we establish a structure-independent version of this idea. 
Rather than relying on a special form of the private block, we show that the sole assumption that $I-T$ is invertible is sufficient. 
We formally state the result as follows.

\begin{theorem}[High-order finite reuse]\label{thm:finite-reuse-high-order}
    Let $S = [D,E; B,T]$ be a transducer of $V$ with invertible $(I-T)$. 
    Then, we can implement a block-encoding of $\widetilde{V}$ such that $\|\widetilde{V} - V\| \leq \mathcal{O}(\epsilon)$, using 
    \begin{equation}
        \mathcal{O}\left( K(S) \log\left( \frac{1}{\epsilon} \right) \right)
    \end{equation}
    queries to $S$, its inverse, and controlled versions. 
\end{theorem}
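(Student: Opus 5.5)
The plan is to replace the uniform superposition in the finite-reuse circuit of \cref{lem:finite-reuse} by general amplitude profiles. We then choose those profiles so that the error operator becomes a high power of the first-order reuse kernel $G_N(T)$. That power is exponentially small once $N=\Theta(K(S))$.

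\textbf{Step 1: error identity for weighted reuse.} Feed the public inputs $\alpha_j\psi$ for $0\le j\le M-1$, with the private register starting at $\mathbf 0$. Apply $S$ sequentially on $\mathcal H^{(j)}_{\rm pub}\oplus\mathcal H_{\rm priv}$, and at the end project the index register onto $\sum_j\bar\beta_j\ket j$. Here $\alpha,\beta$ are unit vectors prepared by state-preparation unitaries that do not query $S$. Repeating the computation in the proof of \cref{lem:finite-reuse} with $B=(I-T)\Gamma$ and $D+E\Gamma=V$ should give
\begin{equation*}
P_{\alpha,\beta}=\langle\beta,\alpha\rangle\,V-E\,\pi_{\alpha,\beta}(T)\,\Gamma ,
\end{equation*}
where $\pi_{\alpha,\beta}$ is an explicit polynomial built from the convolution of the sequences $\alpha$ and $\beta$. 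Using $E=V\Gamma^\dagger(I-T)$ from \cref{lem:basic_prop_transducer}, I would rewrite the error as $V\Gamma^\dagger\,\rho(T)\,B$, where $\rho(z)=(1-z)\pi_{\alpha,\beta}(z)(1-z)^{-1}$ is a polynomial with $\rho(1)=\langle\beta,\alpha\rangle$.

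In the uniform case $\alpha=\beta=N^{-1/2}\mathbf 1$, this reduces to $\rho=G_N$, which recovers \cref{lem:finite-reuse}.

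\textbf{Step 2: the key norm bound.} Since $G_N(T)=\frac1N(I-T^N)(I-T)^{-1}$ and $\|T\|\le1$ (it is a block of a unitary), we have
\begin{equation*}
\|G_N(T)\|\le \frac{2K(S)}{N}.
\end{equation*}
Taking $N=4K(S)$ therefore gives $\|G_N(T)^m\|\le 2^{-m}$, while $G_N(1)^m=1$.

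The goal is thus to choose $\alpha,\beta$ with
\begin{equation*}
\rho_{\alpha,\beta}=\langle\beta,\alpha\rangle\,G_N^m,\qquad |\langle\beta,\alpha\rangle|=\Omega(1),\qquad m=\Theta(\log(1/\epsilon)).
\end{equation*}
The natural candidates are the normalized coefficient sequences of $G_N^{\lceil m/2\rceil}$ and $G_N^{\lfloor m/2\rfloor}$. These are nonnegative and smooth, of length $\mathcal O(mN)$, and their overlap should be a constant. With such a choice, the error is at most $\|\Gamma\|\,\|B\|\,2^{-m}\le\sqrt{w(S)}\,2^{-m}$, using $\|B\|\le 1$. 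This is $\le\epsilon$ for $m=\mathcal O(\log(1/\epsilon)+\log K(S))$, because $w(S)\le 2K(S)-1$ by \cref{lem:basic_prop_transducer}.

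\textbf{Step 3: normalization and cost.} Dividing by $\langle\beta,\alpha\rangle=\Omega(1)$ and applying a constant-factor uniform amplification (oblivious amplitude amplification, since $V$ is unitary) yields the block-encoding $\widetilde V$. If an exact match of $\rho$ with $G_N^m$ is not available, the fallback is an LCU over finite-reuse circuits $P_{N_i}$ with different lengths. Every polynomial of degree $<\max_i N_i$ lies in the span of the $G_{N_i}$, so the coefficients $b_i$ can be chosen with $\sum_i b_i G_{N_i}=G_N^m$. In either case, the total number of uses of $S$ (and controlled $S$, $S^\dagger$) is $\mathcal O(mN)=\mathcal O(K(S)\log(1/\epsilon))$.

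\textbf{Main obstacle.} The hard part is Step 1--2: realizing exactly the polynomial $G_N^m$, or any polynomial $p$ with $p(1)=1$ and $\|p(T)\|\le\epsilon$, while keeping the normalization $\Theta(1)$. With weighted states, this means verifying that the convolution identity yields $G_N^m$ up to a constant overlap factor. With the LCU fallback, it means controlling $\sum_i|b_i|$. I would first try the weighted-state route, since the products $G_N^a G_N^b$ arise naturally from convolution. I would check it first for $m=2$, and then proceed by induction.
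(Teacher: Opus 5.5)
Your Step 2 is where the argument breaks, and repairing it needs an idea the proposal does not contain.

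\textbf{Where it fails.} Carry out Step 1 explicitly: $q_0=\mathbf{0}$, $o_j=\alpha_jD\psi+Eq_j$, $q_{j+1}=\alpha_jB\psi+Tq_j$. This gives $P_{\alpha,\beta}=\langle\beta,\alpha\rangle V-E\,\pi(T)\,\Gamma$ with $\pi_k=r_k-r_{k+1}$, where $r_k=\sum_i\bar\beta_{i+k}\alpha_i$. This $r$ is a one-sided cross-correlation, not a convolution, and your $\rho$ is just $\pi$.

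With this formula your candidate profiles already fail at $m=2$. The sequences $G_N^{\lceil 1\rceil}$ and $G_N^{\lfloor 1\rfloor}$ both have normalized coefficient vector $N^{-1/2}\mathbf{1}$. So you are back to uniform reuse with $\pi=G_N$, as your own remark in Step 1 shows, not $G_N^2$.

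For larger even $m$ you get $\alpha=\beta$, and $r_k$ is the right half (from the mode on) of the bell-shaped coefficient sequence of $G_N^m$. Then $\pi$ is its one-sided difference, which has a kink at $k=0$. On $|z|=1$ with $|1-z|\approx 1/K(S)$ it is only polynomially small in $m$ (roughly $1/m$ when $N=\Theta(K(S))$), not $2^{-m}$.

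The LCU fallback does not close the gap either. Since $G_1,\dots,G_M$ form a basis, $\sum_n b_nG_n=G_N^m$ forces $b_n=n(g_{n-1}-g_n)$, with $g$ the coefficients of $G_N^m$. Then $\sum_n|b_n|\approx 2c\,g_c=\Theta(\sqrt{m})$, where $c$ is the mode. So the normalization is not $\mathcal{O}(1)$, and amplifying it back costs an extra $\sqrt{\log(1/\epsilon)}$ factor.

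\textbf{The missing idea.} The paper targets not $G_{N_0}^q$ but $G_{R_0}(z)G_{N_0}(z)^q$, with a long flat factor $R_0\ge 2L$, $L=q(N_0-1)$. It realizes this through $\frac{R_0+k}{R_0}P_{R_0+k}-\frac{k}{R_0}P_k=V-ET^kG_{R_0}(T)\Gamma$, weighted by the nonnegative coefficients $b_k$ of $G_{N_0}^q$. This buys two things:
\begin{itemize}
\item The LCU normalization is $\sum_k b_k(R_0+2k)/R_0\le 1+2L/R_0<2$, so robust oblivious amplitude amplification with three calls suffices.
\item The error $\frac{1}{R_0}V\Gamma^\dagger(I-T^{R_0})G_{N_0}(T)^q\Gamma$ has norm at most $2w(S)2^{-q}/R_0\le 2^{-q}$. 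This also removes the extra $\log K(S)$ you needed in $m$.
\end{itemize}

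Your weighted-state route can be repaired with the same polynomial. Take $\alpha$ uniform of length $M=R_0+L$, and $\beta=g/\|g\|_2$ with $g$ the coefficient vector of $G_{R_0}G_{N_0}^q$. Then $r_k=M^{-1/2}\sum_{j\ge k}\bar\beta_j$, so $\pi=\langle\beta,\alpha\rangle\,G_{R_0}G_{N_0}^q$. Moreover $\langle\beta,\alpha\rangle=(\sqrt{M}\|g\|_2)^{-1}\ge\sqrt{R_0/M}\ge\sqrt{2/3}$, because $\|g\|_\infty\le 1/R_0$. This version even avoids a select over different reuse lengths.

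As written, however, the chosen profiles and the claimed identity $\rho=\langle\beta,\alpha\rangle G_N^m$ are incorrect. The $\mathcal{O}(K(S)\log(1/\epsilon))$ bound is therefore not established.
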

\begin{proof}

    Following the notation in~\cref{lem:finite-reuse}, let $P_N$ denote the matrix constructed through the length-$N$ finite reuse circuit. 
    We consider the linear combination of finite reuse circuits with different values of $N$ as 
    \begin{equation}\label{eqn:proof_high_order_finite_reuse_def_Vt}
        \widetilde{V} = \sum_{k=0}^{L} b_k \left( \frac{R_0+k}{R_0} P_{R_0+k} - \frac{k}{R_0} P_k \right). 
    \end{equation}
    Here $b_k$, $L$ and $R_0$ are some parameters to be determined later, but we first suppose that $b_k \geq 0$ and $\sum b_k = 1$. 
    For any fixed $k$, according to~\cref{lem:finite-reuse}, 
    \begin{align}
        \frac{R_0+k}{R_0} P_{R_0+k} - \frac{k}{R_0} P_k &= \frac{R_0+k}{R_0} ( V - E G_{R_0+k}(T) \Gamma )- \frac{k}{R_0} ( V - E G_{k}(T) \Gamma ) \\
        & = V - E \left( \frac{R_0+k}{R_0} G_{R_0+k}(T) - \frac{k}{R_0}G_{k}(T) \right) \Gamma \\
        & = V - E \left( \frac{1}{R_0} \sum_{j=k}^{R_0+k-1} T^j \right) \Gamma \\
        &= V - E T^k G_{R_0}(T) \Gamma. 
    \end{align}
    Then
    \begin{equation}
        \widetilde{V} = \sum_{k=0}^{L} b_k V - \sum_{k=0}^{L} b_k E T^k G_{R_0}(T) \Gamma  = V - E \Phi(T) G_{R_0}(T) \Gamma, 
    \end{equation}
    where we have used $\sum b_k = 1$ and denote 
    \begin{equation}
        \Phi(z) = \sum_{k=0}^L b_k z^k. 
    \end{equation}
    
    To bound this error, we first rewrite 
    \begin{equation}
        G_{R_0}(T) = \frac{1}{R_0} \sum_{k=0}^{R_0-1} T^k = \frac{1}{R_0} (I-T)^{-1} (I - T^{R_0}). 
    \end{equation}
    Notice that the matrices $(I-T)^{-1}, \Phi(T), I - T^{R_0}$ commute with one another. 
    Hence 
    \begin{align}
        V - \widetilde{V} &= \frac{1}{R_0}  E \Phi(T)(I-T)^{-1} (I - T^{R_0}) \Gamma \\
        &= \frac{1}{R_0}  E (I-T)^{-1} (I - T^{R_0}) \Phi(T) \Gamma \\
        & = \frac{1}{R_0}  V \Gamma^{\dagger} (I - T^{R_0}) \Phi(T) \Gamma
    \end{align}
    where the last line follows from~\cref{lem:basic_prop_transducer}. 
    Taking the norm, we have 
    \begin{equation}\label{eqn:proof_high_order_finite_reuse_error}
        \|V - \widetilde{V}\| \leq \frac{w(S)}{R_0} \| I - T^{R_0} \| \|\Phi(T)\|. 
    \end{equation}

    In order to control the error, we need to construct proper $\Phi(T)$ with sufficiently small norm. 
    Notice that $\Phi(1) = 1$ due to $\sum b_k = 1$, so we have to bound $T$ away from $1$, which, to some extent, can be achieved by leveraging the invertible assumption of $(I-T)$. 
    Motivated by this, the idea of the construction is to choose\footnote{Notice that $G_{N_0}(1) = 1$ and thus $\Phi(1) = 1$ for any $N_0$ and $q$, and all the coefficients of $G_{N_0}(z)$ are positive. Hence the conditions $b_k \geq 0$ and $\sum b_k = 1$ are automatically satisfied.}
    \begin{equation}
        \Phi(z) = (G_{N_0}(z))^q
    \end{equation}
    for some integers $N_0$ and $q$ (so the degree $L = q(N_0-1)$), select $N_0$ such that $\|G_{N_0}(T)\|$ (which depends on $\|(I-T)^{-1}\|$) is strictly smaller than $1$, and choose $q$ to be logarithmic in $1/\epsilon$ to achieve the desired accuracy. 
    Specifically, using $G_{N_0}(T) = \frac{1}{N_0} (I-T)^{-1} (I - T^{N_0})$ again, and noticing that $\|T\| \leq 1$ since it is a sub-block of the unitary $S$, we have 
    \begin{equation}
        \|G_{N_0}(T)\| \leq \frac{2}{N_0} K(S) \leq \frac{1}{2}, 
    \end{equation}
    by choose 
    \begin{equation}
        N_0 = \lceil 4K(S) \rceil. 
    \end{equation}
    Then $\|\Phi(T)\| = \|(G_{N_0}(T))^q\| \leq 2^{-q}$, and thus~\cref{eqn:proof_high_order_finite_reuse_error} becomes 
    \begin{equation}
        \|V - \widetilde{V}\| \leq \frac{2 w(S)}{R_0} \frac{1}{2^q}. 
    \end{equation}
    To force the right hand side to be smaller than $\epsilon$, it suffices to choose 
    \begin{equation}\label{eqn:proof_high_order_finite_reuse_parameter_choice}
        q = \left\lceil \log_2 \left( \frac{1}{\epsilon}\right) \right\rceil, \quad  L = q(N_0-1) = \left\lceil \log_2 \left( \frac{1}{\epsilon}\right) \right\rceil \left( \lceil 4K(S) \rceil - 1 \right), \quad R_0 = 2\left\lceil \log_2 \left( \frac{1}{\epsilon}\right) \right\rceil \lceil 4K(S) \rceil . 
    \end{equation}
    Then $2^{-q} \leq \epsilon$, and $R_0 \geq 2(2 K(S) - 1) \geq 2w(S)$ according to~\cref{lem:basic_prop_transducer}, which together imply $\|V - \widetilde{V}\| \leq \epsilon$. 
    Notice that $R_0$ does not need to be as large as $\mathcal{O}(L)$ if we only wish to bound the error by $\epsilon$, but such a larger choice is necessary to ensure the normalization factor of the LCU implementation bounded, as we will discussed next. 

    To implement~\cref{eqn:proof_high_order_finite_reuse_def_Vt}, we can directly apply standard LCU in \Cref{lem:standard_lcu}, which gives a block-encoding of $\widetilde{V}/\alpha$ with normalization factor 
    \begin{equation}
        \alpha = \sum_{k=0}^L b_k \left( \frac{R_0+k}{R_0} + \frac{k}{R_0} \right) = \sum_{k=0}^L b_k \frac{R_0+2k}{R_0} \leq 1 + \frac{2L}{R_0} < 2,  
    \end{equation}
    where the last inequality is $R_0 \geq 2L$, guaranteed by the choice of $R_0$ and $L$ in~\cref{eqn:proof_high_order_finite_reuse_parameter_choice}. 
    We relax this normalization factor to exactly $2$ by append an additional ancilla qubit and perform a rotation gate. 
    Then, we apply the robust oblivious amplitude amplification technique~\cite{BerryChildsCleveEtAl2015} using $3$ calls to the LCU circuit to boost the normalization factor to $1$ with block-encoding error being $\mathcal{O}(\epsilon)$. 
    
    This procedure takes $\mathcal{O}(1)$ queries to the select oracle of $P_j$'s block-encodings. 
    As, according to~\cref{lem:finite-reuse}, the block-encoding of $P_j$ is constructed mainly by the sequentially application of a single transducer $S$ for $j$ times, the select oracle of $P_j$'s block-encodings can be constructed by imposing proper control logic on the most expensive $P_j$, which is $P_{R_0+L}$. 
    This requires $\mathcal{O}(R_0+L)$ calls to $S$, which yields the desired complexity according to the choice of $R_0$ and $L$ in~\cref{eqn:proof_high_order_finite_reuse_parameter_choice}. 
\end{proof}

\begin{remark}[Comparison with~\cref{lem:trans-to-alg}]
\label{rmk:finite-reuse-comparison}
For a fixed input $\psi$, the standard transducer implementation achieves state error $\epsilon$
using
\be
    \mathcal{O}\left(
        1+\frac{W(S,\psi)}{\epsilon^2}
    \right)
\ee
controlled applications of $S$. In particular, using
$W(S,\psi)\leq w(S)$ gives a worst-case bound
$\mathcal{O}(1+w(S)/\epsilon^2)$.
By contrast, \cref{thm:finite-reuse-high-order} gives an
operator-norm approximation to the entire public action using
\be
    \mathcal{O}\left(
        K(S)\log\frac1\epsilon
    \right)
\ee
queries, provided a bound on $K(S)=\|(I-T)^{-1}\|$ is available.
Thus, for a fixed transducer with controlled resolvent norm, the
dependence on the target precision improves from inverse polynomial
to logarithmic. This is not a uniform improvement in the transducer
parameter: by \cref{lem:basic_prop_transducer}, we have $w(S)\leq 2K(S)-1$,
and $K(S)$ can in general be substantially larger than $w(S)$.
The advantage of the high-order construction is therefore most useful
when the resolvent norm can be bounded comparably to, or only mildly
worse than, the catalyst complexity, as occurs in the constructions
used in this work.
\end{remark}

\section{Transducer-based linear combination of unitaries algorithm}\label{sec:transducer_LCU}

Now we are ready to establish our transducer-based LCU algorithm. 
We first discuss an important subroutine that constructs a transducer of the Cayley transformation of LCU with normalized coefficients. 
Then we present the entire transducer-based LCU algorithm and analyze its complexity.

\subsection{Cayley transformation of LCU with normalized coefficients}\label{sec:transducer_LCU_normalized}

In this subsection, let $V_j$ be a Hermitian block-encoding of $H_j$, and suppose that $V_j$ satisfies~\cref{assump:U}. 
We consider the linear combination 
\begin{equation}
    Y = \sum_{j=0}^{J-1} q_j H_j, 
\end{equation}
where we assume $q_j \geq 0$ and $\sum q_j = 1$. 
Let $V$ be the select oracle of $V_j$, i.e., 
\begin{equation}
    V = \sum_{j=0}^{J-1} \ket{j}\bra{j} \otimes V_j, 
\end{equation}
and $U_p$ prepares the coefficient state in the sense that 
\begin{equation}
    U_p \ket{0} = \sum_{j=0}^{J-1} \sqrt{q_j} \ket{j}. 
\end{equation}

The subroutine is as follows. 
The full space consists of four registers: the system register which is used for the data, the signal register which serves as the ancilla register of the block-encoding, the index register for $j$ in the linear combination, and the clock register introduced by the transducer. 
We first apply~\cref{lem:composition_unitary} to construct a transducers $S_{V_j} = [0, E_{V_j}; B_{V_j}, T_{V_j}]$ of $V_j$, then consider $\oplus S_{V_j}$ with reordering all the public sectors first then followed by private sections to get $S_V$. 
Specifically, $S_V = [0,E_{V}; B_V, T_V]$ where $E_V = \diag(E_{V_0},\cdots,E_{V_{J-1}})$, $B_V = \diag(B_{V_0},\cdots,B_{V_{J-1}})$, and $T_V = \diag(T_{V_0},\cdots,T_{V_{J-1}})$. 
Then we construct 
\begin{equation}\label{eqn:def_LCU_Cayley_transducer}
    S = \Phi \mathcal{U}_p^{\dagger} S_V \mathcal{U}_p. 
\end{equation}
Here 
\begin{equation}
    \mathcal{U}_p = \ket{0}\bra{0} \otimes U_p \otimes I_{\operatorname{sig,sys}} + \sum_{k \neq 0} \ket{k}\bra{k} \otimes I_{\operatorname{ind,sig,sys}}, 
\end{equation}
and, using the same block division as in $S_V$, 
\begin{equation}
    \Phi = \left( \begin{array}{cc}
        \Phi_0 & 0 \\
        0 & I
    \end{array} \right), \quad \Phi_0 = i (I - 2 \ket{0}\bra{0}\otimes \ket{0}\bra{0} \otimes I). 
\end{equation}
Roughly speaking, $\Phi$ appends $-i$ when all the ancilla registers are $0$, $i$ when the clock register is zero but the index and signal are not all zero, and does not append any factor otherwise. 
If we declare the space with all zero ancilla registers (but excluding the ancilla) to be the public space, then~\cref{eqn:def_LCU_Cayley_transducer} is a transducer of $\operatorname{Cay}(Y)$, as shown in the next result. 
Here the Cayley transformation is defined as 
\begin{equation}
    \operatorname{Cay}(Y) = (I - iY) (I + iY)^{-1}. 
\end{equation}

\begin{theorem}[Cayley transformation of LCU]\label{thm:Cayley_LCU}
     Let $V_j$ be a Hermitian block-encoding of $H_j$, and suppose that $V_j$ satisfies~\cref{assump:U} with $C_j$ oracle queries. 
     Let $Y = \sum_{j=0}^{J-1} q_j H_j$ with $q_j \geq 0$ and $\sum q_j = 1$. 
     Then, the circuit $S$ defined in~\cref{eqn:def_LCU_Cayley_transducer} gives a transducer of $\operatorname{Cay}(Y)$ with catalyst $\Gamma$, such that 
     \begin{equation}
         \Gamma^{\dagger}\Gamma = 2(1+\sum q_j C_j) (I+Y^2)^{-1} - I, 
     \end{equation}
     \begin{equation}
         w(S) \leq 1 + 2\sum q_j C_j, \quad K(S) \leq 6 \max C_j, 
     \end{equation}
     using $\mathcal{O}(1)$ queries to the primitive oracles. 
\end{theorem}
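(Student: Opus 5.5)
The plan is to view $S$ in \cref{eqn:def_LCU_Cayley_transducer} as a two-level Schur reduction. At the inner level, $S' := \mathcal U_p^\dagger S_V\mathcal U_p$ is a transducer whose action is the unitary $W := U_p^\dagger V U_p$ on $\mathcal H' := $ index$\otimes$signal$\otimes$system (clock $=0$). At the outer level, $\Phi_0$ is applied, and the part $(I-\Pi)\mathcal H'$ is declared private, where $\Pi=\ket{0}\bra{0}\otimes\ket{0}\bra{0}\otimes I$.

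\textbf{Step 1: the inner transducer and its catalyst Gram matrix.} By \cref{lem:composition_unitary}, each $S_{V_j}$ has $D=0$ and a catalyst $\Gamma_{V_j}$ with $\|\Gamma_{V_j}\psi\|^2=C_j\|\psi\|^2$. Hence $\Gamma_{V_j}^\dagger\Gamma_{V_j}=C_jI$, and $S_V$ transduces $V$ with $\Gamma_V^\dagger\Gamma_V=\mathcal C:=\sum_j\ket{j}\bra{j}\otimes C_jI$. Conjugating by $\mathcal U_p$ (an oracle-independent gate on the public side, as discussed before \cref{cor:composition_general}), $S'$ transduces $W$ with catalyst $\Gamma_VU_p$ and Gram matrix $\mathcal D:=U_p^\dagger\mathcal C U_p$. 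Two facts are key. First, $W$ is a Hermitian unitary, because each $V_j$ is, and $\Pi W\Pi=Y$. Second, $\mathcal C$ commutes with $V$ (both are block diagonal in $j$), so $\mathcal D$ commutes with $W$. Moreover $\Pi\mathcal D\Pi=\overline C\,\Pi$ with $\overline C=\sum_jq_jC_j$. Implementing $S$ needs one call to $S_V$, i.e.\ one select call to the primitive oracles, which is $\mathcal O(1)$ queries by \cref{assump:U}.

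\textbf{Step 2: the public action and the catalyst identity.} I look for $z=x\oplus y\in\Pi\mathcal H'\oplus(I-\Pi)\mathcal H'$ with $i(I-2\Pi)Wz=\operatorname{Cay}(Y)x\oplus y$. This condition is equivalent to
\[
(I-iW)z=\bigl(x+\operatorname{Cay}(Y)x\bigr)\oplus 0 .
\]
Since $W^2=I$, we have $(I-iW)^{-1}=(I+iW)/2$. Together with $x+\operatorname{Cay}(Y)x=2(I+iY)^{-1}x$, this gives the explicit solution $z=(I+iW)u$, where $u=(I+iY)^{-1}x$. So the catalyst of $S$ is $\Gamma x=y\oplus\Gamma_VU_pz$, and the defining relation of \cref{def:transducer} follows from Step 1 (the inner catalyst is returned unchanged).

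For the norm, $\|y\|^2=\|z\|^2-\|x\|^2=2\|u\|^2-\|x\|^2$, using $(I-iW)(I+iW)=2I$. Using commutation,
\[
z^\dagger\mathcal D z=u^\dagger(I-iW)\mathcal D(I+iW)u=2u^\dagger\mathcal Du=2\overline C\|u\|^2 .
\]
Since $\|u\|^2=x^\dagger(I+Y^2)^{-1}x$, summing the two contributions yields $\Gamma^\dagger\Gamma=2(1+\overline C)(I+Y^2)^{-1}-I$. Then $w(S)\le1+2\overline C$ follows from $(I+Y^2)^{-1}\le I$.

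\textbf{Step 3: the resolvent bound, which I expect to be the main obstacle.} The private block of $S$ acts on $(I-\Pi)\mathcal H'\oplus\mathcal H_{\rm priv}(S_V)$. I will solve $(I-T)(h\oplus v)=a\oplus b$ by block elimination, as in the proof of \cref{lem:composition_transducer}.

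First, eliminate $v$ using $v=\Gamma' h' +(I-T')^{-1}b$, where $h'$ is the (embedded) public input of $S'$ and $T'$ is the private block of $S'$. The cross term is handled by the identity $E'(I-T')^{-1}=W\Gamma'^\dagger$ from \cref{lem:basic_prop_transducer}, which has norm at most $\sqrt{C_{\max}}$. This reduces the $h$-equation to one whose operator is $I-i(I-\Pi)W(I-\Pi)$. Because $(I-\Pi)W(I-\Pi)$ is Hermitian, this operator has an inverse of norm at most $1$. That gives $\|h\|\le\|a\|+\sqrt{C_{\max}}\|b\|$.

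Next, back-substitute: $\|v\|\le K(S_V)\|b\|+\sqrt{C_{\max}}\|h\|$, with $K(S_V)\le C_{\max}$ by \cref{lem:composition_unitary}. Collecting terms and using $C_{\max}\ge1$ should give $\|(I-T)^{-1}\|\le 1+4C_{\max}\le 6C_{\max}$.

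The delicate point is bookkeeping. The $\mathcal U_p$ conjugation mixes the clock-zero sector of $S_V$ into the new private space, so I must check carefully which blocks of $S'$ feed into $T$. If the constant comes out slightly worse, it can still be absorbed, provided each cross term is bounded by $\sqrt{C_{\max}}$ or $C_{\max}$ as above.
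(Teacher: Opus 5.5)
Your proposal is correct and is essentially the paper's proof. Step 3 is the same block elimination: eliminate the $S_V$-private variables, invert $I-iZ$ with $Z=\mathcal J_\perp^\dagger V\mathcal J_\perp$ Hermitian, and bound the four blocks by $1,\sqrt{C_{\max}},\sqrt{C_{\max}},2C_{\max}$. Step 2 repackages the paper's componentwise computation $\|y_j\|^2=2q_j\|(I+iY)^{-1}\psi\|^2$ via the global identities $(I-iW)^{-1}=(I+iW)/2$ and $\mathcal D W=W\mathcal D$, and your $z$ maps under $U_p$ to exactly the paper's vector $y=(I+iV)\mathcal J(I+iY)^{-1}\psi$.

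Two small points. You should record that $\Pi z=x$ (this follows from $\Pi W\Pi=Y$), since that is what confirms the solution is consistent. Your bookkeeping worry is unnecessary: $\mathcal U_p$ acts only on the clock-zero sector, so the private block of $S'$ is exactly $T_V$.
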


\begin{proof}

    We first define some notations which will be used through this proof. 
    Suppose the dimension of $H_j$ is $d$. 
    Let 
    \begin{equation}
        V = \sum_{j=0}^{J-1} \ket{j}\bra{j} \otimes V_j,
    \end{equation}
    and 
    \begin{equation}
        \mathcal{J}_0 = \ket{0}_{\text{sig}} \otimes I_{\text{sys}}, \quad \mathcal{J} = \sum_{j=0}^{J-1} \sqrt{q_j} \ket{j} \otimes \mathcal J_0 = (U_p\otimes I)(\ket{0}\otimes \mathcal J_0). 
    \end{equation}
    Then 
    \begin{equation}
        (U_p \otimes I_{\text{sig,sys}}) ( \ket{0}\ket{0} \otimes I ) = (U_p\ket{0})\ket{0}\otimes I = \mathcal{J}, 
    \end{equation}
    i.e., the first $d$ column of $U_p \otimes I_{\text{sig,sys}}$ is $\mathcal{J}$. 
    Then we can write the full representation of $U_p \otimes I_{\text{sig,sys}}$ as 
    \begin{equation}
        U_p \otimes I_{\text{sig,sys}} = (\begin{array}{cc}
            \mathcal{J} & \mathcal{J}_{\perp}
        \end{array}), 
    \end{equation}
    where $\mathcal{J}^{\dagger} \mathcal{J}_{\perp} = 0, \mathcal{J}_{\perp}^{\dagger} \mathcal{J}_{\perp} = I$. 
    Additionally, by the standard LCU calculation, we have 
    \begin{equation}
        \mathcal{J}^{\dagger} V \mathcal{J} = Y. 
    \end{equation} 

    The algorithm first applies~\cref{lem:composition_unitary} to construct transducers $S_{V_j} = [0, E_{V_j}; B_{V_j}, T_{V_j}]$ of $V_j$ using $\mathcal{O}(1)$ queries, then construct $S_V = [0,E_{V}; B_V, T_V]$ where $E_V = \diag(E_{V_0},\cdots,E_{V_{J-1}})$, $B_V = \diag(B_{V_0},\cdots,B_{V_{J-1}})$, and $T_V = \diag(T_{V_0},\cdots,T_{V_{J-1}})$. 
    Since $S_V$ is basically a block reordering of the parallel controlled applications of $S_{V_j}$, constructing of $S_V$ can be also achieved using $\mathcal{O}(1)$ queries. 
    Therefore, the query complexity of $S$ defined in~\cref{eqn:def_LCU_Cayley_transducer} is $\mathcal{O}(1)$. 

    Using the block representations 
    \begin{equation}
        S_V = \left( \begin{array}{cc}
            0 & E_V \\
            B_V & T_V
        \end{array} \right), \quad \mathcal{U}_p = \left( \begin{array}{ccc}
            \mathcal{J} & \mathcal{J}_{\perp} & 0 \\
            0 & 0 & I 
        \end{array}\right), \quad \Phi = \left( \begin{array}{ccc}
            -i I & & \\
             & i I & \\
             & & I
        \end{array} \right), 
    \end{equation}
    then we have 
    \begin{equation}\label{eqn:proof_Cayley_LCU_S_block}
        S = \left( \begin{array}{ccc}
            0 & 0 & -i \mathcal{J}^{\dagger} E_V \\
            0 & 0 & i \mathcal{J}_{\perp}^{\dagger} E_V\\
            B_V \mathcal{J} & B_V \mathcal{J}_{\perp} & T_V
        \end{array} \right) =: \left( \begin{array}{cc}
            0 & E \\
            B & T
        \end{array} \right), 
    \end{equation}
    where 
    \begin{equation}\label{eqn:proof_Cayley_LCU_def_EBT}
        E = \left( \begin{array}{cc}
            0 & -i \mathcal{J}^{\dagger} E_V
        \end{array} \right), \quad B = \left( \begin{array}{c}
            0 \\ 
            B_V \mathcal{J}
        \end{array} \right), \quad T = \left( \begin{array}{cc}
            0 & i \mathcal{J}_{\perp}^{\dagger} E_V\\
             B_V \mathcal{J}_{\perp} & T_V
        \end{array} \right). 
    \end{equation}

     Now let us verify that $S$ is the transducer of $\operatorname{Cay}(Y)$. 
     To this end, for any vector $\psi$, let us solve the equation 
     \begin{equation}
         S \left( \begin{array}{c}
              \psi \\
              c \\
              h
         \end{array} \right) = \left( \begin{array}{c}
              \widetilde{V} \psi \\
              c \\
              h
         \end{array} \right)
     \end{equation}
     where $\widetilde{V}$ is the unknown action matrix, and $c$ and $h$ are supposed to be the linear transformation of $\psi$. 
     Then we have 
     \begin{align}
         \widetilde{V}\psi &= -i \mathcal{J}^{\dagger} E_V h \\
         c &= i \mathcal{J}_{\perp}^{\dagger} E_V h \\
         h &=  B_V (\mathcal{J} \psi + \mathcal{J}_{\perp} c ) + T_V h. 
     \end{align}
     Let 
     \begin{equation}\label{eqn:proof_Cayley_LCU_y}
         y = \mathcal{J} \psi + \mathcal{J}_{\perp} c, 
     \end{equation}
     then we have 
     \begin{equation}\label{eqn:proof_Cayley_LCU_h}
         h = (I-T_V)^{-1} B_V y, 
     \end{equation}
     and thus 
     \begin{equation}\label{eqn:proof_Cayley_LCU_c}
         c = i \mathcal{J}_{\perp}^{\dagger} V y = i X^{\dagger} \psi + i Z c, \quad \widetilde{V}\psi = -i \mathcal{J}^{\dagger} V y = -i Y \psi -i X c,  
     \end{equation}
     where we use $E_V (I-T_V)^{-1} B_V = V$, which is a direct consequence of $E_{V_j} (I-T_{V_j})^{-1} B_{V_j} = V_j$ from~\cref{lem:composition_unitary}, and further denote 
     \begin{equation}
         X = \mathcal{J}^{\dagger} V \mathcal{J}_{\perp}, \quad Z = \mathcal{J}_{\perp}^{\dagger} V \mathcal{J}_{\perp}. 
     \end{equation}

     To further solve~\cref{eqn:proof_Cayley_LCU_h} and~\cref{eqn:proof_Cayley_LCU_c}, we need to note that 
     $(U_p\otimes I)^{\dagger} V (U_p\otimes I) = [Y, X; X^{\dagger},Z]$ is a Hermitian unitary and thus 
     \begin{equation}
         XX^{\dagger} = I - Y^2, \quad X^{\dagger} Y = -Z X^{\dagger}. 
     \end{equation}
     Then the first equation of~\cref{eqn:proof_Cayley_LCU_c} becomes 
     \begin{equation}\label{eqn:proof_Cayley_LCU_c_solu}
         c = i (I - iZ)^{-1} X^{\dagger} \psi = i X^{\dagger} (I + iY)^{-1} \psi, 
     \end{equation}
     where the last equation can be verified using $X^{\dagger} Y = -Z X^{\dagger}$. 
     Plugging~\cref{eqn:proof_Cayley_LCU_c_solu} back to the second equation in~\cref{eqn:proof_Cayley_LCU_c} gives 
     \begin{align}
         \widetilde{V}\psi & = -i Y \psi + X X^{\dagger} (I + iY)^{-1} \psi \\
         & = -i Y \psi + (I-Y^2) (I + iY)^{-1} \psi \\
         & = (-iY(I+iY) + I - Y^2 ) (I + iY)^{-1} \psi \\
         & = (I-iY)(I+iY)^{-1} \psi, 
     \end{align}
     so the action $\widetilde{V} = \operatorname{Cay}(Y)$. 
     Using~\cref{eqn:proof_Cayley_LCU_y},~\cref{eqn:proof_Cayley_LCU_h} and~\cref{eqn:proof_Cayley_LCU_c_solu}, we can also write 
     \begin{align}
         y &= \mathcal{J} \psi + i \mathcal{J}_{\perp} X^{\dagger} (I + iY)^{-1} \psi \\
         & = ( \mathcal{J}  + i (\mathcal{J} Y + \mathcal{J}_{\perp} X^{\dagger} ) ) (I + iY)^{-1} \psi \\
         & =  ( \mathcal{J}  + i (\mathcal{J} \mathcal{J}^{\dagger} V \mathcal{J}  + \mathcal{J}_{\perp} \mathcal{J}_{\perp}^{\dagger} V \mathcal{J} ) ) (I + iY)^{-1} \psi \\
         & = ( I  + i V ) \mathcal{J} (I + iY)^{-1} \psi, 
     \end{align}
     and thus 
     \begin{align}
         h = (I-T_V)^{-1} B_V ( I  + i V ) \mathcal{J} (I + iY)^{-1} \psi. 
     \end{align}
     Then the catalyst map is given as 
     \begin{equation}\label{eqn:proof_Cayley_LCU_Gamma}
         \Gamma \psi = \left( \begin{array}{c}
             c \\
             h
         \end{array} \right) = \left( \begin{array}{c}
             i X^{\dagger} (I + iY)^{-1}  \\
             (I-T_V)^{-1} B_V ( I  + i V ) \mathcal{J} (I + iY)^{-1} 
         \end{array} \right) \psi. 
     \end{equation}
     
     Finally, let us derive the the expression of $\Gamma^{\dagger}\Gamma$ and bound the related norms. 
     We start with $y$ defined in~\cref{eqn:proof_Cayley_LCU_y} and write 
     \begin{equation}
         y = \sum_{j=0}^{J-1} \ket{j} \otimes y_j. 
     \end{equation}
     Then each component $y_j$ is 
     \begin{align}
         y_j & = (\bra{j}\otimes I) y = (\bra{j}\otimes I) ( I  + i V ) \mathcal{J} (I + iY)^{-1} \psi  \\
         & = (\bra{j}\otimes I) ( I  + i V ) \left( \sum_{j'=0}^{J-1} \sqrt{q_{j'}} \ket{j'} \mathcal{J}_0 \right) (I + iY)^{-1} \psi \\
         & = \sqrt{q_j} (I + i V_j) \mathcal{J}_0 (I + iY)^{-1} \psi. 
     \end{align}
     Since $V_j$ is Hermitian unitary, we have $(I + i V_j)^{\dagger}(I + i V_j) = 2I$, and thus 
     \begin{equation}
         \|y_j\|^2 = 2 q_j \| (I + iY)^{-1} \psi \|^2, \quad \|y\|^2 = \sum \|y_j\|^2 = 2 \| (I + iY)^{-1} \psi \|^2. 
     \end{equation}
     Notice that we have $h = (I-T_V)^{-1} B_{V} y = \sum_{j=0}^{J-1} \ket{j}\otimes (I-T_{V_j})^{-1} B_{V_j} y_j $, 
     and, according to~\cref{lem:composition_unitary} that $\|(I-T_{V_j})^{-1} B_{V_j} y_j\|^2 = C_j \|y_j\|^2$, we have 
     \begin{equation}
         \|h\|^2 = \|(I-T_V)^{-1} B_{V} y\|^2 = \sum_{j=0}^{J-1} C_j \|y_j\|^2 = 2 \left(\sum_{j=0}^{J-1} q_j C_j\right) \| (I+iY)^{-1} \psi \|^2. 
     \end{equation}
     Also 
     \begin{equation}
         \| (I + iY)^{-1} \psi \|^2 = \psi^{\dagger} (I -iY^{\dagger})^{-1} (I + iY)^{-1} \psi = \psi^{\dagger} (I+Y^2)^{-1} \psi, 
     \end{equation}
     and then (using $y = \mathcal{J} \psi + \mathcal{J}_{\perp} c$ to be an orthogonal decomposition)
     \begin{equation}
         \|c\|^2 = \|y\|^2 - \|\psi\|^2 = 2 \| (I + iY)^{-1} \psi \|^2 -\|\psi\|^2 .
     \end{equation}
     Therefore 
     \begin{align}
         \psi^{\dagger} \Gamma^{\dagger}\Gamma \psi & = \|\Gamma \psi\|^2 = \|c\|^2 + \|h\|^2 = 2(1+\sum q_j C_j) \| (I + iY)^{-1} \psi\|^2 - \|\psi\|^2 \\
         &= \psi^{\dagger} \left( 2(1+\sum q_j C_j) (I+Y^2)^{-1} - I \right) \psi, 
     \end{align}
     which implies that 
     \begin{equation}
          \Gamma^{\dagger}\Gamma = 2(1+\sum q_j C_j) (I+Y^2)^{-1} - I, 
     \end{equation}
     and (using $(I+Y^2)^{-1} \preceq I $)
     \begin{equation}
         w(S) = \|\Gamma\|^2 = \sup_{\|\psi\|=1} \psi^{\dagger} \left( 2(1+\sum q_j C_j) (I+Y^2)^{-1} - I \right) \psi \leq 1 + 2 \sum q_j C_j. 
     \end{equation}

     To bound $K(S)$, from~\cref{eqn:proof_Cayley_LCU_def_EBT} we write 
     \begin{equation}
         I - T = \left( \begin{array}{cc}
            I & - i \mathcal{J}_{\perp}^{\dagger} E_V\\
             - B_V \mathcal{J}_{\perp} & I-T_V
        \end{array} \right). 
     \end{equation}
     To invert it, let us solve $(I-T)[x_0;x_1] = [z_0;z_1]$, i.e., 
     \begin{align}
         x_0 - i \mathcal{J}_{\perp}^{\dagger} E_V x_1 &= z_0 \\
         - B_V \mathcal{J}_{\perp} x_0 + (I-T_V) x_1 &= z_1. 
     \end{align}
     Pluging the second equation back to the first equation gives 
     \begin{equation}
         x_0 - i \mathcal{J}_{\perp}^{\dagger} E_V (I-T_V)^{-1}( B_V \mathcal{J}_{\perp} x_0 + z_1 ) = z_0, 
     \end{equation}
     which, using $E_V (I-T_V)^{-1} B_V = V$, becomes 
     \begin{equation}
         x_0 = (I - iZ)^{-1} z_0 + i (I - iZ)^{-1} \mathcal{J}_{\perp}^{\dagger} E_V (I-T_V)^{-1} z_1. 
     \end{equation}
     The $x_1$ component becomes 
     \begin{align}
         x_1 &= (I-T_V)^{-1} B_V \mathcal{J}_{\perp} x_0 + (I-T_V)^{-1}z_1 \\
         &= (I-T_V)^{-1} B_V \mathcal{J}_{\perp}(I - iZ)^{-1} z_0 \\
         &\quad + i (I-T_V)^{-1} B_V \mathcal{J}_{\perp} (I - iZ)^{-1} \mathcal{J}_{\perp}^{\dagger} E_V (I-T_V)^{-1} z_1 + (I-T_V)^{-1}z_1. 
     \end{align}
     The expressions of $x_0$ and $x_1$ means that $(I-T)^{-1}$ equals
     {\small
     \begin{equation}
         \left( \begin{array}{cc}
             (I - iZ)^{-1} & i (I - iZ)^{-1} \mathcal{J}_{\perp}^{\dagger} E_V (I-T_V)^{-1} \\
             (I-T_V)^{-1} B_V \mathcal{J}_{\perp}(I - iZ)^{-1} & (I-T_V)^{-1} + i (I-T_V)^{-1} B_V \mathcal{J}_{\perp} (I - iZ)^{-1} \mathcal{J}_{\perp}^{\dagger} E_V (I-T_V)^{-1}
         \end{array} \right). 
     \end{equation}
     }
     Since $((I-iZ)^{-1})^{\dagger}(I-iZ)^{-1} = (I+Z^2)^{-1} \preceq I$ and $\|(I-iZ)^{-1}\| \leq 1$, and according to~\cref{lem:composition_unitary}, 
     \begin{align}
         \|(I-T_V)^{-1}B_V\| &\leq \max \|(I-T_{V_j})^{-1}B_{V_j}\| = \max \sqrt{C_j}, \\
         \|E_V(I-T_V)^{-1}\| &\leq \max \| E_{V_j}(I-T_{V_j})^{-1} \| = \max \sqrt{C_j}, \\
         \|(I-T_V)^{-1}\| &\leq \max \|(I-T_{V_j})^{-1}\| \leq \max C_j, 
     \end{align}
     the spectral norms of the four blocks of $(I-T)^{-1}$ are bounded by $1$, $\max \sqrt{C_j}$, $\max \sqrt{C_j}$, and $2 \max C_j$, respectively. 
     Then 
     \begin{equation}
         K(S) = \|(I-T)^{-1}\| \leq 1 + 2 \max \sqrt{ C_j} + 2 \max C_j \leq 6 \max C_j. 
     \end{equation}
\end{proof}

\subsection{Algorithm description}\label{sec:algorithm_description}

We first introduce or recall a few notations: 
\begin{equation}
    \widehat{A}_j = \frac{A_j}{\alpha_j}, \quad \lambda = \sum_{j=0}^{J-1} c_j \alpha_j, \quad p_j = \frac{c_j \alpha_j}{\lambda}, \quad \overline{C} = \sum_{j=0}^{J-1} p_j C_j, \quad C_{\max} = \max_j C_j. 
\end{equation}
Consider the linear combination 
\begin{equation}
    A = \sum_{j=0}^{J-1} c_j A_j, \quad \|A\| \leq a
\end{equation}
where $a$ is a known upper bound of $\|A\|$, and we assume $a \leq \lambda$ as $\lambda$ is the naive upper bound. 
The full transducer-based LCU algorithm is described as follows. 

\paragraph{Step 1: Construct the Cayley LCU transducer.} In this step we will construct a transducer of
\begin{equation}
    \operatorname{Cay}\left(\frac{A}{4\lambda}\right), \quad \text{where}\quad 
    \frac{A}{4\lambda} = \sum_{j=0}^{J-1} \frac{p_j}{4} \widehat{A}_j. 
\end{equation}
Notice that the coefficients $q_j = p_j/4$ only sum to $1/4$ rather than $1$, so we append $q_J = 3/4$ and $V_J$ be any Hermitian block-encoding of a zero matrix. 
Then we apply the algorithm described in~\cref{sec:transducer_LCU_normalized}, which gives a transducer of $\operatorname{Cay}(\sum_{j=0}^{J-1} q_j \widehat{A}_j) = \operatorname{Cay}(A/(4\lambda))$ according to~\cref{thm:Cayley_LCU}.

\paragraph{Step 2: Perform inverse Cayley transformation and reduce the normalization factor.}
Let $Y = A/(4\lambda)$ and $V = \operatorname{Cay}(Y)$. 
The main idea of this step is to write down a circuit involving $V$ to construct a block-encoding of $A/(16a)$, and use~\cref{cor:composition_general} to construct a transducer of this circuit rather than directly implementing it. 

We first write down the circuit expression, which is a QSVT circuit on a small ordinary LCU. 
Specifically, the standard LCU can block encode 
\begin{equation}
    Z = \frac{V^\dagger - V}{2i} = 2Y (I+Y^2)^{-1}. 
\end{equation}
Notice that $\|Y\| \leq a/(4\lambda)$, we choose a polynomial $P(x)$ such that 
\begin{equation}
    P\left(\frac{2b}{1+b^2}\right) \approx \frac{b}{16(a/(4\lambda))}, \quad |b| \leq \frac{a}{4\lambda}. 
\end{equation}
Such a polynomial simultaneously inverts the Cayley transformation and reduces the normalization factor of $A$ from $\mathcal{O}(\lambda)$ to $\mathcal{O}(a)$.
A specific construction of this polynomial is given in~\cref{sec:polynomial_inv_Cayley_amplification}. 
Then the QSVT circuit with block-encoding of $Z$ yields a block-encoding of $\frac{Y}{16(a/(4\lambda))} = A/(16a)$. 

For the actual implementation, since we have constructed a transducer of $V$ in step 1, we can apply~\cref{cor:composition_general} to turn the pre-described circuit into a transducer of the full QSVT unitary $U_P$, where $\left(\left\langle 0^r\right| \otimes I\right) U_P\left(\left|0^r\right\rangle \otimes I\right)=\tilde{A} \approx \frac{A}{16 a}$.

\paragraph{Step 3: Convert the transducer to a block-encoding via finite reuse.}
In step 2, we have constructed a transducer of $U_P$. 
Now we directly apply~\cref{thm:finite-reuse-high-order} to construct a block-encoding of approximately $A/(16a)$.

\subsection{Complexity analysis}

\begin{theorem}[Complexity estimate of the transducer-based LCU]\label{thm:complexity_main}
    Let $U_j$ be a Hermitian block-encoding of $A_j/\alpha_j$ with normalization factor $\alpha_j$, and suppose that $U_j$ satisfies~\cref {assump:U} with query complexity $C_j$. 
    Let $A = \sum_{j=0}^{J-1} c_j A_j$ with coefficients $c_j > 0$, and suppose that an upper bound $a \geq \|A\|$ is known. 
    Then, for any $\epsilon > 0$, the algorithm described in~\cref{sec:algorithm_description} gives a block-encoding of $A'$ such that $\|A'-A/(16a)\| \leq \mathcal{O}(\epsilon)$, using 
    \begin{equation}
        \mathcal{O}\left( C_{\max} \log\left( \frac{1}{\epsilon} \right) + \frac{\lambda}{a} \overline{C} \left(\log\left(\frac{1}{\epsilon}\right)\right)^2 \right)
    \end{equation}
    queries to the primitive oracles, where 
    \begin{equation}
        \lambda = \sum_{j=0}^{J-1} c_j \alpha_j, \quad C_{\max} = \max_j C_j, \quad \overline{C} = \frac{\sum_{j=0}^{J-1} c_j \alpha_j C_j }{\lambda}. 
    \end{equation}
\end{theorem}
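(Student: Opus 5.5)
We follow the three steps of the algorithm in \cref{sec:algorithm_description} and track the two transducer parameters $w$ and $K$ through each step. The final query count is read off from \cref{thm:finite-reuse-high-order}.

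\textbf{Step 1.} Apply \cref{thm:Cayley_LCU} with $q_j = p_j/4$ for $j<J$ and the padding term $q_J=3/4$, $C_J=0$. This gives a transducer $S_1$ of $V=\operatorname{Cay}(Y)$ with $Y=A/(4\lambda)$, using $\mathcal{O}(1)$ primitive queries. Its parameters satisfy
\[
w(S_1)\le 1+2\sum_j q_jC_j \le 1+\tfrac12\overline{C}, \qquad K(S_1)\le 6C_{\max}.
\]
The Hermitianization remark lets us assume all $U_j$ are Hermitian. The point to record is that $w$ scales with the average cost and $K$ with the worst cost.

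\textbf{Step 2.} The main technical step is building the polynomial (the paper defers this to \cref{sec:polynomial_inv_Cayley_amplification}). With $Z = (V^\dagger - V)/(2i) = 2Y(I+Y^2)^{-1}$ and $\|Y\|\le a/(4\lambda)\le 1/4$, we have $\|Z\|\lesssim a/(2\lambda)$. Inverting, $b = x/(1+\sqrt{1-x^2})$, which is analytic on $|x|<1$ and hence easy to approximate. We need an odd polynomial $P$ of degree $L=\mathcal{O}((\lambda/a)\log(1/\epsilon))$ that approximates $x\mapsto (\lambda/(4a))\,x/(1+\sqrt{1-x^2})$ on $|x|\le a/(2\lambda)$ to error $\epsilon$ and stays bounded by $1/2$ on $[-1,1]$. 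The plan is to compose a rectangle-function/linear-amplification polynomial in the style of GSLW singular value amplification with a Taylor truncation of the analytic inverse.

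The QSVT circuit of \cref{lem:QSVT} then uses $\mathcal{O}(L)$ alternating applications of the LCU block-encoding of $Z$. Each of these uses one $V$ and one $V^\dagger$, both controlled, interleaved with oracle-free gates. Feed this circuit into \cref{cor:composition_general}, with each occurrence of $\widetilde V$ being $V$, $V^\dagger$, or a controlled version, and each transducer equal to $S_1$ or $S_1^\dagger$. For the resulting transducer $S_2$ of $U_P$ this gives
\[
w(S_2)\le \mathcal{O}(L\,\overline{C}), \qquad K(S_2)\le 6C_{\max} + \mathcal{O}(L)\bigl(1+\overline{C}\bigr).
\]
The key feature is that $C_{\max}$ enters only additively, through the $\max_j K(S_j)$ term. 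The QSVT approximation error is $\mathcal{O}(\epsilon)$ in the block, giving $\|\widetilde A - A/(16a)\|=\mathcal{O}(\epsilon)$.

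\textbf{Step 3.} Apply \cref{thm:finite-reuse-high-order} to $S_2$. We must check that $I-T$ is invertible, which follows from the finite bound on $K$ derived in \cref{cor:composition_general}. This yields an $\mathcal{O}(\epsilon)$-approximation of $U_P$ using $\mathcal{O}(K(S_2)\log(1/\epsilon))$ controlled uses of $S_2$ and $S_2^\dagger$. Each such use costs $\mathcal{O}(1)$ primitive queries, because both $S_1$ and its select oracle cost $\mathcal{O}(1)$ under \cref{assump:U}.

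Restricting to the ancilla-zero block turns the $\mathcal{O}(\epsilon)$-approximation of the unitary $U_P$ into an $\mathcal{O}(\epsilon)$-approximation of $A/(16a)$. Multiplying the pieces gives
\[
\bigl(C_{\max}+(\lambda/a)\,\overline{C}\log(1/\epsilon)\bigr)\log(1/\epsilon),
\]
which is the claimed bound.

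The step most likely to fail is the polynomial construction together with the bookkeeping that the transducer framework tolerates QSVT's controlled and inverse uses. I would check the controlled/inverse handling against the discussion preceding \cref{cor:composition_general}.
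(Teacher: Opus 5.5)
Your proposal is correct and follows the paper's proof essentially step for step. \cref{thm:Cayley_LCU} gives $w=\mathcal{O}(\overline{C})$ and $K=\mathcal{O}(C_{\max})$; \cref{cor:composition_general} applied to the QSVT circuit gives $K(S)=\mathcal{O}(C_{\max}+\deg(P)\,\overline{C})$ with $\deg(P)=\mathcal{O}((\lambda/a)\log(1/\epsilon))$; and \cref{thm:finite-reuse-high-order} contributes the final $\log(1/\epsilon)$ factor. The only deviation is your sketched polynomial construction, which you do not need: \cref{lem:polynomial_inv_Cayley_amplification} builds $P$ in one shot by applying GSLW's Corollary~66 to $x/\bigl(16\rho(1+\sqrt{1-x^2})\bigr)$ with $\rho=a/(4\lambda)$ and $B=1/4$, giving degree $\mathcal{O}(\rho^{-1}\log(1/\delta))$ and avoiding the extra $\log(\lambda/a)$ that composing with the standard uniform-amplification polynomial (degree $\mathcal{O}(\gamma\log(\gamma/\epsilon))$) could introduce.
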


\begin{proof}

    The proof can be completed by walking through the algorithm and combining all the previously established complexity estimates for intermediate subroutine. 
    
    In Step 1, we construct a transducer $S_{\operatorname{Cay}}$ of $\operatorname{Cay}(A/(4\lambda))$. 
    \Cref{thm:Cayley_LCU} tells that 
    \begin{equation}
        w(S_{\operatorname{Cay}}) \leq 1 + 2 \sum_{j=0}^{J-1} \frac{c_j \alpha_j }{4 \lambda} C_j = \mathcal{O}(\overline{C}), \quad K(S_{\operatorname{Cay}}) \leq 6 \max_j C_j = \mathcal{O}(C_{\max}), 
    \end{equation}
    and the construction of $S_{\operatorname{Cay}}$ needs $\mathcal{O}(1)$ query complexity. 
    
    The QSVT circuit in Step 2 has query length to be $\mathcal{O}(\operatorname{deg}(P(x)))$, according to~\cref{lem:QSVT}, where $P(x)$ is the polynomial for inverse Cayley transformation with amplification. 
    By~\cref{lem:polynomial_inv_Cayley_amplification}, for any $0 < \delta_1 < 1/2$ to be determined later (and choosing $\rho = a/(4\lambda)$ in~\cref{lem:polynomial_inv_Cayley_amplification}), such a QSVT circuit gives a block-encoding of $\widetilde{A}$ such that 
    \begin{equation}
        \left\| \widetilde{A} - \frac{A}{16a} \right\| \leq \delta_1, 
    \end{equation}
    with 
    \begin{equation}
        \operatorname{deg}(P(x)) = \mathcal{O}\left( \frac{\lambda}{a} \log\left( \frac{1}{\delta_1} \right) \right). 
    \end{equation}
    Using~\cref{cor:composition_general}, we can construct a transducer $S$ of a block-encoding of $\widetilde{A}$ with 
    \begin{equation}
        K(S) \leq \mathcal{O}\left( K(S_{\operatorname{Cay}}) + \operatorname{deg}(P(x)) w(S_{\operatorname{Cay}}) \right) = \mathcal{O}\left( C_{\max} + \frac{\lambda}{a} \overline{C} \log\left(\frac{1}{\delta_1}\right) \right), 
    \end{equation}
    and the query complexity of $S$ is still $\mathcal{O}(1)$. 

    The last step implements a finite reuse of $S$. 
    For any $\delta_2 > 0$, according to~\cref{thm:finite-reuse-high-order}, we can implement a unitary $U$, which is a block-encoding of $\widetilde{\widetilde{A}}$ such that 
    \begin{equation}
        \| \widetilde{\widetilde{A}} - \widetilde{A}\| \leq \mathcal{O}(\delta_2), 
    \end{equation}
    using 
    \begin{equation}\label{eqn:proof_complexity_query_complexity}
        \mathcal{O}\left( K(S) \log\left( \frac{1}{\delta_2} \right) \right) = \mathcal{O}\left( C_{\max} \log\left( \frac{1}{\delta_2} \right) + \frac{\lambda}{a} \overline{C} \log\left(\frac{1}{\delta_1}\right) \log\left( \frac{1}{\delta_2} \right)\right)
    \end{equation}
    queries to $S$, which turns to asymptotically the same number of queries to the primitive oracles as each $S$ requires $\mathcal{O}(1)$ queries. 
    As $ \| \widetilde{\widetilde{A}} - \widetilde{A}\| \leq \mathcal{O}(\delta_2)$ and $\left\| \widetilde{A} - \frac{A}{16a} \right\| \leq \delta_1$, we can choose 
    \begin{equation}
        \delta_1 = \Theta(\epsilon), \quad \delta_2 = \Theta(\epsilon)
    \end{equation}
    to make sure that this unitary $U$ is a block-encoding of $A/(16a)$ to error at most $\mathcal{O}(\epsilon)$. 
    Plugging the choice of $\delta_1$ and $\delta_2$ back to~\cref{eqn:proof_complexity_query_complexity} completes the proof. 
\end{proof}

\section{Applications to sparse matrix functions}\label{sec:applications}

In this section, we discuss applications of our transducer-based LCU algorithm for implementing sparse matrix functions, including the construction of sparse matrix block-encoding, as well as solving sparse Hamiltonian simulation and linear systems of equations problems.

Let $A \in \mathbb{C}^{N\times N}$ be a Hermitian matrix with at most $d$ non-zero entries per row or column. 
We assume access to its standard black-box unitary oracles which compute the positions and values of non-zero entries. 
Specifically, 
\begin{align}
	O_F\ket{i,\ell} &= \ket{i,f(i,\ell)},\qquad i \in [N], \ell \in [d]\label{eq:OF}\\
	O_A\ket{i,k,z} &= \ket{i,k,z\oplus A_{ik}}, \qquad i,k\in [N],\label{eq:OA}
\end{align}
where $f(i,\ell)$ is the column index of the $\ell$-th non-zero entry in the $i$-th row, $z \in \{0,1\}^b$ is a $b$-bit string, and $A_{ik}$ denotes the matrix entry at $i$-th row and $k$-th column. 
These two oracles $O_F$ and $O_A$ will serve as the ``primitive oracles'' in~\cref{assump:U}. 
We denote $\|A\|$ as the spectral norm of $A$, $\|A\|_{\max} = \max_{i,k} |A_{ik}|$, and $\norm{A}_{1\to 2}:=\max_k\left(\sum_i |A_{ik}|^2\right)^{1/2}$.

\subsection{Sparse matrix block-encoding}\label{sec:applications_sparse_block_encoding}

Now we discuss how to efficiently construct a block-encoding of a sparse matrix. 
The approach is based on a decomposition $A = \sum_{j=0}^{m} A_j$ modified from~\cite{Low2019}, which splits the original sparse matrix into several according to the entries' magnitudes.
Then we can construct the block-encoding of each $A_j$ using the standard sparse matrix block-encoding with amplitude amplification in~\cite{Low2019}, and apply our transducer-based LCU to construct $A$. 

\subsubsection{A review of the standard sparse block-encoding with amplification}

We first state the standard approach for block encoding a sparse matrix $H$ with amplitude amplification in~\cite{Low2019}. 
The algorithm first constructs two sets of mutually orthogonal quantum states $\{\chi_k\},\,\{\bar{\chi}_j\}$ such that their overlap $\langle \bar{\chi}_j | \chi_k \rangle = H_{jk}/(d\|H\|_{\max})$. 
Then $H$ is block-encoded by the product of controlled-state preparation unitary $U = U_{\text{row}}^{\dagger} U_{\text{col}}$, where $U_{\text{col}}|k\rangle_s|0\rangle_a = |\chi_k\rangle, U_{\text{row}}|j\rangle_s|0\rangle_a = |\bar{\chi}_j\rangle$. 
Both $U_{\text{row}}^{\dagger} $ and $U_{\text{col}}$ can be constructed using $\mathcal{O}(1)$ queries. 
The normalization factor $d\|H\|_{\max}$ is induced by the normalization factor $\sqrt{\|H\|_{\max}}$ in $U_{\text{row}}^{\dagger} $ and $U_{\text{col}}$, due to the controlled rotation in constructing them, which require each non-zero entry to be shrinked smaller than 1. 
However, such a normalization factor is not optimal as the largest possible normalization factor for $U_{\text{row}}^{\dagger} $ and $U_{\text{col}}$ is $\sqrt{\|H\|_1}$, so the algorithm applies the QSVT-based uniform singular value amplification technique to boost the normalization factor of $U_{\text{row}}^{\dagger} $ and $U_{\text{col}}$ to $\sqrt{\Lambda_1}$ for any $\Lambda_1 \geq \|H\|_1$. 
This process requires $\mathcal{O}( \sqrt{d\|H\|_{\max} / \Lambda_1} \log(1/\epsilon) )$ queries, leading to a block-encoding of approximately $H/\Lambda_1$. 

We formally summarize the complexity of this procedure in the next lemma, which is from Theorem~10 of \cite{Low2019}. 

\begin{lemma}[Sparse block-encoding by amplitude multiplication]\label{lem:sparse-block-encoding}
Let $H\in C^{N\times N}$ be Hermitian and $d$-sparse. 
Then, for any $0 < \delta < 1$ and $\Lambda_1 \geq \|H\|_1$, we can construct a block-encoding of a Hermitian matrix $\tilde{H}/\alpha$ with $\alpha = \Theta(\Lambda_1)$ and $\|\tilde{H} - H\| = \mathcal{O}(\|H\| \delta)$, using
\begin{equation}
\mathcal{O} \left(\sqrt{\frac{d \|H\|_{\max}}{\Lambda_1}} \log \left(\frac{1}{\delta}\right)\right)  
\end{equation}
queries to the sparse matrix oracles. 
\end{lemma}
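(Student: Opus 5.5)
We prove \cref{lem:sparse-block-encoding} by following the two-stage construction reviewed above: first a crude sparse block-encoding with normalization $d\|H\|_{\max}$ built from controlled state preparations, and then uniform singular value amplification applied separately to the two state-preparation factors.

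\textbf{Step 1: crude state preparations.} Using $\mathcal{O}(1)$ calls to $O_F$ and $O_A$ together with a controlled rotation, we build $U_{\text{col}}$ and $U_{\text{row}}$. They satisfy $U_{\text{col}}\ket{k}\ket{0}=\ket{\chi_k}$ and $U_{\text{row}}\ket{j}\ket{0}=\ket{\bar\chi_j}$, and each amplitude is $\sqrt{H_{jk}/\|H\|_{\max}}/\sqrt d$, with the appropriate conjugation and sign convention on one side so that $\langle\bar\chi_j|\chi_k\rangle=H_{jk}/(d\|H\|_{\max})$. It is convenient to view $U_{\text{col}}$ as a block-encoding of an isometry-like map $W_{\text{col}}$ (from $\ket{k}$ to the flagged subnormalized part). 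The map $W_{\text{col}}$ has scale $\sqrt{d\|H\|_{\max}}$, and its Gram matrix $W_{\text{col}}^\dagger W_{\text{col}}$ is diagonal, with entries $\sum_j|H_{jk}|/(d\|H\|_{\max})\le \Lambda_1/(d\|H\|_{\max})$. The same holds for the row side.

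\textbf{Step 2: amplification.} All singular values of $W_{\text{col}}$ are bounded by $\sqrt{\Lambda_1/(d\|H\|_{\max})}$. Since $W_{\text{col}}^\dagger W_{\text{col}}$ is diagonal, the singular values are the square roots of these column sums. We apply QSVT uniform singular value amplification with gain $\gamma=\Theta(\sqrt{d\|H\|_{\max}/\Lambda_1})$, using an odd polynomial $\approx\gamma x$ on $[0,\gamma^{-1}(1-\eta)]$ that is bounded by $1$ on $[-1,1]$. This has degree $\mathcal{O}(\gamma\log(1/\delta))$ and uses \cref{lem:QSVT} or its singular-value version. The result block-encodes $W'_{\text{col}}\approx \gamma W_{\text{col}}$ with relative error $\delta$, and we do the same for the row side. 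The product $U'^\dagger_{\text{row}}U'_{\text{col}}$ then block-encodes $W'^\dagger_{\text{row}}W'_{\text{col}}\approx \gamma^2 H/(d\|H\|_{\max})=H/\alpha$ with $\alpha=\Theta(\Lambda_1)$. The query count is $\mathcal{O}(\gamma\log(1/\delta))$, as claimed.

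\textbf{Step 3: error and Hermiticity.} The obstacle is to make the error $\mathcal{O}(\|H\|\delta)$ rather than $\mathcal{O}(\Lambda_1\delta)$, and to keep the result Hermitian. The amplified factor has the form $W'_{\text{col}}=W_{\text{col}}\,g(W_{\text{col}}^\dagger W_{\text{col}})$, where $g(t)=P(\sqrt t)/\sqrt t$. Because $W_{\text{col}}^\dagger W_{\text{col}}=:D_c$ is diagonal in the computational basis, $g(D_c)$ is a diagonal matrix $\gamma(I+\Delta_c)$ with $\|\Delta_c\|\le\delta$. By the symmetric construction the row Gram matrix is the same $D_c$. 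Hence
\[
\tilde H=\Lambda\,(I+\Delta_c)\,H\,(I+\Delta_c),
\]
which is a congruence by a diagonal real matrix, so it is Hermitian. Moreover $\|\tilde H-H\|\le(2\delta+\delta^2)\|H\|$. The only delicate point is that the polynomial approximation must be \emph{multiplicative}, i.e.\ $|P(x)-\gamma x|\le\gamma\delta x$, so that the error factors out as a diagonal. An odd polynomial built from a $\delta$-approximation to the constant $\gamma$ times $x$ gives this. Finally, we rescale so that $\alpha=\Theta(\Lambda_1)$ exactly.
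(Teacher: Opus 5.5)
Your proposal is correct and follows the same construction the paper sketches just before the lemma and attributes to Low's Theorem~10: crude controlled state preparations with normalization $d\|H\|_{\max}$, uniform singular value amplification of each factor, then the product. Your Step~3 congruence argument $\tilde H=(I+\Delta_c)H(I+\Delta_c)$ usefully supplies the Hermiticity and the $\mathcal{O}(\|H\|\delta)$ error claim that the paper leaves implicit. Two small nits: drop the stray factor $\Lambda$ in your formula for $\tilde H$, since the scale belongs in $\alpha=d\|H\|_{\max}/\gamma^2$. Also, the off-the-shelf multiplicative linear-amplification polynomial of Gily\'en et al.\ has degree $\mathcal{O}(\gamma\log(\gamma/\delta))$, which gives the stated $\log(1/\delta)$ when, e.g., $\delta\le 1/\gamma$; otherwise you pay an extra $\log\gamma=\mathcal{O}(\log d)$ factor (harmless in the paper's application) unless you use a sharper polynomial.
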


\subsubsection{Matrix decomposition}

Now we decompose the matrix $A$ according to the following rule. 
Let 
\begin{equation}
    \mu_0 = \frac{\|A\|_{1\to 2}}{\sqrt{d}}, \quad \mu_{j} = 2^j \mu_0 = \frac{2^j \|A\|_{1\to 2}}{\sqrt{d}}, 
\end{equation}
and $m$ be the smallest index such that $\mu_{m} \geq \|A\|_{\max}$. 
It is clear that 
\begin{equation}
    m = \mathcal{O}(\log(\sqrt{d} \|A\|_{\max} / \|A\|_{1\to 2} )). 
\end{equation} 
Define the matrix $A_j$ to be 
\begin{align}
    (A_0)_{ik} &= A_{ik} \mathbf{1}\{ |A_{ik}| \leq \mu_0 \}, \label{eqn:sparse_decomposition_def_1}\\
    (A_j)_{ik} &= A_{ik} \mathbf{1}\{ \mu_{j-1} < |A_{ik}| \leq \mu_j \}, \quad 1 \leq j \leq m, \label{eqn:sparse_decomposition_def_2}
\end{align}
i.e., $A_j$ consists of $A$'s non-zero entries between $\mu_{j-1}$ and $\mu_j$. 
Thus $A = \sum_{j=0}^{m} A_j$.

To apply our transducer-based LCU, we estimate relative quantities in the next result. 

\begin{lemma}\label{lem:sparse_decomposition_norm}
    Let $A = \sum_{j=0}^{m} A_j$ where each $A_j$ is defined through~\cref{eqn:sparse_decomposition_def_1,eqn:sparse_decomposition_def_2}. 
    Then, for any $0 < \epsilon < 1$, we can implement block-encodings $U_j$ of $\widetilde{A}_j/\alpha_j$ with query complexity $C_j$, such that 
    \begin{align}
        \lambda &= \sum_{j=0}^{m} \alpha_j = \mathcal{O}\left( \sqrt{d}  \|A\|_{1\to 2} \right), \\
        \overline{C} &= \frac{\sum_{j=0}^{m} \alpha_j C_j }{\lambda} = \mathcal{O}\left(  \log\left( \frac{\sqrt{d}\|A\|_{\max}}{\|A\|_{1\to 2}} \right) \log\left(\frac{d \|A\|_{1\to 2}}{\epsilon}\right) \right), \\
        C_{\max} &= \max_j C_j = \mathcal{O}\left( \frac{\sqrt{d}\|A\|_{\max} }{\|A\|_{1\to 2}} \log\left(\frac{d \|A\|_{1\to 2}}{\epsilon}\right)\right), 
    \end{align}
    and 
    \begin{equation}
        \|\widetilde{A} - A\| \leq \mathcal{O}( \epsilon), \quad \widetilde{A} := \sum_{j=0}^{m} \widetilde{A}_j. 
    \end{equation}
    
\end{lemma}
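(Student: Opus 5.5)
We apply \cref{lem:sparse-block-encoding} to each piece $A_j$ separately, with a carefully chosen $\Lambda_1^{(j)}$. The whole argument then reduces to counting, for each $j$, how large $\|A_j\|_1$ and $\|A_j\|_{\max}$ can be.

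\textbf{Per-level estimates.} First, $A_j$ is Hermitian, because the thresholding depends only on $|A_{ik}|$. It is also $d$-sparse, and its nonzero pattern can be identified coherently from $O_F,O_A$ with $\mathcal{O}(1)$ queries by comparing the magnitude to $\mu_{j-1},\mu_j$. This places all the $U_j$ in the shared-primitive setting of \cref{assump:U}.

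For $j=0$ we use $\|A_0\|_{\max}\le\mu_0$ and take $\Lambda_1^{(0)}=d\mu_0=\sqrt d\|A\|_{1\to2}$. This bounds $\|A_0\|_1$, since each row has at most $d$ entries of size at most $\mu_0$. The query count is then $\mathcal{O}(\log(1/\delta))$.

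For $j\ge1$, every nonzero entry of $A_j$ exceeds $\mu_{j-1}$, while each column has $\sum_i|A_{ik}|^2\le\|A\|_{1\to2}^2$. Hence each row or column of $A_j$ contains at most $n_j:=\min\{d,\|A\|_{1\to2}^2/\mu_{j-1}^2\}=\min\{d,d/4^{j-1}\}$ nonzeros. This gives $\|A_j\|_1\le n_j\mu_j\le 2\sqrt d\|A\|_{1\to2}/2^{j-1}$, and we set $\Lambda_1^{(j)}$ equal to this bound. Plugging into \cref{lem:sparse-block-encoding} gives
\begin{equation*}
C_j=\mathcal{O}\left(\sqrt{d\mu_j/\Lambda_1^{(j)}}\log(1/\delta)\right)=\mathcal{O}\left(2^{j}\log(1/\delta)\right),
\end{equation*}
with $\alpha_j=\Theta(\Lambda_1^{(j)})$.

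\textbf{Summing.} Summing the geometric series gives $\lambda=\sum_j\alpha_j=\mathcal{O}(\sqrt d\|A\|_{1\to2})$. The weighted sum satisfies $\alpha_jC_j=\mathcal{O}(\sqrt d\|A\|_{1\to2}\log(1/\delta))$ for every $j$, so summing over $m+1$ levels and dividing by $\lambda$ gives $\overline C=\mathcal{O}(m\log(1/\delta))$. Here $\lambda$ is at least $\alpha_0=\Theta(\sqrt d\|A\|_{1\to2})$, so the division is harmless. Finally, $C_{\max}=\mathcal{O}(2^m\log(1/\delta))$, and by the definition of $m$ we have $2^m\le 2\sqrt d\|A\|_{\max}/\|A\|_{1\to2}$.

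\textbf{Error.} \Cref{lem:sparse-block-encoding} gives $\|\widetilde A_j-A_j\|=\mathcal{O}(\|A_j\|\delta)$. To bound $\|A_j\|$ we use $\|A_j\|\le\|A_j\|_1\le 2\sqrt d\|A\|_{1\to2}$. Summing over levels, the total error is $\mathcal{O}(\sqrt d\|A\|_{1\to2}\,\delta)$, which is geometrically convergent apart from the $j=0$ term. Choosing $\delta=\epsilon/(d\|A\|_{1\to2})$ makes the total error $\mathcal{O}(\epsilon)$ and produces the stated $\log(d\|A\|_{1\to2}/\epsilon)$ factor.

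\textbf{Main obstacle.} The hardest step is the counting bound $n_j\le\|A\|_{1\to2}^2/\mu_{j-1}^2$ for rows as well as columns, which needs Hermiticity so that the row and column $2$-norms coincide. It must also be checked that $\|A_j\|_1$ here is the induced $1$-norm required by \cref{lem:sparse-block-encoding}. A minor care point is $\|A\|_{1\to 2}\ge\|A\|_{\max}$, which keeps $m\ge0$ and all logarithms meaningful.
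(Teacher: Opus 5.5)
Your proposal is correct and follows essentially the paper's route: you split by magnitude, bound $\|A_j\|_1$ through the column $2$-norm (using Hermiticity to pass between rows and columns), feed $\Lambda_1^{(j)}\approx\|A_j\|_1$ into \cref{lem:sparse-block-encoding}, and then sum a geometric series for $\lambda$ and $m+1$ equal terms for $\sum_j\alpha_jC_j$. The only differences are minor: your entry-counting bound $n_j\mu_j$ loses a factor $2$ against the paper's pointwise bound $|A_{ik}|\le|A_{ik}|^2/\mu_{j-1}$, and in your side remark the inequality that keeps $\log(\sqrt d\|A\|_{\max}/\|A\|_{1\to2})$ nonnegative is $\|A\|_{1\to2}\le\sqrt d\|A\|_{\max}$, whereas $\|A\|_{1\to2}\ge\|A\|_{\max}$ is what gives $m=\mathcal{O}(\log d)$.
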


\begin{proof}
    We first show that 
    \begin{equation}\label{eqn:proof_sparse_decomposition_eq1}
        \|A_0\|_1 \leq \sqrt{d} \|A\|_{1\to 2}, \quad \|A_j\|_1 \leq \frac{\|A\|_{1\to 2}^2 }{\mu_{j-1}},\quad \forall 1 \leq j \leq m. 
    \end{equation}
    For $A_0$, as all the entries have absolute values smaller than $\mu_0$, and there are at most $d$ non-zero entries per column, we have 
    \begin{equation}
        \|A_0\|_1 \leq d \mu_0 = \sqrt{d} \|A\|_{1\to 2}. 
    \end{equation}
    For $j \geq 1$, for every non-zero entry $(A_j)_{ik}$, we first use $|(A_j)_{ik}| > \mu_{j-1}$ to obtain $|(A_j)_{ik}| \leq \frac{|(A_j)_{ik}|^2}{\mu_{j-1}}$, and then 
    \begin{equation}
        \|A_j\|_{1} = \max_i \sum_{k} |(A_j)_{ik}| \leq \frac{\max_i \sum_{k} |(A_j)_{ik}|^2 }{\mu_{j-1}} \leq \frac{\|A\|_{1\to 2}^2}{\mu_{j-1}}. 
    \end{equation}
    We have proved~\cref{eqn:proof_sparse_decomposition_eq1}. 

    According to~\cref{eqn:proof_sparse_decomposition_eq1}, we can choose the normalization factor of $\widetilde{A}_j$ as 
    \begin{equation}
        \alpha_0 = \sqrt{d} \|A\|_{1 \to 2}, \quad \alpha_j = \frac{\|A\|_{1\to 2}^2}{\mu_{j-1}} = \frac{\sqrt{d} \|A\|_{1\to 2}}{2^{j-1}}, \,\, \forall 1 \leq j \leq m. 
    \end{equation}
    Taking the sum, we first have 
    \begin{equation}
        \lambda = \sum_{j=0}^{m} \alpha_j = \Theta\left( \sqrt{d}  \|A\|_{1\to 2} \right). 
    \end{equation}
    For any $\delta>0$ to be specified later, using~\cref{lem:sparse-block-encoding}, we can construct block-encodings $U_j$ of $\widetilde{A}_j/\alpha_j$ with query complexity $C_j$, such that 
    \begin{equation}
        \|\widetilde{A}_j - A_j\| \leq \mathcal{O}(\|A_j\|\delta ), \quad C_j = \mathcal{O}\left(\sqrt{\frac{d\mu_j}{\alpha_j}} \log\left(\frac{1}{\delta}\right)\right). 
    \end{equation}
    For error, we simply take the sum with respect to $j$ and obtain 
    \begin{equation}
        \| \widetilde{A} - A \| \leq \sum_{j=0}^{m} \|\widetilde{A}_j - A_j\| = \mathcal{O}(\lambda \delta). 
    \end{equation}
    So we can choose $\delta = \epsilon/\lambda$ to bound the overall error by $\mathcal{O}(\epsilon)$, and the corresponding query complexity becomes 
    \begin{equation}
        C_j = \mathcal{O}\left(\sqrt{\frac{d\mu_j}{\alpha_j}} \log\left(\frac{\lambda}{\epsilon}\right)\right) = \mathcal{O}\left(\sqrt{\frac{d\mu_j}{\alpha_j}} \log\left(\frac{d \|A\|_{1\to 2}}{\epsilon}\right)\right). 
    \end{equation}
    Then, using $\mu_j = 2 \mu_{j-1} = \frac{2\|A\|^2_{1\to 2}}{\alpha_j}$ for $j \geq 1$ and $\mu_0 = \frac{\|A\|_{1\to 2}^2}{\alpha_0}$, we can compute 
    \begin{align}
        \overline{C} &= \frac{\sum_{j=0}^{m} \alpha_j C_j }{\lambda} = \mathcal{O} \left( \frac{1}{\lambda} \sum_{j=0}^{m} \alpha_j \sqrt{\frac{d \mu_j}{{\alpha_j}} }\log\left(\frac{d \|A\|_{1\to 2}}{\epsilon}\right) \right) \\
        & = \mathcal{O}\left( m \log\left(\frac{d \|A\|_{1\to 2}}{\epsilon}\right) \right) = \mathcal{O}\left(  \log\left( \frac{\sqrt{d}\|A\|_{\max}}{\|A\|_{1\to 2}} \right) \log\left(\frac{d \|A\|_{1\to 2}}{\epsilon}\right) \right). 
    \end{align}

    To bound the maximum complexity, notice that $\sqrt{d\mu_0/\alpha_0} = \mathcal{O}(1)$ and $\sqrt{d\mu_j/\alpha_j} = \mathcal{O}(2^j)$, we have 
    \begin{equation}
        C_{\max} = \max_j C_j = \mathcal{O}\left(  2^{m-1} \log\left(\frac{d \|A\|_{1\to 2}}{\epsilon}\right) \right) = \mathcal{O}\left( \frac{\sqrt{d}\|A\|_{\max} }{\|A\|_{1\to 2}} \log\left(\frac{d \|A\|_{1\to 2}}{\epsilon}\right)\right), 
    \end{equation}
    where the last estimate is from $\mu_{m-1} < \|A\|_{\max}$ according to the definition of $m$. 
\end{proof}

\subsubsection{Complexity}

The complexity of applying transducer-based LCU to implement the above decomposition directly follows from~\cref{thm:complexity_main} and~\cref{lem:sparse_decomposition_norm}. 

\begin{theorem}[Sparse block-encoding via transducer-based LCU]\label{thm:sparse_block_encoding_transducer}
    Let $A$ be a $d$-sparse Hermitian matrix with a known upper bound $a \geq \|A\|$ and known $\|A\|_{\max}$, $\|A\|_{1\to 2}$. 
    Then, for any $0 < \epsilon \leq \mathcal{O}(a)$, we can construct a block-encoding of $A'/\alpha$ with $\alpha = \Theta(a)$ and $\|A' - A\| \leq \mathcal{O}(\epsilon)$, using 
    \begin{equation}
        \mathcal{O}\left( \left(\frac{\sqrt{d}\|A\|_{\max} }{\|A\|_{1\to 2}} + \frac{\sqrt{d}\|A\|_{1\to 2}}{a} \log\left( \frac{\sqrt{d}\|A\|_{\max}}{\|A\|_{1\to 2}} \right)   \log\left(\frac{a}{\epsilon}\right)\right) \log\left(\frac{d \|A\|_{1\to 2}}{\epsilon}\right) \log\left( \frac{a}{\epsilon} \right) \right)
    \end{equation}
    queries to the sparse oracle assess to $A$. 
\end{theorem}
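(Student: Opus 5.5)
The proof is a direct composition of \cref{lem:sparse_decomposition_norm} with \cref{thm:complexity_main}, plus an error-budget bookkeeping step.

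First, apply \cref{lem:sparse_decomposition_norm} with an accuracy parameter $\epsilon_1$ to be fixed later. This gives Hermitian block-encodings $U_j$ of $\widetilde A_j/\alpha_j$, $0\le j\le m$, and the corresponding $\lambda$, $\overline C$, $C_{\max}$ bounds, with $\|\widetilde A - A\|=\mathcal O(\epsilon_1)$, where $\widetilde A=\sum_j\widetilde A_j$.

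Next, check that the $U_j$ satisfy \cref{assump:U}. Each $U_j$ is the amplitude-multiplied sparse block-encoding of \cref{lem:sparse-block-encoding}. It is a QSVT circuit built from $U_{\rm row}^\dagger$ and $U_{\rm col}$, and these in turn use $\mathcal O(1)$ calls to $O_F,O_A$ together with thresholded controlled rotations. The threshold index $j$ can be held in a control register and treated as oracle-free logic. So the $k$-th primitive call of circuit $j$ is always $O_F$, $O_A$, or an adjoint, selected coherently by $(j,k)$ with $\mathcal O(1)$ queries. The differing QSVT lengths are permitted by the assumption.

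Then invoke \cref{thm:complexity_main} with $c_j=1$, the upper bound $a$ (after replacing $a$ by $\max(a,\|\widetilde A\|)$, which changes it only by $\mathcal O(\epsilon_1)\le\mathcal O(a)$), and relative accuracy $\epsilon_2$. This produces a block-encoding of $A''$ with $\|A''-\widetilde A/(16a)\|\le\mathcal O(\epsilon_2)$, so $A'=16aA''$ and $\alpha=16a$.

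Set $\epsilon_1=\epsilon$ and $\epsilon_2=\epsilon/a$. Both error sources are then $\mathcal O(\epsilon)$, and $\epsilon\le\mathcal O(a)$ keeps $\epsilon_2$ bounded so the theorem applies.

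Substituting the bounds gives a query count of
\[
\mathcal O\Bigl(C_{\max}\log\tfrac a\epsilon+\tfrac{\lambda}{a}\,\overline C\,\log^2\tfrac a\epsilon\Bigr),
\]
with $C_{\max}=\mathcal O\bigl(\tfrac{\sqrt d\|A\|_{\max}}{\|A\|_{1\to2}}\log\tfrac{d\|A\|_{1\to2}}{\epsilon}\bigr)$, $\lambda=\mathcal O(\sqrt d\|A\|_{1\to2})$, and $\overline C=\mathcal O\bigl(\log\tfrac{\sqrt d\|A\|_{\max}}{\|A\|_{1\to2}}\log\tfrac{d\|A\|_{1\to2}}{\epsilon}\bigr)$. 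Factoring out the common $\log\tfrac{d\|A\|_{1\to2}}{\epsilon}\,\log\tfrac a\epsilon$ reproduces the stated bound exactly.

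The main obstacle is verifying \cref{assump:U} rigorously for the family $U_j$. The amplification circuits have different degrees, and their rotation angles depend on $\mu_{j-1},\mu_j$. One has to show that the coherent selection $\sum_{j,k}\ket{j,k}\bra{j,k}\otimes Q_j^{[k]}$ costs $\mathcal O(1)$ primitive queries. I would argue this by padding every circuit so that its $k$-th oracle slot alternates between fixed types, for example $U_{\rm col}$-type and $U_{\rm row}^\dagger$-type calls in the QSVT pattern. Any $j$-dependence then sits only in oracle-free thresholding gates, and idle slots are filled by an oracle call followed by its inverse. This padding is the one place where care beyond bookkeeping is needed.
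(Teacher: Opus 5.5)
Your proposal is correct and follows the paper's proof: apply \cref{lem:sparse_decomposition_norm} at accuracy $\epsilon$, then apply \cref{thm:complexity_main} (with $c_j=1$) at accuracy $\epsilon/a$ using a known bound $\Theta(a)$ on $\|\widetilde A\|$, and substitute $\lambda$, $\overline C$, $C_{\max}$. Two small remarks: first, $\max(a,\|\widetilde A\|)$ is not itself a known quantity, so take an explicit $\beta=\Theta(a)$ as the paper does, for instance $a$ plus the lemma's $\mathcal O(\epsilon)$ error bound; second, the \cref{assump:U} check you flag as the main obstacle is left implicit in the paper and needs no padding, because for a finite oracle family you can compute the type of the $(j,k)$-th call into an ancilla and apply each of $O_F,O_F^\dagger,O_A,O_A^\dagger$ once under control.
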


\begin{proof}
    By~\cref{lem:sparse_decomposition_norm}, we have block-encodings $U_j$ of $\widetilde{A_j}/\alpha_j$ such that $\|\widetilde{A} - A\| \leq \mathcal{O}(\epsilon)$, $\widetilde{A} = \sum_{j=0}^{m} \widetilde{A}_j$. 
    Since $\|A\| \leq a$, we can use $\beta = \Theta(a)$ as an upper bound of $\|\widetilde{A}\|$. 
    Then,~\cref{thm:complexity_main} gives a block-encoding of $A'/(16\beta)$ such that $\|A'/(16\beta) - \widetilde{A}/(16\beta)\| \leq \mathcal{O}(\epsilon/\beta)$, which in turn implies $\|A' - A\| \leq \mathcal{O}(\epsilon)$. 

    According to~\cref{thm:complexity_main} and~\cref{lem:sparse_decomposition_norm}, query complexity of this algorithm is 
    \begin{align}
        & \quad \mathcal{O}\left( C_{\max} \log\left( \frac{a}{\epsilon} \right) + \frac{\lambda}{a} \overline{C} \left(\log\left(\frac{a}{\epsilon}\right)\right)^2 \right) \\
        & = \mathcal{O}\left( \frac{\sqrt{d}\|A\|_{\max} }{\|A\|_{1\to 2}} \log\left(\frac{d \|A\|_{1\to 2}}{\epsilon}\right) \log\left( \frac{a}{\epsilon} \right) \right.\\
        &\qquad\quad + \left.\frac{\sqrt{d}\|A\|_{1\to 2}}{a} \log\left( \frac{\sqrt{d}\|A\|_{\max}}{\|A\|_{1\to 2}} \right) \log\left(\frac{d \|A\|_{1\to 2}}{\epsilon}\right)  \left(\log\left(\frac{a}{\epsilon}\right)\right)^2 \right). 
    \end{align}
\end{proof}

A particular useful case is when the matrix $A$ has spectral norm bounded by $1$. 
Then we may choose $a = 1$, and use the inequalities $\|A\|_{\max}/\|A\|_{1\to 2} \leq 1$ and $\|A\|_{1\to 2} \leq \|A\| \leq 1$ to simplify the expression of the query complexity. 

\begin{corollary}[Sparse block-encoding via transducer-based LCU for normalized matrix]\label{cor:sparse_block_encoding_normalized}
    Let $A$ be a $d$-sparse Hermitian matrix with $\|A\| \leq 1$. 
    Then, for any $0 < \epsilon \leq \mathcal{O}(1)$, we can construct a block-encoding of $A'/\alpha$ with $\alpha = \mathcal{O}(1)$ and $\|A' - A\| \leq \mathcal{O}(\epsilon)$, using 
    \begin{equation}
        \mathcal{O}\left( \sqrt{d} \log\left( d \right) \log\left(\frac{d}{\epsilon}\right)  \left(\log\left(\frac{1}{\epsilon}\right)\right)^2 \right)
    \end{equation}
    queries to the sparse oracle assess to $A$. 
\end{corollary}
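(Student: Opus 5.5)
The plan is to specialize \cref{thm:sparse_block_encoding_transducer} with $a=1$ and then simplify each factor using elementary norm inequalities. Since $\|A\|\le 1$ is given, $a=1$ is a valid upper bound. The theorem, however, assumes $\|A\|_{\max}$ and $\|A\|_{1\to 2}$ are known; they are not supplied here. I would therefore replace them by known surrogates inside the decomposition of \cref{lem:sparse_decomposition_norm}. Concretely, I would set $\mu_0 = 1/\sqrt{d}$, $\mu_j = 2^j/\sqrt{d}$, and take $m=\lceil \log_2\sqrt d\,\rceil$, so that $\mu_m\ge 1\ge \|A\|_{\max}$. The proof of \cref{lem:sparse_decomposition_norm} then goes through unchanged with $\|A\|_{1\to2}$ replaced by the upper bound $1$ throughout. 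This works because the key estimate \cref{eqn:proof_sparse_decomposition_eq1} only uses $\sum_i |A_{ik}|^2\le \|A\|_{1\to 2}^2 \le 1$ and the thresholds. It gives $\alpha_0=\sqrt d$, $\alpha_j=\sqrt d/2^{j-1}$, and hence $\lambda=\Theta(\sqrt d)$.

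The basic inequalities needed are the following. First, $\|A\|_{1\to 2}=\max_k\|Ae_k\|\le\|A\|\le 1$. Second, $\|A\|_{\max}\le\|A\|_{1\to2}$. With the surrogate value $1$ in place of $\|A\|_{1\to2}$, the ratio $\sqrt d\|A\|_{\max}/\|A\|_{1\to2}$ becomes at most $\sqrt d$. Substituting into \cref{lem:sparse_decomposition_norm} with error parameter $\epsilon$ gives
\[
C_{\max}=\mathcal{O}\left(\sqrt d\log\left(\frac{d}{\epsilon}\right)\right),\qquad \overline{C}=\mathcal{O}\left(\log(d)\log\left(\frac{d}{\epsilon}\right)\right).
\]

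Next I would feed these bounds into \cref{thm:complexity_main} with $a=\Theta(1)$, exactly as in the proof of \cref{thm:sparse_block_encoding_transducer}. The cost is
\[
C_{\max}\log\left(\frac1\epsilon\right)+\lambda\,\overline{C}\left(\log\left(\frac1\epsilon\right)\right)^2
=\mathcal{O}\left(\sqrt d\log\left(\frac{d}{\epsilon}\right)\log\left(\frac1\epsilon\right)+\sqrt d\log(d)\log\left(\frac{d}{\epsilon}\right)\left(\log\left(\frac1\epsilon\right)\right)^2\right).
\]
The first term is dominated by the second once $d\ge 2$ and $\epsilon$ is below a constant, which yields the claimed bound. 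The rescaling factor is $\Theta(a)=\mathcal{O}(1)$. The error $\|A'-A\|=\mathcal{O}(\epsilon)$ follows from the same triangle-inequality argument as in \cref{thm:sparse_block_encoding_transducer}, since $\|\widetilde A\|\le 1+\mathcal{O}(\epsilon)=\mathcal{O}(1)$ still admits an $\mathcal{O}(1)$ upper bound.

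The only real obstacle is the bookkeeping about which norms are \emph{known}, and the degenerate regimes. I would handle unknown $\|A\|_{\max}$ and $\|A\|_{1\to2}$ through the surrogate thresholds above. I would also need to make sure that $\log(d)$ does not vanish for $d=1$; this is handled by reading the bound with $\log$ replaced by $\log(2+\cdot)$, or by treating $d=1$ trivially.
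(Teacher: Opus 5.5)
Your proposal is correct and follows the paper's route. Like the paper, you specialize \cref{thm:sparse_block_encoding_transducer} to $a=1$ and simplify using $\|A\|_{\max}\le\|A\|_{1\to 2}\le\|A\|\le 1$, and you add a sound refinement the paper leaves implicit: rerunning \cref{lem:sparse_decomposition_norm} with the known surrogate $1$ in place of $\|A\|_{1\to 2}$ and $m=\lceil\log_2\sqrt d\,\rceil$, which is needed because the corollary, unlike the theorem, does not assume these norms are known.
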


\subsection{Sparse Hamiltonian simulation}\label{sec:applications_sparse_Ham_Sim}

For an evolution time $T > 0$ and a target error $\epsilon > 0$, the goal of the Hamiltonian simulation problem for a Hamiltonian $A$ is to approximate $e^{-i A T}$ to error $\epsilon$. 
We assume $\|A\| \leq 1$ as we can rescale the dynamics to absorb the spectral norm into the evolution time. 

QSVT algorithm can solve the Hamiltonian simulation problem with $\mathcal{O}(\alpha T +\log(1/\epsilon))$ queries to the block-encoding of $A/\alpha$. 
We can directly apply our~\cref{cor:sparse_block_encoding_normalized} to obtain such a block-encoding. 
A nuance is that our block-encoding is not exact. 
By $\|e^{-i A T} - e^{-i A' T}\| \leq T\|A-A'\|$, it suffices to choose the block-encoding error $\mathcal{O}(\epsilon/T)$ to bound the overall simulation error still by $\mathcal{O}(\epsilon)$. 
Summarizing these discussions, we have the following result. 

\begin{theorem}\label{thm:sparse_Ham_sim}
    Let $A$ be a $d$-sparse Hamiltonian with $\|A\| \leq 1$. 
    Then, for an evolution time $T$ and a target error $\epsilon$, there exists a quantum algorithm which implements a block-encoding of $V$ such that $\|V - e^{-i A T}\| \leq \mathcal{O}(\epsilon)$, using 
    \begin{equation}
        \mathcal{O}\left( \left(T+\log\left(\frac{1}{\epsilon}\right)\right) \sqrt{d} \log\left( d \right) \log\left(\frac{dT}{\epsilon}\right)  \left(\log\left(\frac{T}{\epsilon}\right)\right)^2 \right)
    \end{equation}
    queries to the sparse oracle access of $A$. 
\end{theorem}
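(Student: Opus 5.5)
We combine \cref{cor:sparse_block_encoding_normalized} with the standard QSVT Hamiltonian simulation algorithm. First, invoke \cref{cor:sparse_block_encoding_normalized} with precision parameter $\epsilon' = \epsilon/T$ (assuming $T\ge 1$; otherwise replace $T$ by $\max(T,1)$). This gives a unitary $U_{A'}$ that is a block-encoding of $A'/\alpha$ with $\alpha = \mathcal{O}(1)$ and $\|A'-A\| \le \mathcal{O}(\epsilon/T)$. Its cost is
\begin{equation*}
\mathcal{O}\left(\sqrt{d}\log(d)\log\left(\tfrac{dT}{\epsilon}\right)\left(\log\left(\tfrac{T}{\epsilon}\right)\right)^2\right)
\end{equation*}
sparse-oracle queries per use. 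If $A'$ is not exactly Hermitian, we replace it by its Hermitian part $(A'+A'^\dagger)/2$, which is still $\mathcal{O}(\epsilon/T)$-close to $A$. This can be block-encoded by a two-term standard LCU (\cref{lem:standard_lcu}) of $U_{A'}$ and $U_{A'}^\dagger$ at constant overhead. Alternatively, we can note that the block-encoded matrix in \cref{thm:complexity_main} inherits Hermiticity from the Hermitian construction.

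Second, run QSVT (\cref{lem:QSVT}) on this block-encoding with the Jacobi--Anger polynomial approximations of $\cos(\alpha T x)$ and $\sin(\alpha T x)$ on $[-1,1]$, each rescaled by $1/2$. Degree $\mathcal{O}(\alpha T + \log(1/\epsilon)) = \mathcal{O}(T+\log(1/\epsilon))$ suffices for uniform accuracy $\epsilon$. Combining the two parts by a one-ancilla LCU gives a block-encoding of a matrix within $\mathcal{O}(\epsilon)$ of $e^{-iA'T}/2$. Robust oblivious amplitude amplification, as in the proof of \cref{thm:finite-reuse-high-order}, then boosts this to a block-encoding of $V \approx e^{-iA'T}$ with $\mathcal{O}(1)$ repetitions.

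The main obstacle is error propagation, which has two parts. First, the error of the input block-encoding enters QSVT through the term $4d\sqrt{\epsilon_{\rm in}/\alpha}$ in \cref{lem:QSVT}, which is square-root rather than linear. To handle this, we treat $U_{A'}$ as an \emph{exact} block-encoding of $A'/\alpha$ and put the approximation into the matrix itself. QSVT then exactly implements the polynomial of $A'/\alpha$, with no square-root loss. Second, we bound $\|e^{-iAT}-e^{-iA'T}\| \le T\|A-A'\| = \mathcal{O}(\epsilon)$ via the Duhamel formula, which holds for Hermitian $A$ and $A'$. The polynomial approximation error adds another $\mathcal{O}(\epsilon)$.

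Multiplying the QSVT degree $\mathcal{O}(T+\log(1/\epsilon))$ by the per-use cost of the block-encoding gives the stated bound.
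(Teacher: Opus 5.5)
Your proposal is correct and takes the same route as the paper. You invoke \cref{cor:sparse_block_encoding_normalized} with accuracy $\mathcal{O}(\epsilon/T)$, feed the resulting $\mathcal{O}(1)$-normalized block-encoding into QSVT Hamiltonian simulation with $\mathcal{O}(T+\log(1/\epsilon))$ uses, and control the input error via $\|e^{-iAT}-e^{-iA'T}\|\le T\|A-A'\|$. Your extra steps fill in details the paper leaves implicit: symmetrizing $A'$ with a two-term LCU of $U_{A'}$ and $U_{A'}^\dagger$, and treating $U_{A'}$ as an exact block-encoding of $A'/\alpha$ to avoid the square-root error term in \cref{lem:QSVT}. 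Rely on the symmetrization rather than your alternative claim that Hermiticity is inherited, since the extrapolated finite-reuse output of \cref{thm:finite-reuse-high-order} need not have an exactly Hermitian block.
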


\subsection{Sparse linear system algorithm}\label{sec:applications_sparse_linear_system}

The quantum linear system problem considers 
\begin{equation}
    A x = \ket{b}, 
\end{equation}
and the goal of a quantum linear system algorithm is to prepare a quantum state $\ket{\widetilde{x}}$ such that 
\begin{equation}
    \| \ket{\widetilde{x}} - \ket{x}\| \leq \epsilon, 
\end{equation}
where $\ket{x} = x/\|x\|$ is the normalized solution of the linear system. 
Since the goal is the normalized solution, we assume the right hand side vector $\ket{b}$ to be normalized without loss of generality. 
For the coefficient matrix, we also assume $\|A\| = 1$ and denote its condition number by $\kappa = \|A\|\|A^{-1}\|$. 
We further restrict $A$ to be a Hermitian matrix, and general non-Hermitian coefficient matrix can be reduced to this Hermitian case by the standard dilation trick with the help of one extra qubit~\cite{HarrowHassidimLloyd2009}. 
Query complexity of a quantum linear system algorithm is counted as the number of queries to $O_F$ and $O_A$, and the state preparation oracle $O_b: \ket{0} \mapsto \ket{b}$. 

There have been several existing works which solve the quantum linear system problem using the matrix block-encoding input model. 
Our sparse matrix block-encoding technique can be naturally combined with those algorithms to improve the dependence on the sparsity. 

For example, with the block-encoding of $A'/\alpha$, the discrete adiabatic algorithm~\cite{CostaAnSandersEtAl2022} can solve the linear system $A'x' = \ket{b}$ to error in the normalized state at most $\epsilon$, using $\mathcal{O}\left( \alpha \|A'^{-1}\| \log(1/\epsilon) \right)$ queries to $O_F,O_A$ and $O_b$. 
Furthermore, for any $\epsilon>0$, we can ensure $\|x/\|x\| - x'/\|x'\|\| \leq \mathcal{O}(\epsilon)$, by choosing the error in the block-encoding to be \footnote{This claim can be proved as follows. 
Let $\|A'-A\| =\delta$ and write $A' = A+E$, $u = x/\|x\|$, $v = x'/\|x'\|$, and $r = \|x\|/\|x'\|$. 
Since $x = x' + A^{-1}Ex'$ and $\|ru-v\| \leq \delta \|A^{-1}\|$ (we force $\delta \|A^{-1}\| \leq 1/2$), we can bound $\operatorname{Re}\braket{u|v} \geq \frac{r^2+1-(\delta \|A^{-1}\|)^2}{2r} \geq \sqrt{1-(\delta \|A^{-1}\|)^2}$. Then the claimed bound follows from $\|u-v\| = \sqrt{2 - 2\operatorname{Re}\braket{u|v}} = \mathcal{O}(\delta\|A^{-1}\|)$. } 
$\mathcal{O}(\epsilon/\|A^{-1}\|)$. 
Putting this with~\cref{cor:sparse_block_encoding_normalized} together, the query complexity becomes $\mathcal{O}(\sqrt{d} \kappa \operatorname{polylog}(d\kappa/\epsilon) ) $.

Alternatively, we may apply the tunable variable time amplitude amplification (tunable VTAA) \cite{LowSu2024quantumlinearalgorithmoptimal} to trade off between matrix query complexity and state preparation cost. 
For $A'x' = b$, it requires $\mathcal{O}(\alpha \kappa \log(1/p') \log(\log(1/p')/\epsilon) )$ queries to the block-encoding of $A'/\alpha$, and $\mathcal{O}(1/\sqrt{p'})$ to $O_b$, where $1/\sqrt{p'} = \|A'^{-1}\|/\|A'^{-1}\ket{b}\|$. 
Using $\|A'^{-1}\|/\|A'^{-1}\ket{b}\| \sim \|A^{-1}\|/\|A^{-1}\ket{b}\|$, $\|A'^{-1}\| \sim \|A^{-1}\|$, and choosing a similar block-encoding accuracy as in the discrete adiabatic approach, we can also solve $Ax = \ket{b}$ with error at most $\mathcal{O}(\epsilon)$. 

We summarize these two approaches in the following result. 

\begin{theorem}\label{thm:sparse_linear_system}
    Let $A$ be a $d$-sparse Hermitian matrix with $\|A\| = 1$ and condition number $\kappa$, and $\ket{b}$ be a normalized vector. 
    Then, for any $\epsilon \leq \mathcal{O}(1/\kappa)$: 
    \begin{enumerate}
        \item The discrete adiabatic algorithm can prepare a quantum state $\ket{\widetilde{x}}$ such that
        $\|\ket{\widetilde{x}} - \frac{A^{-1}\ket{b}}{\|A^{-1}\ket{b}\|}\| \leq \mathcal{O}(\epsilon)$, 
        using 
    \begin{equation}
        \mathcal{O}\left( \kappa \sqrt{d} \log\left( d \right) \log\left(\frac{d\kappa}{\epsilon}\right)  \left(\log\left(\frac{\kappa}{\epsilon}\right)\right)^2 \log\left(\frac{1}{\epsilon}\right) \right)
    \end{equation}
    queries to the sparse oracle access to $A$ and the state preparation oracle of $\ket{b}$.
    \item Alternatively, the tunable VTAA approach can achieve the same goal, using 
    \begin{equation}
        \mathcal{O}\left( \kappa \sqrt{d} \log\left( d \right) \log\left(\frac{d\kappa}{\epsilon}\right)  \left(\log\left(\frac{\kappa}{\epsilon}\right)\right)^2 \log\left(\frac{1}{p} \right) \log\left( \frac{\log(1/p)}{\epsilon}\right) \right)
    \end{equation}
    queries to the sparse oracle access to $A$, and 
    \begin{equation}
        \mathcal{O}\left( \frac{1}{\sqrt{p}}\right)
    \end{equation}
    queries to the state preparation oracle of $\ket{b}$. 
    Here $1/\sqrt{p} = \|A^{-1}\|/\|A^{-1}\ket{b}\|$, and we assume an estimate of $\|A^{-1}\ket{b}\|$ up to constant multiplicative error is known a priori. 
    \end{enumerate}
\end{theorem}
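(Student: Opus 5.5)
The approach is to combine \cref{cor:sparse_block_encoding_normalized} with an existing block-encoding-based linear system solver, and to control the effect of the inexact block-encoding through a perturbation bound. We first fix the target accuracy of the block-encoding. By the perturbation argument given above the theorem (the footnote bound), if $\|A'-A\|\le \delta$ with $\delta\|A^{-1}\|\le 1/2$, then the normalized solutions satisfy
\[
\bigl\|x/\|x\|-x'/\|x'\|\bigr\| = \mathcal{O}(\delta\kappa),
\]
using $\|A\|=1$. We therefore choose $\delta=\Theta(\epsilon/\kappa)$. This choice also guarantees $\|A'^{-1}\|\le 2\kappa$, and the assumption $\epsilon\le\mathcal{O}(1/\kappa)$ keeps $\delta\le \mathcal{O}(1)$ within the range allowed by \cref{cor:sparse_block_encoding_normalized}. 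With this $\delta$, \cref{cor:sparse_block_encoding_normalized} yields a block-encoding of $A'/\alpha$ with $\alpha=\mathcal{O}(1)$, using
\[
\mathcal{O}\bigl(\sqrt d\log d\log(d\kappa/\epsilon)(\log(\kappa/\epsilon))^2\bigr)
\]
sparse queries per call. Two small points need checking here. First, $A'$ must be Hermitian; this holds because the transducer-based LCU operates on Hermitian block-encodings. Second, $A'$ must be invertible, which follows from Weyl's inequality since $\delta\|A^{-1}\|\le 1/2$.

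For part 1, we run the discrete adiabatic solver~\cite{CostaAnSandersEtAl2022} on $A'x'=\ket b$ with target error $\epsilon$. It uses $\mathcal{O}(\alpha\|A'^{-1}\|\log(1/\epsilon))=\mathcal{O}(\kappa\log(1/\epsilon))$ calls to the block-encoding and to $O_b$. Multiplying by the per-call cost above gives the stated sparse-query bound. The number of $O_b$ queries is $\mathcal{O}(\kappa\log(1/\epsilon))$, which is dominated by the stated expression. The total error is $\mathcal{O}(\epsilon)$ by the triangle inequality, combining the solver error with the perturbation error $\mathcal{O}(\delta\kappa)=\mathcal{O}(\epsilon)$. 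One subtlety is that the block-encoding is itself only approximate as a unitary, i.e., it equals a unitary encoding $A'$ up to an $\mathcal{O}(\delta)$ error absorbed into $A'$. We will treat it as an exact block-encoding of the perturbed matrix $A'$, which is what the corollary provides.

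For part 2, we use tunable VTAA~\cite{LowSu2024quantumlinearalgorithmoptimal} on the same block-encoding. It makes $\mathcal{O}(\alpha\|A'^{-1}\|\log(1/p')\log(\log(1/p')/\epsilon))$ block-encoding calls and $\mathcal{O}(1/\sqrt{p'})$ queries to $O_b$, where $1/\sqrt{p'}=\|A'^{-1}\|/\|A'^{-1}\ket b\|$. The main step, and the expected main obstacle, is showing $p'=\Theta(p)$, so that the solver's parameters and the assumed a priori estimate of $\|A^{-1}\ket b\|$ transfer from $A$ to $A'$. We plan to prove this as follows. First, $\|A'^{-1}\|\in[\tfrac23\kappa,2\kappa]$ follows from a Neumann series. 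Second,
\[
\|A'^{-1}\ket b - A^{-1}\ket b\|\le \|A'^{-1}\|\,\delta\,\|A^{-1}\ket b\|\le 2\kappa\delta\,\|A^{-1}\ket b\|,
\]
which is a small multiple of $\|A^{-1}\ket b\|$ once $\delta\kappa\le 1/4$, i.e.\ once the implicit constant in $\delta=\Theta(\epsilon/\kappa)$ is small enough. Together these give constant-factor equivalence of both $\|A'^{-1}\|$ and $\|A'^{-1}\ket b\|$, and hence $p'=\Theta(p)$. The known constant-factor estimate of $\|A^{-1}\ket b\|$ is therefore also a constant-factor estimate for $A'$. Substituting the per-call cost then yields both stated bounds, and the error accounting is the same as in part 1.
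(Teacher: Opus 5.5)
Your proposal is correct and follows the paper's own argument. You pick the block-encoding accuracy $\Theta(\epsilon/\kappa)$ via the footnote perturbation bound, feed the block-encoding of \cref{cor:sparse_block_encoding_normalized} into the discrete adiabatic solver or tunable VTAA, and transfer $\|A'^{-1}\|$ and $p'$ back to $\kappa$ and $p$; your Neumann-series argument just makes explicit what the paper asserts as $\|A'^{-1}\|\sim\|A^{-1}\|$ and $\|A'^{-1}\|/\|A'^{-1}\ket{b}\|\sim\|A^{-1}\|/\|A^{-1}\ket{b}\|$.

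One small correction: the $A'$ produced after finite reuse, extrapolation and oblivious amplitude amplification need not be exactly Hermitian, so your stated reason for that point is wrong. Nothing depends on it, though. Your Weyl and Neumann bounds hold verbatim for singular values, and if Hermiticity is wanted you can symmetrize with an equal-weight LCU of $U$ and $U^\dagger$. This costs $\mathcal{O}(1)$ overhead and does not increase $\|A'-A\|$.
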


\section{Conclusion and outlook}\label{sec:conclusion}

In this work, we developed a transducer-based LCU framework that replaces the multiplicative worst-case cost dependence of standard LCU by a weighted-average dependence under a shared primitive-access assumption. We also developed resolvent-based tools for composing and implementing transducers, and applied the resulting framework to sparse matrices, obtaining near-optimal query complexities for sparse Hamiltonian simulation and quantum linear systems up to polylogarithmic factors.

A natural first direction is to extend the ideas developed here beyond LCU to QSVT. QSVT is a considerably more general framework for quantum matrix algorithms, and it would be interesting to understand whether transducer techniques can similarly exploit nonuniform costs within QSVT constructions and lead to refined complexity bounds. Such an extension could potentially provide a systematic route to further applications.

A second direction is to further develop the transducer framework itself. Our analysis identifies the resolvent norm $K(S)$ as a useful complexity measure alongside the catalyst complexity. It would be interesting to clarify the relation between these two measures, establish sharper bounds for the resolvent norm under transducer composition, and determine whether the $\mathcal{O}(K(S)\log(1/\epsilon))$ complexity of our finite-reuse implementation can be further improved.

\section*{Acknowledgements}

DA, DD, and CZ acknowledge funding from Quantum Science and Technology - National Science and Technology Major Project via Project 2024ZD0301900, and the support by The Fundamental Research Funds for the Central Universities, Peking University. 
CS was supported by the National Key Research Project of China under Grant No. 2025YFA1017200.
YZ was supported by the European Research Council (ERC) via the Starting Grant QuanThermal (101222179).

\section*{AI use statement}

This work involved extensive interaction between the authors and a large language model (LLM). 
The authors initially aimed to develop a quantum algorithm for sparse linear systems with optimal dependence on all relevant parameters. 
During this stage, the LLM was used for brainstorming and suggested using the original transducer framework~\cite{BelovsJefferyYolcu2024}, which could achieve the desired sparsity dependence but not the target precision dependence. 
After the appearance of~\cite{chen2026optimal}, the authors recognized that an improved implementation of transducers could yield better precision dependence and discussed this idea with LLM. 
The LLM then assisted in developing an early version of sparse linear system solver with near-optimal dependence on all parameters.

While refining that early version, the authors realized that the techniques could be extended to the broader task of LCU. The authors consequently shifted the motivation and focus of the work to developing an improved LCU primitive through the transducer framework, with sparse matrix functions serving as applications. 
The authors formulated the central idea and key steps of the transducer-based LCU algorithm, while the LLM assisted with the analysis and the development of technical results. 
Building on these discussions, the authors further developed and refined the analysis and proofs.

The authors wrote the initial manuscript, and used the LLM to refine the presentation of selected sections. 
The authors take full responsibility for all content presented in this work.

\bibliographystyle{unsrt}
\bibliography{refs}

\appendix

\section{Polynomial approximation of the inverse Cayley transformation and amplification}\label{sec:polynomial_inv_Cayley_amplification}

We use the following result on polynomial approximations of analytic functions.
\begin{lemma}[Corollary 66 of~\cite{GilyenSuLowEtAl2019}]\label{lem:GSLW_taylor}
    Let $x_0\in[-1,1]$, $r\in(0,2]$, $\eta\in(0,r]$, and let $f\colon[x_0-r-\eta,\,x_0+r+\eta]\to\mathbb{C}$ be such that $f(x_0+x)=\sum_{\ell=0}^{\infty}a_\ell x^\ell$ for all $x\in[-r-\eta,\,r+\eta]$. Suppose $B>0$ is such that $\sum_{\ell=0}^{\infty}(r+\eta)^\ell|a_\ell|\leq B$, and let $\epsilon\in(0,\frac{1}{2B}]$. Then there is an efficiently computable polynomial $\widetilde P\in\mathbb{C}[x]$ of degree $\mathcal{O}\big(\frac{1}{\eta}\log\frac{B}{\epsilon}\big)$ such that
    \be
        \big\|f-\widetilde P\big\|_{[x_0-r,\,x_0+r]}\leq\epsilon
        \quad\text{and}\quad
        \big\|\widetilde P\big\|_{[-1,1]}\leq\epsilon+B.
    \ee
\end{lemma}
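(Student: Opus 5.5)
Since this is Corollary~66 of~\cite{GilyenSuLowEtAl2019}, we could simply cite it. A self-contained argument would go as follows. Write $y=x-x_0$ and $\rho=r+\eta$, so that $f(x_0+y)=\sum_\ell a_\ell y^\ell$ with $\sum_\ell |a_\ell|\rho^\ell\le B$. The plan has two stages: first truncate the Taylor series so that it is accurate on the target interval, then compose it with a clipping polynomial so that it stays bounded on all of $[-1,1]$.

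\textbf{Step 1 (truncation).} Let $p_n(y)=\sum_{\ell<n}a_\ell y^\ell$. For $|y|\le r$ the tail satisfies
\[
\sum_{\ell\ge n}|a_\ell|\,|y|^\ell\le B\left(\frac{r}{r+\eta}\right)^n\le B\,e^{-n\eta/(r+\eta)}.
\]
Since $r+\eta\le 4$, choosing $n=\mathcal{O}\big(\frac1\eta\log\frac B\epsilon\big)$ makes this at most $\epsilon/2$. The difficulty is that $p_n$ may be huge for $y\in[-1-x_0,1-x_0]\setminus[-\rho,\rho]$.

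\textbf{Step 2 (clipping).} Construct a real polynomial $q$ satisfying:
\begin{itemize}
\item $|q(y)-y|\le\delta$ for $|y|\le r+\eta/2$;
\item $|q(y)|\le\rho$ for all $y\in[-2,2]$.
\end{itemize}
Such a $q$ is a smoothed ``clipped identity''. One can build it as $y$ times a polynomial rectangle function, or by integrating a polynomial approximation of a rectangle, with transition width $\Theta(\eta)$. By the standard erf/sign approximations in~\cite{GilyenSuLowEtAl2019}, its degree is $\mathcal{O}\big(\frac1\eta\log\frac1\delta\big)$. Then set
\[
\widetilde P(x)=\sum_{\ell<n}a_\ell\, q(x-x_0)^\ell .
\]

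\textbf{Step 3 (verification).} The boundedness claim is immediate: on $[-1,1]$ we have $|\widetilde P|\le\sum_\ell|a_\ell|\rho^\ell\le B$. For accuracy on $[x_0-r,x_0+r]$, use $|u^\ell-y^\ell|\le \ell\rho^{\ell-1}|u-y|$. This gives
\[
|\widetilde P-p_n|\le\frac{\delta}{\rho}\sum_\ell \ell\,|a_\ell|\rho^\ell .
\]
The factor $\ell\le n$ is harmless if $\delta=\epsilon/(4nB)$. A cleaner route is to clip to radius $r+\eta/2$, which trades $\ell$ for a geometric factor.

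\textbf{Main obstacle.} The naive degree count is $\deg\widetilde P=n\cdot\deg q=\mathcal{O}\big(\eta^{-2}\log^2\big)$, which is quadratically worse than claimed. I expect this to be the hard step. To fix it, I would avoid composition entirely. Instead, expand $p_n$ (rescaled by $\rho$) in the Chebyshev basis on the enlarged interval, and multiply by a single polynomial rectangle $R$ that equals $1\pm\epsilon'$ on $[x_0-r,x_0+r]$ and decays fast enough outside $[x_0-\rho,x_0+\rho]$ to kill the growth of $p_n$. That growth is at most $B(2/\rho)^n$, so $R$ needs suppression $\epsilon'\approx\epsilon(\rho/2)^n/B$. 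The cost of $R$ is $\mathcal{O}\big(\frac1\eta\log\frac1{\epsilon'}\big)=\mathcal{O}\big(\frac1\eta(n\log\frac2\rho+\log\frac B\epsilon)\big)$. I would then check whether this is absorbed into the claimed $\mathcal{O}(\frac1\eta\log\frac B\epsilon)$; it is not obviously so when $\rho$ is small. If it fails, I would follow the GSLW argument (their Lemma~65) more closely.
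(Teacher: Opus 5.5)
Citing \cite{GilyenSuLowEtAl2019}, as your first sentence suggests, is exactly what the paper does: the lemma is stated in the appendix without proof. Your self-contained sketch, however, does not prove the statement, because neither of your routes reaches degree $\mathcal{O}(\frac{1}{\eta}\log\frac{B}{\epsilon})$.

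The composition route costs $n\cdot\deg q=\mathcal{O}(\eta^{-2}\log\frac{B}{\epsilon}\log\frac{1}{\delta})$, as you note. Incidentally, your bound $\frac{\delta}{\rho}\sum_\ell \ell|a_\ell|\rho^\ell\le\frac{\delta nB}{\rho}$ calls for $\delta=\epsilon\rho/(4nB)$, not $\epsilon/(4nB)$.

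For the damping route, your diagnosis points the wrong way. Your tail bound already gives $n=\lceil\frac{\rho}{\eta}\ln\frac{2B}{\epsilon}\rceil$, so your estimate for $R$ becomes $\mathcal{O}\bigl(\frac{1}{\eta}(\frac{\rho}{\eta}\log\frac{2}{\rho}+1)\log\frac{B}{\epsilon}\bigr)$. When $\rho\sim\eta$ this is only a $\log(1/\eta)$ factor too large. When $\rho$ is of constant order below $2$ (e.g.\ $x_0=0$, $r=\frac12$, $\eta\to0$), it is $\Theta(\eta^{-2}\log\frac{B}{\epsilon})$. The growth is real there: for $x_0=0$ and $f(x)=B(1-\theta)/(1-\theta x/\rho)$ with $1-\theta=\eta$, truncation at order $\frac{1}{\eta}\log\frac{B}{\epsilon}$ is genuinely needed, and $|p_n(1)|=2^{\Theta(n)}$. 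Re-expanding $p_n$ in Chebyshev polynomials changes nothing, since it is the same polynomial. The waste comes from using one rectangle: it must deliver the worst-case suppression immediately after its transition, although $|p_n|\le Be^{ns/\rho}$ at distance $s$ beyond $x_0\pm\rho$. A multiscale product of rectangles matched to this profile reduces the loss to roughly a $\log(1/\eta)$ factor, but still not to the stated degree.

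The missing idea is to build the approximant from pieces that are bounded on all of $[-1,1]$ from the outset. This is essentially how \cite{GilyenSuLowEtAl2019} proceed, via a Fourier-series lemma of van Apeldoorn, Gily\'en, Gribling and de Wolf.

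\begin{enumerate}
\item Put $t=(x-x_0)/\rho$ and $b_\ell=a_\ell\rho^\ell$. Then $\sum_\ell|b_\ell|\le B$, and accuracy is needed only for $|t|\le1-\eta'$, where $\eta'=\eta/\rho\ge\eta/4$.
\item Write $t=\frac{2}{\pi}\arcsin(\sin\frac{\pi t}{2})$. Since $\arcsin$ has nonnegative Taylor coefficients, each $t^\ell$ becomes a nonnegative power series in $s=\sin\frac{\pi t}{2}$ with coefficient sum $1$. It can be truncated at degree $J=\mathcal{O}((\eta')^{-2}\log\frac{B}{\epsilon})$, because $|s|\le\cos\frac{\pi\eta'}{2}\le1-\frac{(\eta')^2}{2}$.
\item Expanding $s^j$ into exponentials produces binomial weights. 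Binomial concentration then discards all frequencies beyond $M=\mathcal{O}(\sqrt{J\log(B/\epsilon)})=\mathcal{O}(\frac{1}{\eta'}\log\frac{B}{\epsilon})$. This square-root saving is the step your constructions have no analogue of.
\item The result is $F(t)=\sum_{|m|\le M}c_m e^{i\pi m t/2}$ with $\sum_m|c_m|\le B$. It approximates $f$ on the target interval and is bounded by $B$ at every real point, however large $|t|$ becomes for $x\in[-1,1]$.
\item In the variable $x$, the frequencies are at most $\frac{\pi M}{2\rho}=\mathcal{O}(\frac{1}{\eta}\log\frac{B}{\epsilon})$, because $\eta'\rho=\eta$. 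Jacobi--Anger truncation approximates each exponential uniformly on $[-1,1]$ to error $\epsilon/(4B)$ with degree $\mathcal{O}(\frac{1}{\eta}\log\frac{B}{\epsilon})$. Summing gives both claimed bounds at the claimed degree.
\end{enumerate}

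If you also want the decay outside $[x_0-r-\eta/2,\,x_0+r+\eta/2]$ from the original Corollary~66, a rectangle factor is now cheap. It only has to suppress a polynomial already bounded by $B+\epsilon$, i.e.\ by a factor of order $\epsilon/B$.
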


\begin{lemma}\label{lem:polynomial_inv_Cayley_amplification}
    For any $0 < \rho \leq 1/4$ and $0 <\delta<1/2$, there exists a real odd polynomial $P(x)$ such that 
    \begin{align}
        |P(x)| &\leq \tfrac{1}{2}, \quad \forall x \in [-1,1], \\
        \left| P\left( \frac{2b}{1+b^2} \right) - \frac{b}{16\rho} \right| & \leq \delta, \quad \forall b \in [-\rho,\rho], \\
        \operatorname{deg}(P(x)) &= \mathcal{O}\left( \frac{1}{\rho} \log\left( \frac{1}{\delta} \right) \right). 
    \end{align}
\end{lemma}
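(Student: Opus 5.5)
The strategy is to compose the exact inverse of the map $b\mapsto 2b/(1+b^2)$ with the linear rescaling, approximate that composite by a polynomial on a slightly enlarged interval using \cref{lem:GSLW_taylor}, and then multiply by a bounded odd cutoff so that the result stays below $1/2$ on all of $[-1,1]$.

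\textbf{Target function.} For $|x|<1$ the inverse is
\[
g(x)=\frac{1-\sqrt{1-x^2}}{x}=\frac{x}{1+\sqrt{1-x^2}}.
\]
This function is odd and analytic on $|x|<1$, and $g(2b/(1+b^2))=b$ for $|b|\le 1$. The function we want to approximate is therefore $f(x)=g(x)/(16\rho)$ on the set $\{2b/(1+b^2):|b|\le\rho\}=[-x_\rho,x_\rho]$, where $x_\rho=2\rho/(1+\rho^2)\le 2\rho$. On a neighbourhood of size $O(\rho)$ we have $|g(x)|\lesssim |x|$, so $|f|\le 1/4+O(1)\cdot$small there. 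We need a polynomial that is accurate on $[-x_\rho,x_\rho]$, has degree $O(\rho^{-1}\log(1/\delta))$, and is bounded by $1/2$ on $[-1,1]$.

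\textbf{Step 1 (local approximation).} Apply \cref{lem:GSLW_taylor} with $x_0=0$, $r=2\rho$, $\eta=2\rho$. Since $4\rho\le 1$, the Taylor series of $g$ at $0$ converges on $[-4\rho,4\rho]$. All coefficients of $g$ are nonnegative (it is $x/2$ times a Catalan-type series), so
\[
\sum_\ell |a_\ell|(4\rho)^\ell=\frac{g(4\rho)}{16\rho}\le \frac{8\rho/(1+\sqrt{1-16\rho^2})}{16\rho}\cdot\frac{1}{1}.
\]
To keep this bound uniformly $O(1)$ even at $\rho=1/4$, where $g(1)=1$, I will instead take $r=x_\rho$ and $\eta=\min(x_\rho, \tfrac12(1-x_\rho))$, or simply shrink $\eta$ to $\rho$. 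Then $B=g(3\rho)/(16\rho)\le 3/16\cdot O(1)=O(1)$, because $g(y)\le y$ on $[0,1]$. The lemma then gives a polynomial $\widetilde P$ of degree $O(\rho^{-1}\log(1/\delta))$ that is $\delta/2$-close to $f$ on $[-x_\rho,x_\rho]$ and satisfies $\|\widetilde P\|_{[-1,1]}\le B+\delta=O(1)$. Replacing $\widetilde P$ by its real odd part preserves both properties, since $f$ is real and odd.

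\textbf{Step 2 (enforcing the global bound).} The bound $\|\widetilde P\|_{[-1,1]}\le B+\delta$ need not be below $1/2$. I will check directly that $B\le g(3\rho)/(16\rho)\le 3/16$, so $|\widetilde P|\le 3/16+\delta$ on $[-1,1]$. If $\delta$ is restricted to be small this gives $|\widetilde P|\le 1/2$ outright. For $\delta$ close to $1/2$ one can run the construction with $\min(\delta,1/8)$ at only constant-factor cost in degree. This is the step I expect to be the main obstacle, because everything hinges on the precise constant in $B$ from \cref{lem:GSLW_taylor}. If that constant turns out to be too large, the fallback is the standard approach: multiply by an odd polynomial approximation of a rectangle function that equals $1$ on $[-x_\rho,x_\rho]$ and is at most $1$ in magnitude on $[-1,1]$, with degree $O(\rho^{-1}\log(1/\delta))$ (GSLW Lemma 29), and then rescale constants. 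Since $|f|\le 1/4$ on the window and the tails are suppressed, the product stays below $1/2$.

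\textbf{Step 3 (accounting).} Substituting $x=2b/(1+b^2)$ for $|b|\le\rho$ puts $x$ in $[-x_\rho,x_\rho]$, where $P(x)$ is within $\delta$ of $g(x)/(16\rho)=b/(16\rho)$. The degrees add, giving $O(\rho^{-1}\log(1/\delta))$ overall.
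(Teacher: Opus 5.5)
Your proposal is correct and follows essentially the same route as the paper. Both arguments use the inverse $g(x)=x/(1+\sqrt{1-x^2})$, apply \cref{lem:GSLW_taylor} at $x_0=0$ with $B$ controlled by the nonnegative Taylor coefficients and $g(y)\le y$, cap the accuracy at $\min\{\delta,\cdot\}$ so that $\epsilon+B<1/2$, and then take the real odd part. Your detour to $\eta=\rho$ and the rectangle-function fallback are harmless but unnecessary: $g(4\rho)=4\rho/(1+\sqrt{1-16\rho^2})$ (not $8\rho/\cdots$), so $g(4\rho)/(16\rho)\le 1/4$ for every $\rho\le 1/4$, and the paper's choice $r=\eta=2\rho$, $B=1/4$, $\epsilon_0=\min\{\delta,1/16\}$ already gives $\|\widetilde P\|_{[-1,1]}\le 5/16$.
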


\begin{proof}
    Let $g(x) \coloneqq \frac{x}{1+\sqrt{1-x^2}}$. For $|b|\leq 1$, a direct computation gives $g\big(\frac{2b}{1+b^2}\big) = b$, and $\big|\frac{2b}{1+b^2}\big|\leq 2\rho$ whenever $|b|\leq\rho$. Hence it suffices to approximate $f(x)\coloneqq g(x)/(16\rho)$ on $[-2\rho,2\rho]$.
    Since $g(x) = (1-\sqrt{1-x^2})/x = \sum_{k\geq 1} c_k x^{2k-1}$ with $c_k\geq 0$ for $|x|\leq 1$, the coefficients $a_\ell$ of $f$ satisfy
    \be
        \sum_\ell (4\rho)^\ell |a_\ell| = \frac{g(4\rho)}{16\rho} \leq \frac14.
    \ee
    Let $\epsilon_0 \coloneqq \min\{\delta, 1/16\}$. Applying \cref{lem:GSLW_taylor} with $x_0=0$, $r=\eta=2\rho$, $B=1/4$ and $\epsilon=\epsilon_0$ yields a polynomial $\widetilde P$ of degree $\mathcal{O}(\rho^{-1}\log(1/\delta))$ such that
    \be
        \|f-\widetilde P\|_{[-2\rho,2\rho]} \leq \epsilon_0
        \quad\text{and}\quad
        \|\widetilde P\|_{[-1,1]} \leq \epsilon_0 + \tfrac14 \leq \tfrac{5}{16}.
    \ee
    Since $f$ is real and odd, the odd part of the real part of $\widetilde P$, i.e.\ $P(x)\coloneqq \frac12\big(\operatorname{Re}\widetilde P(x) - \operatorname{Re}\widetilde P(-x)\big)$, satisfies the same bounds. Specifically, we have
    \be
    \|P\|_{[-1,1]} < \tfrac{1}{2}\quad\text{and}\quad
    \|f-P\|_{[-2\rho,2\rho]}\leq \epsilon_0\leq \delta,
    \ee
    which completes the proof.
\end{proof}

\section{A simple explanation of the construction in Lemma \ref{lem:composition_transducer}}
\label{app:B}

The construction in Lemma \ref{lem:composition_transducer} admits a natural interpretation in terms of a random walk on a directed cycle. We illustrate this for $L=3$. Consider
\be
S=
\begin{pmatrix}
0&0&V_2\\
V_0&0&0\\
0&V_1&0
\end{pmatrix}.
\ee
This operator, inspired by \eqref{eqn:proof_unitary_composition_def_S}, can be viewed as a one-step walk operator on a directed cycle with vertices $0,1,2$, where the transition from vertex $j$ is accompanied by the unitary $V_j$. Indeed,
\be
\ket{0}\otimes\ket{x}
\xrightarrow{S}
\ket{1}\otimes V_0\ket{x}
\xrightarrow{S}
\ket{2}\otimes V_1V_0\ket{x}
\xrightarrow{S}
\ket{0}\otimes V_2V_1V_0\ket{x}.
\ee
Equivalently, if
\be
S
\begin{pmatrix}
x\\
y_1\\
y_2
\end{pmatrix}
=
\begin{pmatrix}
Vx\\
y_1\\
y_2
\end{pmatrix},
\qquad
V=V_2V_1V_0,
\ee
then  
$
y_1=V_0x,
y_2=V_1V_0x.
$
The same picture extends to the composition of transducers. Let
\be
S_i=
\begin{pmatrix}
D_i&E_i\\
B_i&T_i
\end{pmatrix},
\ee
and consider the direct sum $S_0\oplus S_1\oplus S_2$. Grouping the public variables first and the private variables second, we have
\be
\begin{pmatrix}
x_0\\
x_1\\
x_2\\
y_0\\
y_1\\
y_2
\end{pmatrix}
\xrightarrow{\,S_0\oplus S_1\oplus S_2\,}
\begin{pmatrix}
D_0x_0+E_0y_0\\
D_1x_1+E_1y_1\\
D_2x_2+E_2y_2\\
B_0x_0+T_0y_0\\
B_1x_1+T_1y_1\\
B_2x_2+T_2y_2
\end{pmatrix}.
\ee
To reproduce the directed cycle $0\to1\to2\to0$, we cyclically shift only the public outputs while leaving the private outputs at their original vertices. This gives
\be
\begin{pmatrix}
D_2x_2+E_2y_2\\
D_0x_0+E_0y_0\\
D_1x_1+E_1y_1\\
B_0x_0+T_0y_0\\
B_1x_1+T_1y_1\\
B_2x_2+T_2y_2
\end{pmatrix}
=
\begin{pmatrix}
0&0&D_2&0&0&E_2\\
D_0&0&0&E_0&0&0\\
0&D_1&0&0&E_1&0\\
B_0&0&0&T_0&0&0\\
0&B_1&0&0&T_1&0\\
0&0&B_2&0&0&T_2
\end{pmatrix}
\begin{pmatrix}
x_0\\
x_1\\
x_2\\
y_0\\
y_1\\
y_2
\end{pmatrix}.
\ee
This is exactly the construction in
\eqref{eqn:proof_transducer_composition_def_S} for $L=3$.

This walk interpretation also helps explain the bound on $K(S)$. Consider
\be
(I-T)
\begin{pmatrix}
h_1\\
h_2\\
v_0\\
v_1\\
v_2
\end{pmatrix}
=
\begin{pmatrix}
a_1\\
a_2\\
b_0\\
b_1\\
b_2
\end{pmatrix}.
\ee
Writing
\be
R_j=(I-T_j)^{-1},
\qquad
\Gamma_j=R_jB_j,
\ee
the local private equations give
\be
\begin{pmatrix}
v_0\\
v_1\\
v_2
\end{pmatrix}
=
\begin{pmatrix}
R_0&0&0\\
0&R_1&0\\
0&0&R_2
\end{pmatrix}
\begin{pmatrix}
b_0\\
b_1\\
b_2
\end{pmatrix}
+
\begin{pmatrix}
0\\
\Gamma_1h_1\\
\Gamma_2h_2
\end{pmatrix}.
\ee
which is exactly \eqref{eqn:proof_transducer_composition_rec_eq3}.
The first term is purely local and is therefore controlled by
\be
\max_j\|R_j\|=\max_jK(S_j).
\ee
The second term is generated by the intermediate clock variables $h_1,h_2$, which collect residuals propagated along the directed walk. 
To be a little more exact,
the second term is generated by the intermediate clock variables $h_1,h_2$. These variables themselves contain residuals propagated from earlier vertices. Indeed,
\be
h_1
=
a_1+V_0\Gamma_0^\dagger b_0,
\quad
h_2
=
V_1h_1+a_2+V_1\Gamma_1^\dagger b_1.
\ee
Thus a local residual $b_j$ first enters the clock propagation through the map
$
\Gamma_j^\dagger,
$
is transported along the directed walk by the unitaries $V_j$, and is finally lifted back into the local private variables through the maps $\Gamma_i$. Since the $V_j$'s are unitary, the propagation itself does not increase the norm, while the entrance and exit maps are controlled by
$
\|\Gamma_j\|=\sqrt{w(S_j)}.
$
More precisely, the map from a pair of residuals $(a_j,b_{j-1})$ to the corresponding clock source has norm at most
\be
\left\|
\begin{pmatrix}
I & V_{j-1}\Gamma_{j-1}^\dagger
\end{pmatrix}
\right\|
\leq
\sqrt{1+\max_j w(S_j)},
\ee
whereas reconstructing the public/private pair
$
h_j\longmapsto
\begin{pmatrix}
h_j\\
\Gamma_j h_j
\end{pmatrix}
$
costs another factor of at most
$
\sqrt{1+\max_j w(S_j)}.
$
The residuals can propagate through at most $2=L-1$ intermediate clock positions. Consequently, the propagation contribution is bounded by
\be
2\bigl(1+\max_j w(S_j)\bigr).
\ee

Thus the bound on $K(S)$ in Lemma \ref{lem:composition_transducer} naturally separates into a local elimination contribution, governed by $\max_j K(S_j)$, and a propagation contribution arising from the accumulation of residuals along the clock path. The former reflects the worst local sensitivity of the individual transducers. Regarding the latter, the factor $L-1$ comes from propagation through the intermediate clock variables, whereas $w(S_j)=\|\Gamma_j\|^2$ controls the coupling between the clock variables and the corresponding local private spaces. In this sense, the two terms represent two distinct mechanisms: local Schur elimination and global propagation along the clock. This decomposition is essentially tight at the level of scaling: in general, neither the dependence on $\max_j K(S_j)$ nor the linear accumulation in $L$ and $1+\max_j w(S_j)$ can be removed.

\paragraph{Connections to products of block-encodings.}

The clock construction in Lemma~\ref{lem:composition_transducer}
provides a query-efficient method for composing multiple transducers.
Here we describe an alternative sequential construction, which is more closely
related to the usual composition of block encodings \cite{GilyenSuLowEtAl2019}. Although this construction
does not necessarily improve the worst-case bound on \(K(S)\) compared with the
clock construction, its Schur-complement structure reveals a different
multiplicative behavior of the resolvent norm.

We start from the case \(L=2\).
Let
\be
S_j=
\begin{pmatrix}
D_j&E_j\\
B_j&T_j
\end{pmatrix},
\qquad j=0,1,
\ee
be transducers of \(V_j\) with catalyst maps
\(\Gamma_j\). We embed them into the common space
$\mathcal H_{ pub}\oplus
\mathcal H_0\oplus
\mathcal H_1$
and define
\be
\widehat S_0=
\begin{pmatrix}
D_0&E_0&0\\
B_0&T_0&0\\
0&0&I
\end{pmatrix},
\qquad
\widehat S_1=
\begin{pmatrix}
D_1&0&E_1\\
0&I&0\\
B_1&0&T_1
\end{pmatrix}.
\ee
The sequential composition 
\be
\label{eq1:caseL=2}
S_{\rm seq}=\widehat S_1 \widehat S_0
\ee
is a unitary operator. One may realize that $S_{\rm seq}$ is indeed the standard way of constructing block-encodings of the product $D_1D_0$.
Define the catalyst map
\be
\Gamma_{\rm seq}x
=
\begin{pmatrix}
\Gamma_0x\\
\Gamma_1V_0x
\end{pmatrix}.
\ee
Using $D_j+E_j\Gamma_j=V_j,
B_j+T_j\Gamma_j=\Gamma_j,$
one directly verifies that
\be
S_{\rm seq}
\begin{pmatrix}
x\\
\Gamma_{\rm seq}x
\end{pmatrix}
=
\begin{pmatrix}
V_1V_0x\\
\Gamma_{\rm seq}x
\end{pmatrix}.
\ee
Hence \(S_{\rm seq}\) is a transducer of the product
$V=V_1V_0 .$
The private block of \(S_{\rm seq}\) is
\be
T_{\rm seq}
=
\begin{pmatrix}
T_0&0\\
B_1E_0&T_1
\end{pmatrix}.
\ee
Therefore,
\be
(I-T_{\rm seq})^{-1}
=
\begin{pmatrix}
R_0&0\\
R_1B_1E_0R_0&R_1
\end{pmatrix},
\ee
where $R_j=(I-T_j)^{-1}.$
Using the transducer identities $R_jB_j=\Gamma_j$ and $E_jR_j=V_j\Gamma_j^\dagger,$
the off-diagonal block becomes $R_1B_1E_0R_0
=
\Gamma_1V_0\Gamma_0^\dagger$.
Hence $\|R_1B_1E_0R_0\|
\leq
\sqrt{w(S_1)w(S_0)} .$
Consequently,
\be
K(S_{\rm seq})
=
\|(I-T_{\rm seq})^{-1}\|
\leq
\max\{K(S_0),K(S_1)\}
+
\sqrt{w(S_0)w(S_1)} .
\ee
Compared with the clock construction, the propagation term depends on the
geometric mean of the catalyst complexities rather than the worst-case
catalyst weight.

The above phenomenon also appears for 
\(L\geq 3\). We briefly discuss the case $L=3$. Now we introduce three private spaces and define
\be
\widehat S_0=
\begin{pmatrix}
D_0&E_0&0&0\\
B_0&T_0&0&0\\
0&0&I&0\\
0&0&0&I
\end{pmatrix},
\quad 
\widehat S_1=
\begin{pmatrix}
D_1&0&E_1&0\\
0&I&0&0\\
B_1&0&T_1&0\\
0&0&0&I
\end{pmatrix},
\quad
\widehat S_2=
\begin{pmatrix}
D_2&0&0&E_2\\
0&I&0&0\\
0&0&I&0\\
B_2&0&0&T_2
\end{pmatrix}.
\ee
Set
\be
\label{eq1:caseL=3}
S_{\rm seq}
=
\widehat S_2\widehat S_1\widehat S_0 ,
\ee
which corresponds to the block-encoding of $D_2D_1D_0$.
The corresponding catalyst map is
\be
\Gamma_{\rm seq}x
=
\begin{pmatrix}
\Gamma_0x\\
\Gamma_1V_0x\\
\Gamma_2V_1V_0x
\end{pmatrix}.
\ee
Again, applying the defining relations of each transducer gives
\be
S_{\rm seq}
\begin{pmatrix}
x\\
\Gamma_{\rm seq}x
\end{pmatrix}
=
\begin{pmatrix}
V_2V_1V_0x\\
\Gamma_{\rm seq}x
\end{pmatrix}.
\ee
Thus the sequential construction realizes the desired product action. A similar calculation shows that the following bound
\be
K(S_{\mathrm{seq}})
\lesssim
\max K(S_j)
+
\sqrt{w(S_0) w(S_1)}
+
\sqrt{w(S_1)w(S_2)}
+
\sqrt{w(S_0)w(S_2)} w(S_1) .
\ee

In summary, the sequential construction has a close connection with the standard product
of block-encodings. This
connection provides a natural Schur-complement interpretation of products of
block encodings and reveals a multiplicative coupling structure in the
resolvent norm \(K(S)\). However, the resulting \(K(S)\) bound can be worse due
to the appearance of higher-order products of catalyst weights.

In contrast, the clock construction in Lemma~\ref{lem:composition_transducer}
is motivated by the core idea of LCU: all components are coherently selected
within an enlarged Hilbert space, while the intermediate computation is stored
in the private space. Although this construction may lose some of the finer
multiplicative structure of the sequential composition, it has important
algorithmic advantages. Under the shared primitive-access assumption \ref{assump:U}, the
select oracle of the individual transducers can be implemented efficiently,
so the whole composed transducer can be realized with a single coherent query
to the underlying primitive oracles. Consequently, the action can be
implemented with cost $\mathcal{O}(K(S)\log(1/\epsilon))$ described in Theorem \ref{thm:finite-reuse-high-order}.
For the sequential construction, implementing the product requires applying
each \(S_j\) individually, resulting in an additional factor depending on the
composition length \(L\).

\end{document}